\newtheorem{theorem}{Theorem}
\newtheorem{lemma}{Lemma}
 \newtheorem{corollary}{Corollary}
\newcommand{\lb}{\left(}
\newcommand{\rb}{\right)}
\newcommand{\lsqb}{\left[}
\newcommand{\rsqb}{\right]}
\newcommand{\lcb}{\left\{}
\newcommand{\rcb}{\right\}}
\newcommand{\xbold}{\mathbf{x}}
\newcommand{\ybold}{\mathbf{y}}
\newcommand{\vonetwob}{\mathbf{v}_{12}}
\newcommand{\vtwooneb}{\mathbf{v}_{21}}
\newcommand{\coneuu}{C_1^{uu}}
\newcommand{\coneul}{C_1^{ul}}
\newcommand{\conelu}{C_1^{lu}}
\newcommand{\conell}{C_1^{ll}}
\newcommand{\ctwouu}{C_2^{uu}}
\newcommand{\ctwoul}{C_2^{ul}}
\newcommand{\ctwolu}{C_2^{lu}}
\newcommand{\ctwoll}{C_2^{ll}}
\newcommand{\mylceil}{\left\lceil}
\newcommand{\myrceil}{\right\rceil}
\newcommand{\pullUp}{\vskip -0.2cm}
\newcommand{\wprvone}{w_{p1}}
\newcommand{\wprvtwo}{w_{p2}}
\newcommand{\wcpone}{w_{cp1}}
\newcommand{\wcptwo}{w_{cp2}}
\newcommand{\tildewcpone}{\widetilde{w}_{cp1}}
\newcommand{\tildewcptwo}{\widetilde{w}_{cp2}}
\newcommand{\dumytwo}{w_{d2}}
\newcommand{\wprvonehat}{\hat{w}_{p1}}
\newcommand{\wcptwohat}{\hat{w}_{cp2}}
\newcommand{\tildewcponehat}{\hat{\widetilde{w}}_{cp1}}
\newcommand{\tildewcptwohat}{\hat{\widetilde{w}}_{cp2}}
\newcommand{\tildercpone}{\widetilde{R}_{cp1}}
\newcommand{\tildercptwo}{\widetilde{R}_{cp2}}
\newcommand{\dumytwohat}{\hat{w}_{d2}}
\newcommand{\mysum}{\displaystyle\sum}
\newcommand{\ubold}{\mathbf{u}}
\newcommand{\sumAEPweak}{\displaystyle\sum_{(\ybold_1^N,\xbold_{p1}^N,\ubold_1^N)\in T_{\epsilon}^{(N)}}}
\newcommand{\sumAEP}{\displaystyle\sum_{(\ybold_1^N,\xbold_{p1}^N,\ubold_1^N,\xbold_{d2}^N)\in T_{\epsilon}^{(N)}}}
\newcommand{\probdislmone}{P_{X_{p1}, X_{p2}, X_{d2}, U_2, Y_2}}
\newcommand{\typseqlmone}{\xbold_{p1}^N, \xbold_{p2}^N, \xbold_{d2}^N, \ubold_2^N, \ybold_2^N}
\newcommand{\sbold}{\mathbf{s}}
\newcommand{\zbold}{\mathbf{z}}
\newcommand{\SNRsum}{\text{SNR} + \text{INR}}
\newcommand{\SNRprod}{\sqrt{\text{SNR}\:\text{INR}}}
\newcommand{\sbar}{\bar{\mathbf{s}}}
\newcommand{\ybar}{\bar{\mathbf{y}}}
\newcommand{\sboldtilde}{\widetilde{\mathbf{s}}}
\newcommand{\coopsignal}{\mathbf{v}_{12}^N,\mathbf{v}_{21}^N}
\newcommand{\coopsignalone}{\mathbf{v}_{12}}
\newcommand{\coopsignaltwo}{\mathbf{v}_{21}}
\newcommand{\sbolddash}{\mathbf{s'}}
\newcommand{\SNRt}{\text{SNR}}
\newcommand{\INRt}{\text{INR}}
\newcommand{\wbold}{\mathbf{w}}
\newcommand{\setxysizeo}{\epsfxsize=3.4in}
\begin{document}
%
% paper title
% can use linebreaks \\ within to get better formatting as desired
\title{On the Capacity of the $2$-User Symmetric Interference Channel
with Transmitter Cooperation and Secrecy Constraints}
\date{}
\author{\authorblockN{Parthajit~Mohapatra$^{*}$ and Chandra R. Murthy$^{\ddag}$}\\
\authorblockA{$^*$Singapore University of Technology and Design, Singapore 487372\\
$^\ddag$Indian Institute of Science, Bangalore, India, 560012\\
Email: parthajit@sutd.edu.sg,  cmurthy@ece.iisc.ernet.in}
\thanks{Major portion of this work was carried out,  when the first author was at the department of ECE, Indian Institute of Science, Bangalore.}
}
%

% The paper headers
%\markboth{IEEE Transactions on Communications}%
%{Submitted paper}

\maketitle
%\setpagewiselinenumbers
%\modulolinenumbers[1]
%\linenumbers

%\begin{IEEEkeywords}
%Interference channel, sum rate, outer bound, generalized degrees of freedom.
%\end{IEEEkeywords}

% For peer review papers, you can put extra information on the cover
% page as needed:
% \ifCLASSOPTIONpeerreview
% \begin{center} \bfseries EDICS Category: 3-BBND \end{center}
% \fi
%
% For peerreview papers, this IEEEtran command inserts a page break and
% creates the second title. It will be ignored for other modes.
\IEEEpeerreviewmaketitle
%\newpage
%\vspace{-2.3cm}
% For example, while obtaining outer bounds, for the Gaussian case, it is not possible to partition the encoded message or output as performed in the deterministic case, and the novelty lies in finding the analogous quantities for the Gaussian case.
\begin{abstract}
%In multiuser wireless communications, interference not only limits the performance of the system, but also allows users to eavesdrop on other users' messages.
This paper studies the value of limited rate cooperation between the transmitters for managing interference
and simultaneously ensuring secrecy, in the $2$-user Gaussian symmetric interference channel (GSIC). First, the
problem is studied in the symmetric linear deterministic IC (SLDIC) setting, and achievable schemes are
proposed, based on interference cancelation, relaying of the other user's data bits, and transmission of random
bits. In the proposed achievable scheme, the limited rate cooperative link is used to share a combination
of data bits and random bits depending on the model parameters.
Outer bounds on the secrecy rate are also derived, using a novel partitioning of the encoded messages
and outputs depending on the relative strength of the signal and the interference. The inner and outer bounds are derived
under all possible parameter settings. It is
found that, for some parameter settings, the inner and outer bounds match, yielding the capacity of the SLDIC
under transmitter cooperation and secrecy constraints. In some other scenarios, the achievable rate matches
with the capacity region of the $2$-user SLDIC without secrecy constraints derived by Wang and
Tse~\cite{wang-TIT-2011}; thus, the proposed scheme offers secrecy for free, in these cases. Inspired by the
achievable schemes and outer bounds in the deterministic case, achievable schemes and outer bounds are
derived in the Gaussian case. The proposed achievable scheme for the Gaussian case
is based on
Marton's coding scheme and stochastic encoding along with dummy message transmission. One of the key
techniques used in the achievable scheme for both the models is interference cancelation, which simultaneously offers two seemingly
 conflicting benefits: it cancels interference and ensures secrecy. Many of the
results derived in this paper extend to the asymmetric case also. The results show that limited transmitter
 cooperation can greatly facilitate secure communications over $2$-user ICs.
\end{abstract}
\begin{keywords}
Interference channel, information theoretic secrecy, deterministic approximation, cooperation.
\end{keywords}
\section{Introduction}
Interference management and ensuring security of the messages are two important aspects in the design of
multiuser wireless communication systems, owing to the broadcast nature of the physical medium. The interference
channel (IC) is one of the simplest information theoretic models for analyzing the effect of interference on the throughput
and secrecy of a multiuser communication system. One way to
enhance the achievable rate with secrecy constraints at the receivers is through cooperation between the transmitters. In this work,
the role of transmitter cooperation in managing interference and ensuring secrecy is explored by studying the $2$-user IC with
limited-rate cooperation between the transmitters and secrecy constraints at the receivers. In practice, such
scenarios can arise in a cellular network, where different users have subscribed to different data contents, and are
served by different base stations belonging to the same service provider. In
this case, it is important for the service provider to support high throughput, as well
as secure its transmissions, to maximize its own revenue. In these scenarios, the transmitters (e.g., base stations)
are not completely isolated from each other, and cooperation among them is possible. As the base stations
can trust each other, there is no need for secrecy constraints at the transmitters. Such cooperation can
potentially provide significant gains in the achievable throughput in the presence of interference, while simultaneously
guaranteeing security.
%In many practical situations, the users are not completely isolated from each other, and, hence, cooperation among them is feasible. Such cooperation can potentially provide significant gains in the achievable throughput in the presence of interference, while simultaneously guaranteeing security.

To illustrate the value of transmitter cooperation in simultaneously managing interference and ensuring secrecy,
a snapshot of some of the results to come in the sequel is presented in Fig.~\ref{fig:prob-mot}. Here, the
capacity of the symmetric linear deterministic IC (SLDIC) with and without cooperation is plotted against
$\alpha \triangleq \frac{n}{m}$, where $m=(\lfloor 0.5 \log \text{SNR}\rfloor)^{+},\: n=(\lfloor 0.5 \log \text{INR}\rfloor)^{+},$
without any secrecy constraints at the receivers~\cite{wang-TIT-2011}. Also plotted are the outer bound and
the achievable rate for the $2$-user SLDIC \emph{with} secrecy constraints at the receivers, developed in
Secs.~\ref{sec:LDIC-outer} and \ref{sec:sldic-ach}, respectively. Two cases are considered: no transmitter
cooperation ($C=0$) and with cooperation between the transmitters ($C=\frac{m}{4}$ bits per channel use). The
outer bounds plotted for $C=0$ with secrecy constraints at receivers show that, as the value of $\alpha$ increases,
there is a dramatic
loss in the achievable rate compared to the case without the secrecy constraint. The performance significantly
improves with cooperation, and it can be seen that it is possible to achieve a nonzero secrecy rate for all values of
$\alpha$ except $\alpha=1$. This paper presents an in depth study of the interplay between interference, security,
and transmitter cooperation in the $2$-user IC setting. It demonstrates that having a secure cooperative link
in a network can significantly improve the achievable secrecy rate.
%In this work, the $2$-user IC with rate-limited transmitter cooperation is studied, to investigate the role of cooperation in managing interference and ensuring secrecy.2
%New figure
\begin{figure*}
\begin{center}
\hspace{4cm} \includegraphics[width=6in, height=3in]{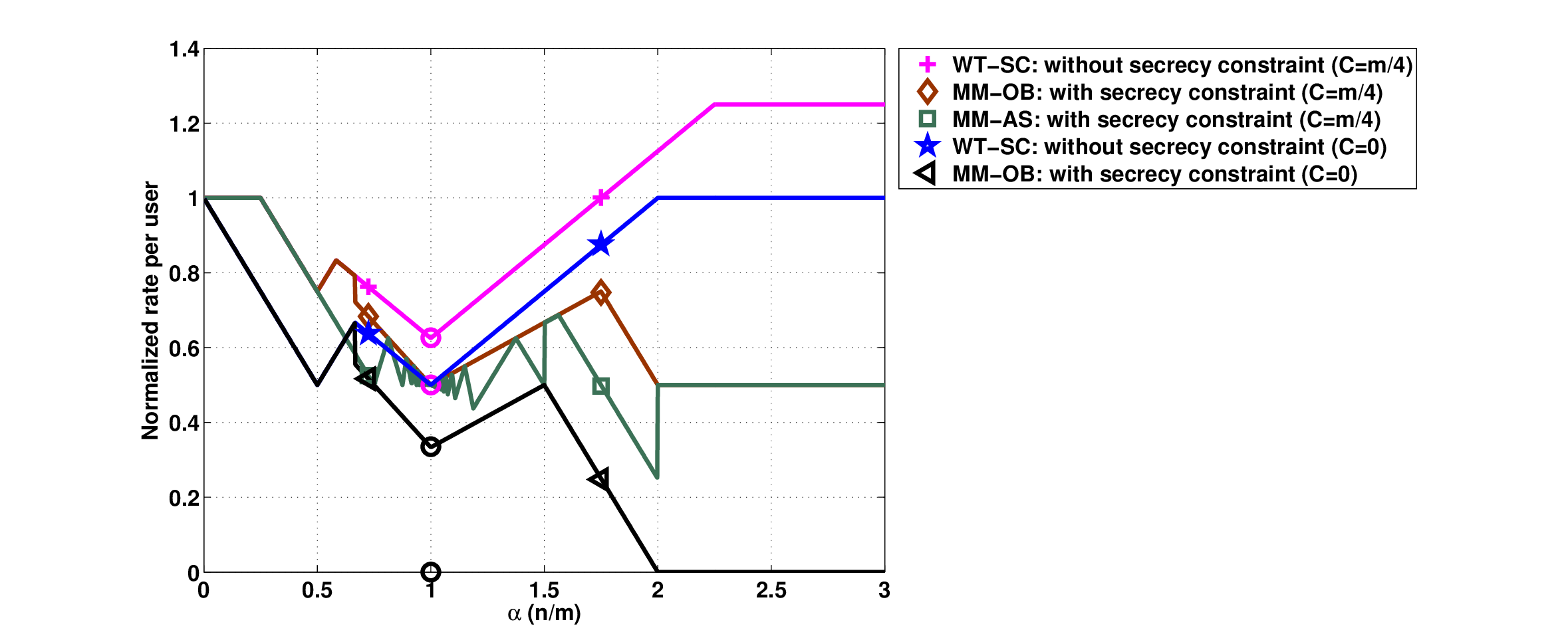}
%\vspace{-1cm}
 % 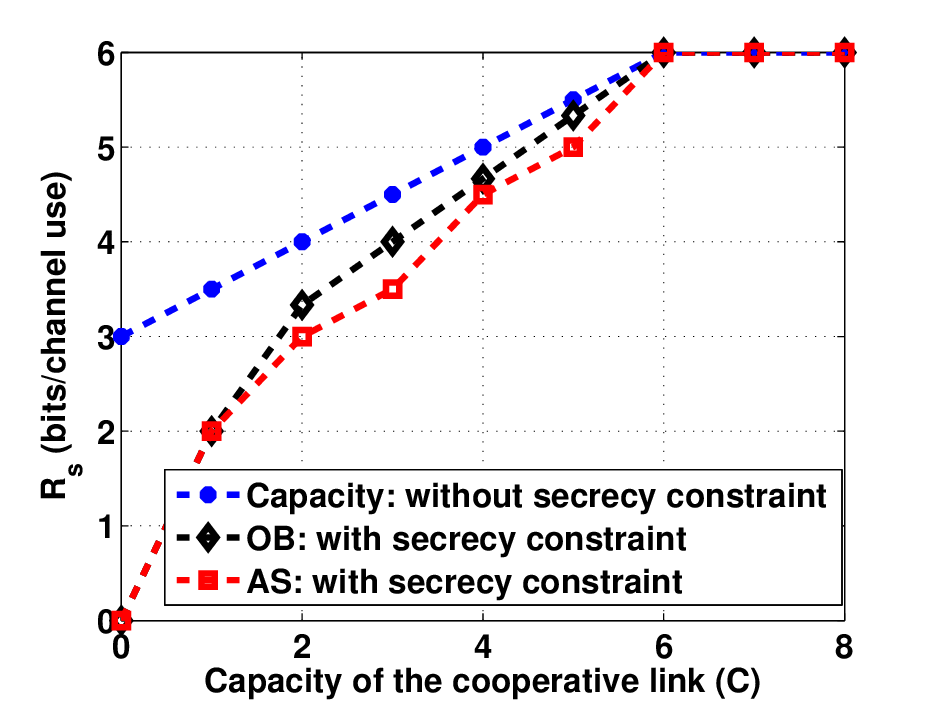: 0x0 pixel, 300dpi, 0.00x0.00 cm, bb=   73   252   521   589
 \caption{Data rate normalized by $m$ for the $2$-user SLDIC. Here, $C$ is the capacity of the cooperative link between the transmitters, $m=400$~bits, and $n$ is set based on the value of $\alpha = n/m$. }\label{fig:prob-mot}
%\vspace{-1cm}
\end{center}
\end{figure*}

\textit{Past work}: The interference channel has been extensively studied over the past few decades, to
understand the effects of interference on the performance limits of multi-user communication systems. The
capacity region of the Gaussian IC (GIC) without  secrecy constraints at receiver remains an open problem, even
in the $K=2$ user case, except for some special cases such as the strong/very strong interference regimes \cite{carleial-TIT-1975}, \cite{sato-TIT-1981}.
In \cite{liu-TIT-2008}, the broadcast and IC with independent confidential messages are considered and the
achievable scheme is based on random binning techniques. The work in \cite{tang-TIT-2011} demonstrates
that with the help of an independent interferer, the secrecy capacity region of the wiretap channel can be
enhanced. Intuitively, although the use of an independent interferer increases the interference at both the
legitimate receiver and the eavesdropper, the benefit from the latter outweighs the rate loss due to the former.
Some more
results on the IC under different eavesdropper settings can be found in \cite{koyluoglu-TIT-2011, koyluoglu2-TIT-2011, he-TIT-2011}.
%Also, outer bounds on the secrecy capacity of the wiretap channel with a helping interferer are given for both discrete memoryless and Gaussian channels.

%The IC with $K\geq 2$ users and different eavesdropper settings have been analyzed in \cite{liu-TIT-2008,tang-TIT-2011,koyluoglu-TIT-2011,koyluoglu2-TIT-2011,he-TIT-2011}.
%In \cite{koyluoglu-TIT-2011}, the frequency/time selective $K$-user GIC with confidential message and the GIC with an external eavesdropper are analyzed. It is shown that using interference alignment and secrecy precoding, it is possible to achieve nonzero secure degrees of freedom (DOF) for both the cases.

The effect of cooperation on secrecy has been explored in \cite{ekrem-ciss-2008, ekrem-TIT-2011, awan-arxiv-2012}.
In \cite{ekrem-ciss-2008}, the effect of user cooperation on the secrecy capacity of the multiple access channel with
generalized feedback is analyzed, where the messages of the senders need to be kept secret from each other.
In \cite{ekrem-TIT-2011}, the role of user cooperation on the secrecy of broadcast channel (BC) with relaying, where the receivers
can cooperate with each other, is considered. The achievable scheme uses a combination of Marton's
coding scheme for the BC and a compress and forward scheme for the relay channel. The role of a relay
in ensuring secrecy under different wireless network settings has been studied in \cite{lai-TIT-2008,vaneet-eurasip-2009,perron-isit-2010}.
%Also, outer bounds on the rate-equivocation region are presented.

A linear deterministic model for relay network was introduced in \cite{avesti-TIT-2011}, which led to insights
on the achievable schemes in Gaussian relay networks. The deterministic model has subsequently been
used for studying the achievable rates with the secrecy constraints in \cite{yates-isit-2008, perron-infocom-2009, shamai-TIT-2012}.
In \cite{yates-isit-2008}, secret communication over the IC is analyzed with two types of secrecy constraints:
in the first case, the secrecy constraint is specific to the agreed-upon signaling strategy, and in the second
case, the secrecy constraint takes into account the fact that the other users may deviate from the agreed-upon
strategy. The deterministic model has also been studied under different eavesdropper settings in
\cite{li-isit-2008,perron-infocom-2009,shamai-TIT-2012}.

It is known that limited-rate cooperation between the transmitters or receivers can significantly increase the
rate achievable in the 2-user IC without secrecy constraints \cite{wang-TIT-2011,wang-TIT-2011-two}. In general, 
the Gaussian IC with transmitter cooperation is more
difficult to analyze than Gaussian IC with receiver cooperation, even when there is no secrecy
constraints at the receivers. For example, when the receivers can cooperate through a link of infinite capacity,
the model reduces to a Gaussian MIMO multiple access channel (MAC). When the transmitters cooperate through a
link of infinite capacity,
the model reduces to a MIMO BC. The capacity region of the general MAC was characterized in
1970s \cite{ahlswede-isit-1971, liao-1972-thesis}. In the MAC, the boundary of the rate region can be achieved if the receiver performs MMSE decoding and successive
interference cancelation of the input data streams. However, it took a long time for researchers to find a
precoding strategy which achieves the boundary of the BC rate region \cite{caire-TIT-2003, weingarten-TIT-2006}.  
Similarly, the IC with cooperative receivers 
is easier to analyze than the IC with cooperative transmitters \cite{wang-TIT-2011-two, wang-TIT-2011}. Further, when there are secrecy constraints
at the receivers, the following difficulties arise
in analyzing the system with rate-limited transmitter cooperation.
\begin{enumerate}
\item There are a number of ways in which the transmitters can use the cooperative link
for encoding their transmission. The cooperation can involve
  the exchange of data bits, random bits or any combination of the two.
\item It is difficult to obtain tractable outer bounds, since the encoded messages are no longer independent
  due to the cooperation between the transmitters. In addition to providing
  carefully selected side-information to receivers, the secrecy constraints at
  the receivers need to be exploited in a judicious manner to obtain tighter outer bounds as compared to the
  outer bounds that do not use
  the secrecy constraints at the receivers.
  \end{enumerate}
To the best of the authors' knowledge, the role of limited transmitter cooperation in a $2$-user IC on interference
management and secrecy has not been explored and is therefore focus of this work.

\textit{Contributions}: In order to make headway into this problem, first,
the problem is addressed in the linear deterministic setting. For the SLDIC with cooperating transmitters and
secrecy constraints at the receivers, achievable schemes and outer bounds on the secrecy rate are derived for
all possible parameter settings. This gives useful insights for the achievable schemes and outer bounds in the
Gaussian setting. Next, the schemes are adapted to the Gaussian case. The proposed transmission/coding
strategy in the Gaussian setting uses a superposition of a non-cooperative private codeword and a cooperative
private codeword. For the non-cooperative private part, stochastic encoding is used \cite{wyner-bell-1975}, and
for the cooperative private part, Marton's coding scheme is used \cite{marton-TIT-1979,wang-TIT-2011}. The
auxiliary codewords corresponding to the cooperative private part are chosen such that the interference caused by
the cooperative private auxiliary codeword of the other user is completely canceled out. This approach is
different from the one used in \cite{wang-TIT-2011}, where the interference caused by the unwanted auxiliary
codeword is approximately canceled. Further, one of the users transmits dummy information to enhance the
achievable secrecy rate. The major contributions of this work can be summarized as follows:
\begin{enumerate}
\item One of the key techniques used in the derivation of the outer bounds for the SLDIC is the  proposed partitioning
of the encoded messages and outputs depending
 on the value of $\alpha$. This partitioning of the encoded
 messages/outputs reveals  what side-information needs to be provided to the
 receivers for canceling negative entropy terms. In addition, partitioning helps to  bound or
 simplify entropy terms which are not easy to evaluate due to the dependence between the encoded messages
 at the transmitters. Also, the
 partitioning of the encoded messages/outputs provides a convenient handle for using the secrecy
 constraints at the receivers efficiently in deriving the outer bounds.  The outer bounds are stated as
 Theorems~\ref{th:theoremSLDIC-outer1}-\ref{th:theoremSLDIC-outer4}
 in Sec.~\ref{sec:LDIC-outer}.
\item For the SLDIC, the achievable scheme is based on interference cancelation, transmission
 of jamming signal (random bits) and relaying of the other user's data bits. The novelty in the proposed scheme
 lies in determining how to combine these techniques to achieve rates that are
 far superior to that achievable individually by these methods.  To the best
of authors' knowledge, exchanging a combination of data bits and random bits between the transmitters
for the purpose of precoding has not been used in the literature. The details of the achievable scheme can be
found in Sec.~\ref{sec:sldic-ach}.

\item Outer bounds on the secrecy rate in the Gaussian setting are derived and stated as
Theorems~\ref{th:theorem_GSIC_outer1}-\ref{th:theorem_GSIC_outer3} in Sec.~\ref{sec:outerGaussian}. As the
partitioning used in deriving the outer bounds for the deterministic case cannot be directly used in the Gaussian
case,  either analogous quantities as side-information need to be found to mimic the
partitioning of the encoded messages/outputs or the bounding steps need to be modified
taking cue from the deterministic model. This is one of the key steps in
deriving the outer bounds on the secrecy rate.

\item Using the intuition gained from the SLDIC, achievable schemes for the Gaussian case are proposed,
which use a combination of stochastic encoding and Marton's coding scheme along with dummy message
transmission by one of the users.  However, in the high
interference regime, stochastic encoding alone cannot ensure secrecy of the
non-cooperative private message, as cross links are stronger than the direct links. Hence, in addition to
stochastic encoding, dummy message transmission
is used by one of the users to ensure secrecy of the
non-cooperative private message at the unintended receiver. In the Marton's coding scheme, the codeword
carrying the cooperative private message is precoded such that it is completely canceled at the unintended
receiver. The details of the achievable
scheme can be found in Sec.~\ref{sec:achGaussian}.

\item Many of the results derived in this paper extend to the asymmetric case
also, and these are mentioned as remarks after corresponding theorems, where applicable.
\end{enumerate}

It is shown that with limited-rate transmitter cooperation, it is possible to achieve a nonzero secrecy rate under
all parameter settings except for the $\alpha=1$ case. In particular, for the very high interference regime $(\alpha \geq 2)$,
 it is possible to achieve non-zero secrecy rate for both the model as compared to the non-cooperating case. In
case of SLDIC, it is found, surprisingly, that in some
nontrivial cases, the achievable secrecy rate equals the capacity of the same system without the secrecy
constraints. Thus, the proposed schemes allow one to get secure communications for free, in these cases.
It is also observed that the proposed outer bounds for the SLDIC with cooperation are strictly tighter than the
 best existing outer bound without the secrecy constraint \cite{wang-TIT-2011} in all interference regimes,
 except for the weak interference regime, where the bounds match. The idea of using a common randomness 
 to improve the achievable rates is an important upcoming theme in multiuser information theory, and 
 the proposed schemes based on sharing random bits between the transmitters is in the same flavor. Thus, the results in this paper provide a
 deep and comprehensive understanding of the benefit of transmitter cooperation in achieving high data rate
 in the IC, while also ensuring secrecy. Parts of this work have appeared in~\cite{partha-spawc-2013} and
 \cite{partha-ncc-2014}.

\textit{Notation}: Lower case or upper case letters represent scalars, lower case boldface letters represent
vectors, and upper case boldface letters represent matrices.

\textit{Organization}: Section \ref{sec:sysmod} presents the system model. In Secs. \ref{sec:LDIC-outer} and
\ref{sec:sldic-ach}, the outer bounds and the achievable schemes for the SLDIC are presented, respectively.
The outer bounds  and  achievable results for the GSIC can be found in  Secs.~\ref{sec:outerGaussian} and
 ~\ref{sec:achGaussian}, respectively. In Sec. \ref{sec:results-discussion}, some numerical examples are
 presented to offer a deeper insight into the bounds, to contrast the performance of the various schemes, and
 to benchmark against known results. Concluding remarks are offered in Sec. \ref{sec:conc}. The proofs of the theorems and lemmas are presented in the Appendices.

%Introduction ends here.
\section{System Model}\label{sec:sysmod}
\begin{figure*}[t]
\centering
\mbox{\subfigure[]{\includegraphics[width=2.7in, height = 2.3in]{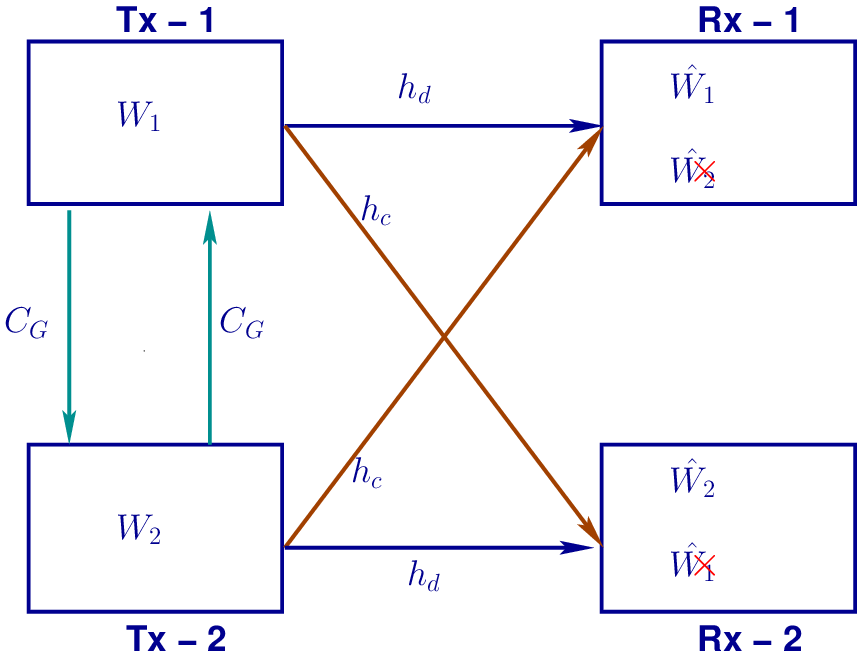}\label{fig:sysmodel1}}\qquad \quad
\subfigure[]{\includegraphics[width=2.7in, height = 2.2in]{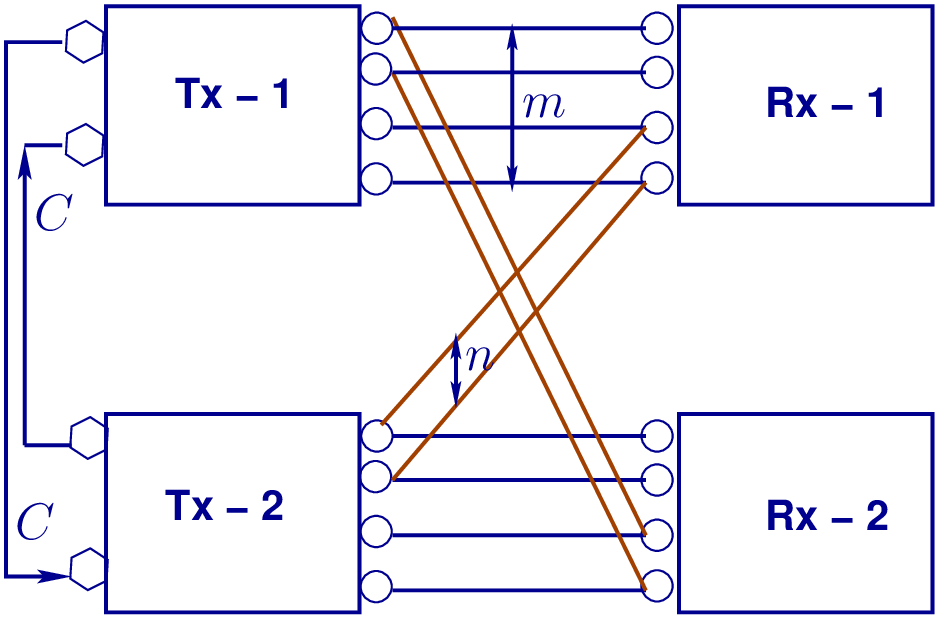}\label{fig:sysmodel2}}}
\caption[]{\subref{fig:sysmodel1} GSIC and \subref{fig:sysmodel2} SLDIC with transmitter cooperation.}\label{fig:sysmodel}
\vspace{-0.2cm}
\end{figure*}
Consider a $2$-user Gaussian symmetric IC (GSIC) with cooperating transmitters. The  signals at the receivers
are modeled as \cite{wang-TIT-2011}:
\begin{align}
 y_1 = h_{d}x_{1} + h_{c}x_2 + z_{1}; \quad
 y_2 = h_{d}x_{2} + h_{c}x_1 + z_{2}, \label{sysmodel1}
\end{align}
where $z_{j}\:(j=1,2)$ is the Gaussian additive noise, distributed as $z_{j} \sim \mathcal{N}(0,1)$. The input
signals are required to satisfy the power constraint: $E[|x_{i}|^{2}] \leq P$. Here, $h_d$ and $h_c$ are
 the channel gains of the direct and cross links, respectively. The transmitters cooperate through a noiseless
 and secure link of finite rate denoted by $C_{G}$. The equivalent deterministic model of (\ref{sysmodel1}) at
 high SNR is as follows~\cite{wang-TIT-2011}:
\begin{align}
\mathbf{y}_{1} = \mathbf{D}^{q-m}\mathbf{x}_{1} \oplus \mathbf{D}^{q-n}\mathbf{x}_{2}; \  \mathbf{y}_{2} = \mathbf{D}^{q-m}\mathbf{x}_{2} \oplus \mathbf{D}^{q-n}\mathbf{x}_{1}, \label{sysmodel2}
\end{align}
where $\mathbf{x}_{i}$ and $\mathbf{y}_{i}$ are binary vectors of length $q \triangleq \max\{m,n\}$, 
$\mathbf{D}$ is a $q \times q$ downshift matrix with elements $d_{j',j''}=1$ if $2 \leq j'=j''+1\leq q$ and 
$d_{j',j''}=0$ otherwise, and $\oplus $ stands for  modulo-$2$ addition (\textsf{XOR} operation).

The parameters $m$ and $n$ are related to the GSIC as $ m = (\lfloor 0.5 \log Ph_d^2\rfloor)^{+},\: n =  (\lfloor 0.5 \log Ph_c^2\rfloor)^{+},$
while the capacity of the cooperative link is $C = \lfloor C_{G} \rfloor$.
%The values of $m$ and $n$ are assumed to be known at all nodes.
The quantity $\alpha \triangleq \frac{n}{m}$ captures the amount of coupling between the signal and the
interference, and is central to characterizing the achievable rates and outer bounds in case of the SLDIC and
GSIC. A  schematic representation of the GSIC and SLDIC with transmitter cooperation is shown in Fig.~\ref{fig:sysmodel}.
The figure also shows the convention followed in this paper for denoting the bits transmitted over the SLDIC,
which  is the same as that in \cite{wang-TIT-2011}. The bits $a_{i}, b_i \in \mathcal{F}_{2}$ denote the  information
bits of transmitters $1$ and $2$, respectively, sent on the $i^{\text{th}}$ level, with the levels numbered starting from
the bottom-most entry.
%  The bits transmitted on the different levels of LDIC are chosen to be equiprobable Bernoulli distributed, denoted $\operatorname{Bern} \left({\frac{1}{2}}\right)$.

The transmitter $i$ has a message $W_{i}$, which should be decodable at the intended receiver~$i$, but needs to be kept
secret from the other, unintended receiver $j$, $j \neq i$. In the case of the SLDIC, the encoded message
$(\mathbf{x}_i)$ is a function of its own data bits, the bits received through the cooperative link, and possibly
some random data bits. The encoding at the transmitter should satisfy the causality constraint, i.e., it cannot
depend on future cooperative bits. The decoding is based on solving the linear equations in \eqref{sysmodel2}
at each receiver. For secrecy, it is required to satisfy $I(W_{i}; \mathbf{y}_{j}) = 0, i,j \in \{1,2\} \text{ and } i \neq j$ in the case of the SLDIC~\cite{shannon-bell-1949}. The details of the encoding and decoding scheme for the Gaussian case can be found in Sections \ref{sec:weak-mod-achGaussian} and \ref{sec:high-veryhigh-achGaussian}. In contrast to the SLDIC, the notion of weak secrecy is considered for the Gaussian case \cite{wyner-bell-1975}.
Also, it is assumed that the transmitters trust each other completely and that they do not deviate from the
agreed scheme, for both the models.

The results derived in the paper for the deterministic and Gaussian models under the symmetric assumption can be extended to the asymmetric setting in many cases, and these are indicated as remarks in the following sections. There are two ways in which the model considered in the paper can be asymmetric: (a) when $C_{12} \neq C_{21}$, where $C_{ij}$ is the capacity of the cooperative link from transmitter~$i$ to transmitter~$j$ $(i, j \in \{1, 2\}, i \neq j)$. This is termed as \emph{cooperation asymmetry}. (b) The two direct channel gains and two cross channel gains need not be equal to each other; this is termed as \emph{channel asymmetry}. In this case, the channel
is parameterized by $(m_{1}, n_{1}, m_2, n_2 )$ in the deterministic case and $(h_{11}, h_{12}, h_{22}, h_{21})$ in the Gaussian case. In the sequel, the phrase \emph{asymmetry} is used to account for both channel and cooperation asymmetry.
%%%%%%%%%%%%%%%%%%%%%%%%%%%%%%%%%%%%%%%%%%%%55SYSTEM MODEL ENDS HERE
%%%%%%%%%%%%%%%%%%%%%%%%%%%%%%%%%%%%%%%%%%%%%%%%%%%%%%%%%%%%%%%%555%%%%%%%%%%%%%%%%%%%%%%%%%%%%%%%%%%%%%%%%%%%%%%%%%%%%%%%%%
%%%%%%%%%%%%%%%%%%%%%%%%%%%%%%%%%%%%%%%%%%%%%%%%%%%%%%%%%%%%%%%%%%%%%%%%%%%%%%%%%%%%%%%%%%%%%%%%%%%%%%%%%%%%%%%%%%%%%%5555555%%%55555555
%%%%%%%%%%%%%%%%%%%%%%%%%%%%%%%%%%%%%%%%%%%%%%%%%%%%%%%%%%%%%%%%%%%%%%%%%%%%%%%%%%%%%%%%%%%%%%%%%%%%%%%%%%%%%%%%%%%%%%%%%%%%%%%%%%%55555555555555555555
%%%%%%%%%%%%%%%%%%%%%%%%%%%%%%%%%%%%%%%%%%%%%%%%%%%%%%%%%%%%%%%%%%%%%%%%%%%%%%%%%%%%%%%%%%%%%%%%%%%%%%%%%%%%%%%%%%%%%%%%%%%%%%%%%%%%%%%%%55555555555555555555
%%%%%%%%%%%%%%%%%%%%%%%%%%%%%%%%%%%%%%%%%OUTER BOUNDS IN CASE OF SLDIC%%%%%%%%%%%%%%%%%%%%%%%%%%%%%%%%%%%%%%%%%%%%%%%%%%%%%%%%%%%%%%%%%%%%55
%%%%%%%%%%%%%%%%%%%%%%%%%%%%%%%%%%%%%%%%%%%%%%%%%%%%%%%%%%%%%%%%%%%%%%%%%%%%%%%%%%%%%%%%%%%%%%%%%%%%%%%%%%%%%%%%%%%%%%%%%%%%%%%%%%%%%%%%%%%%%%%%%%%%%%%%%%5
\section{SLDIC: Outer Bounds}\label{sec:LDIC-outer}
In this section, four outer bounds on the symmetric rate for the $2$-user SLDIC with
cooperation between transmitters and perfect secrecy constraints at the receivers are stated as
Theorems~\ref{th:theoremSLDIC-outer1}-\ref{th:theoremSLDIC-outer4}. Theorem~\ref{th:theoremSLDIC-outer1}
is valid for all $\alpha \geq 0$, while Theorems~\ref{th:theoremSLDIC-outer2}, \ref{th:theoremSLDIC-outer3}, and \ref{th:theoremSLDIC-outer4}
are valid for $\alpha\geq 2$, $1 < \alpha < 2$, and $\alpha=1$, respectively. 

In the derivation of the outer bounds, the following difficulties arise:
 \begin{enumerate}
 \item Due to cooperation between the transmitters, the encoded messages are no
 longer independent. Most existing outer bounding techniques (e.g.: \cite{tang-TIT-2011, liu-TIT-2008}) require the independence of
 the encoded messages to simplify the entropy terms, hence are not applicable in this case.
  \item Determining when and how to use the secrecy constraints at the receivers along with the reliability criteria is
  crucial in deriving a tractable outer bound.
 \end{enumerate}

To meet these challenges, a novel partitioning of the encoded messages and outputs depending on the value of $\alpha$ is proposed. This partitioning of the encoded messages/outputs reveals  what side-information needs to be provided to the receivers and helps to  bound or simplify entropy terms which are not easy to evaluate due to the dependence between the encoded messages at the transmitter. This partitioning also reveals how to judiciously exploit the secrecy constraints at the receivers in deriving the outer bounds.
 
The following relation is repeatedly used in the derivation of these outer bounds: conditioned on the cooperative signals,
denoted by $(\coopsignalone^N,\coopsignaltwo^N)$, the encoded signals and the messages at the two transmitters 
are independent \cite{willems-TIT-1983,wang-TIT-2011}. This is represented as the following Markov chain relationship:
\begin{align}
& (W_1, \xbold_1^N) - (\coopsignalone^N,\coopsignaltwo^N) - (W_2,\xbold_2^N). \label{eq:thouter0}
\end{align}
%In particular, in the derivation of Theorem \ref{theorem-outer3} and \ref{theorem-outer5}, the use of functional dependency graph (FDG) helps to establish independence between random variables and thus, helps to obtain tractable and tight outer bounds on the secrecy rate. Due to lack of space, the proofs for \ref{theorem-outer4} and \ref{theorem-outer5} are omitted from the paper.
Finally, the overall outer bound on the symmetric secrecy rate is obtained by taking the minimum of these
outer bounds. The best performing outer bound depends on the value of $\alpha$ and the maximum possible
rate, i.e., $\max(m,n) \mathbf{1}_{\{C> 0\}} + \min(m,n)\mathbf{1}_{\{C=0\}}$ per user, where $\mathbf{1}_{A}$
is the indicator function, equal to $1$ if $A$ is true, and equal to $0$ otherwise.

In the derivation of the first outer bound, the encoded message $\xbold_i\:(i=1,2)$ is partitioned into two parts:
one part ($\xbold_{ia}$) which causes interference to the unintended receiver, and another part ($\xbold_{ib}$)
which is not received at the unintended receiver. Partitioning the message in this way helps to obtain an outer
bound on $2R_1 + R_2$, which leads to an outer bound on the symmetric secrecy rate. The following theorem
gives the outer bound on the symmetric secrecy rate.
%First outer bound
\begin{theorem}\label{th:theoremSLDIC-outer1}
The symmetric rate of the $2$-user SLDIC with limited-rate transmitter cooperation and secrecy constraints at the receivers is upper bounded as:
\begin{align}
R_s \leq \left\{\begin{array}{l l}
    \frac{1}{3}\lsqb 2C + 3m-2n\rsqb &\mbox{for $\alpha \leq 1$ } \\ 	
    \frac{1}{3}\lsqb 2C + n\rsqb & \mbox{for $\alpha > 1$}.
 \end{array}\right. \label{eq:outerone_deter1}
\end{align}
\end{theorem}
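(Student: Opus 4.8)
The plan is to prove the stated single-user bound by bounding the weighted sum $2R_1+R_2$ and then invoking the symmetry $R_1=R_2=R_s$, which forces $3R_s\le 2R_1+R_2$; dividing the resulting bound by $3$ gives \eqref{eq:outerone_deter1}. First I would apply Fano's inequality, $NR_i\le I(W_i;\ybold_i^N)+N\epsilon_N$, and exploit the perfect-secrecy constraint $I(W_i;\ybold_j^N)=0$ ($i\ne j$) to hand the unintended receiver's output to the legitimate one for free: since $I(W_1;\ybold_2^N)=0$, one has $I(W_1;\ybold_1^N)\le I(W_1;\ybold_1^N\mid\ybold_2^N)$, and symmetrically for $W_2$. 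This pushes the output of the eavesdropping receiver into every term, which is exactly where the interference level $n$ will enter as a secrecy penalty. To handle the asymmetric weighting I would use the partition $\xbold_i=(\xbold_{ia},\xbold_{ib})$ introduced before the statement: for $\alpha\le 1$ the sub-vector $\xbold_{ia}$ occupies the top $n$ levels (the only part of $\xbold_i$ seen at receiver $j$) while $\xbold_{ib}$ occupies the bottom $m-n$ levels and is invisible at the other receiver, whereas for $\alpha>1$ (so $q=n$) essentially all of $\xbold_i$ reaches the other receiver and the partition degenerates.

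The central step is to combine the three resulting mutual-information / entropy terms, inserting genie side information so that the unwanted cross terms cancel. The main obstacle here is that transmitter cooperation makes $\xbold_1^N$ and $\xbold_2^N$ statistically dependent, so the independence on which ordinary interference-channel outer bounds rely is lost. I would overcome this using the Markov relation \eqref{eq:thouter0}: conditioned on the cooperative signals $(\coopsignal)$, the messages and encoded signals of the two transmitters are independent, which lets me split the joint entropies and control the cross terms. The price of introducing these signals is the entropy bound $H(\coopsignalone^N)\le NC$ and $H(\coopsignaltwo^N)\le NC$, since each cooperative link carries at most $C$ bits per channel use; because both directions are introduced in the course of bounding $2R_1+R_2$, this is precisely the origin of the $2C$ term.

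After the cancellations the bound reduces to a sum of entropies of receiver outputs and of the sub-signals $\xbold_{ia},\xbold_{ib}$, each of which I would bound by the number of levels it can occupy: a full output contributes at most $m$ (for $\alpha\le1$), the interference part $\xbold_{ia}$ at most $n$, and the private part $\xbold_{ib}$ at most $m-n$. Collecting the coefficients then produces $3m-2n$ for $\alpha\le 1$ (loosely, three output contributions of $m$ reduced by two secrecy penalties of $n$), and, repeating the argument with the degenerate partition and $q=n$, the constant $n$ for $\alpha>1$. Finally I would single-letterize in the standard way (introducing a time-sharing index, or directly using the memoryless structure of \eqref{sysmodel2}), normalize by $N$, and let $\epsilon_N\to0$.

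The step I expect to be hardest is choosing the genie side information so that, simultaneously, (i) the secrecy terms exactly absorb the interference entropies, (ii) the cooperation-induced dependence is neutralized through \eqref{eq:thouter0} at a cost of no more than $2C$, and (iii) what remains is a clean level count with the precise coefficients $3m-2n$ and $n$. Getting all three to line up — rather than landing on a looser bound with the wrong coefficient on $C$, $m$, or $n$ — is the delicate part, and it is where the careful, $\alpha$-dependent partitioning of the encoded message and output described before the theorem does the real work.
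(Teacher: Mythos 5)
Your proposal follows essentially the same route as the paper's proof: bounding the weighted sum $2R_1+R_2$ via Fano's inequality, a genie giving $\ybold_2^N$ to receiver~$1$ combined with the secrecy constraint $I(W_1;\ybold_2^N)=0$, the partition $\xbold_i=(\xbold_{ia},\xbold_{ib})$, the Markov relation (\ref{eq:thouter0}) to decouple the cooperation-induced dependence at a cost of $H(\coopsignalone^N,\coopsignaltwo^N)\le 2NC$, and level counting with the case split at $\alpha=1$. The one mechanism you flag as delicate and leave open --- how the asymmetric weighting $2R_1+R_2$ actually emerges --- is resolved in the paper by first deriving two secrecy-free Fano bounds, $H(\xbold_{2a}^N|\coopsignalone^N,\coopsignaltwo^N)\le H(\ybold_1^N)-NR_1$ and $H(\xbold_{1a}^N|\coopsignalone^N,\coopsignaltwo^N)\le H(\ybold_2^N)-NR_2$, and substituting both into the single secrecy-based bound on $R_1$, whereupon $H(\ybold_2^N)$ cancels and only $H(\ybold_1^N)$ plus the two private-part entropies $H(\mathbf{D}^{q-m}\xbold_i^N|\xbold_{ia}^N)$ survive.
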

\begin{proof}
The proof is provided in the Appendix~\ref{sec:theoremSLDIC-outer1}.
\end{proof}
\textit{Remarks:}
\begin{itemize}
\item Note that when $\alpha > 1$, the outer bound increases
with increasing $n$ for a given value of $C$. However, it is intuitive to think that the
achievable secrecy rate should decrease with increase in the value of $\alpha$, i.e., the outer bound is loose
in the high interference regime. Interestingly, it is found that the achievable secrecy rate also improves with
increase in the value $\alpha$ in the initial part of the high interference
regime, i.e., for $1 < \alpha < 2$, even when $C=0$. This will be discussed in Sec.~\ref{sec:numerical-SLDIC}.
\item  The outer bound stated above can be extended to obtain an outer bound on $2R_1 + R_2$
for the asymmetric setting. Using a similar approach as used in the proof of this theorem,  one can also obtain
 an outer bound on $R_1 + 2R_2$. Note that, these outer bounds are applicable
 over all the interference regimes. The outer bounds are as follows:
 \begin{align}
  2R_1 + R_2 & \leq C_{12} + C_{21} + \max\lcb m_1,n_1\rcb   + \max \lcb m_1, n_2\rcb
  - n_2 + \max \lcb m_2, n_1\rcb - n_1, \nonumber \\
  R_1 + 2R_2 & \leq C_{12} + C_{21} + \max\lcb m_2,n_2\rcb   + \max \lcb m_1, n_2\rcb
  - n_2 + \max \lcb m_2, n_1\rcb - n_1. \label{eq:extension1}
 \end{align}
\end{itemize}

%Second outer bound
The next outer bound, stated as Theorem \ref{th:theoremSLDIC-outer2}, focuses on the very high interference regime, i.e., for $\alpha \geq 2$. In the derivation of the bound, the encoded message $\xbold_i\:(i=1,2)$ at
each transmitter is partitioned into three parts, as shown in Fig.~\ref{fig:veryhighouter}. The partitioning is
based on whether (a) the bits are received at the intended receiver, and are received at the other receiver
without interference, (b) the bits are not received at the desired receiver, and received without interference
at the other receiver, and (c) the bits are not received at the intended receiver, and are received with interference at the other receiver. To motivate the development of the following outer bound, first consider the $C=0$ case. If receiver~$1$ can decode $\xbold_{1a}$ sent by transmitter~$1$, then receiver~$2$ can decode $\xbold_{1a}$ as well, since it gets these data bits without any interference. Hence, it is not possible to send any data bits securely on those
levels. Data transmitted at the remaining levels are not received by receiver~$1$, so they cannot be used
for secure data transmission either. Now, suppose a genie provides receiver~$1$ with the part of the signal
sent by transmitter~$1$ that is received without any interference at receiver~$2$, i.e.,
$\ybold_{2a}^N \triangleq (\xbold_{1a}^N,\xbold_{1b}^N)$. Then, by using the secrecy constraint for the
receiver~$2$, it is possible to bound the rate of user~$1$ by $I(W_1;\ybold_1^N|\ybold_{2a}^N)$. When
$\alpha \geq 2$, it is possible to show that $I(W_1;\ybold_1^N|\ybold_{2a}^N)=0$. When $C>0$, by using
the above mentioned approach and the relation in (\ref{eq:thouter0}), an outer bound on the symmetric
secrecy rate is derived for $\alpha \geq 2$, and is stated as the following theorem.
\begin{figure*}[t]
\centering
\mbox{\subfigure[]{\includegraphics[width=3in, height=2in]{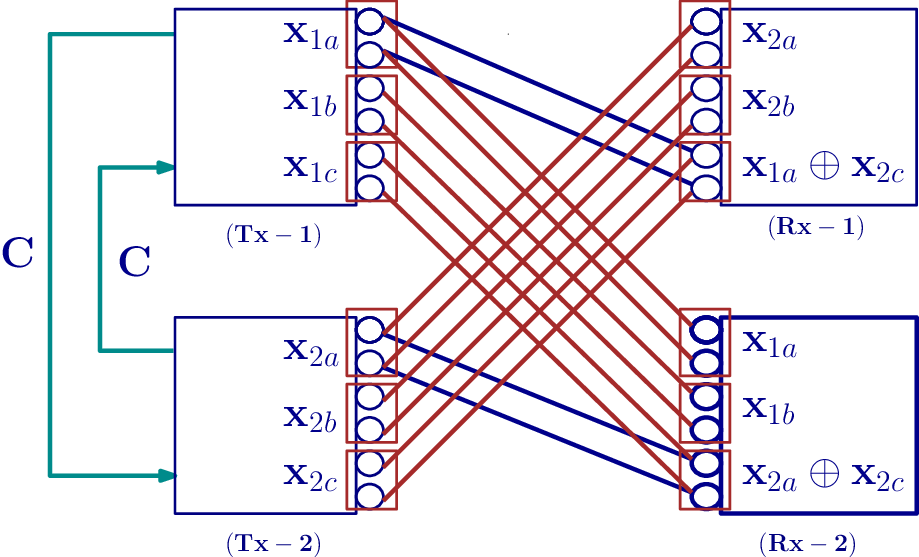} \label{fig:veryhighouter}}\quad \qquad \quad
\subfigure[]{\includegraphics[width=3in, height=2in]{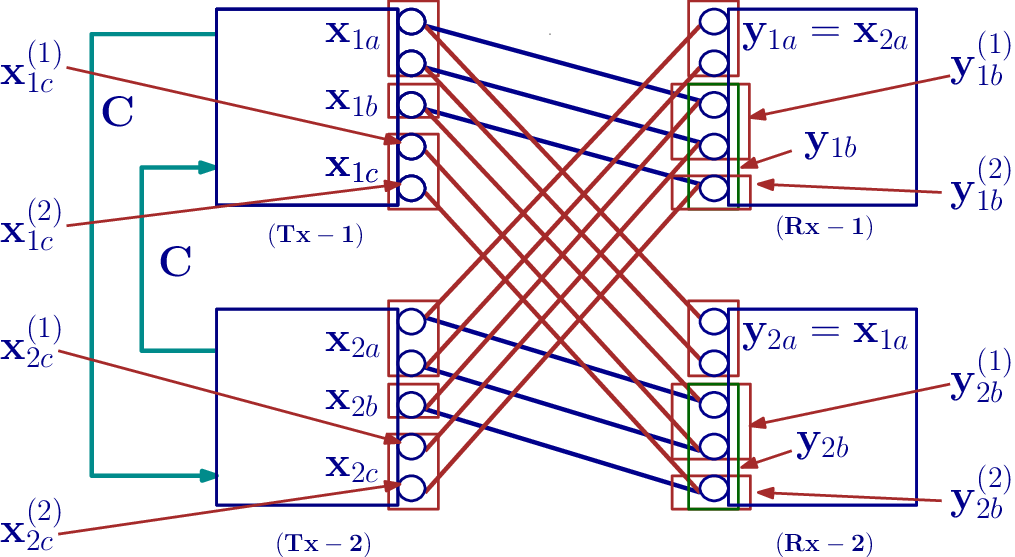} \label{fig:highouter}}}
\caption[]{\subref{fig:veryhighouter} SLDIC with $m=2$ and $n=6$ and \subref{fig:highouter} SLDIC with $m=3$ and $n=5$: Illustration of partitioning of the encoded message/output.}\label{fig:outersplit}
\end{figure*}

\begin{theorem}\label{th:theoremSLDIC-outer2}
In the very high interference regime, i.e., for $\alpha \geq 2$, the symmetric rate of the $2$-user SLDIC with
limited-rate transmitter cooperation and secrecy constraints at the receivers is upper bounded as: $R_s \leq  2C$.
\end{theorem}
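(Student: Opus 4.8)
The plan is to establish the bound through a genie-aided converse that combines Fano's inequality, the perfect secrecy constraint at receiver~$2$, and the rate limitation of the cooperative link. Since the channel is symmetric, it suffices to bound $R_1$. Starting from Fano's inequality I would write $N R_1 \leq I(W_1;\ybold_1^N) + N\epsilon_N$, and then enlarge the right-hand side by handing receiver~$1$ two pieces of genie information: the interference-free observation of transmitter~$1$'s signal at receiver~$2$, namely $\ybold_{2a}^N \triangleq (\xbold_{1a}^N,\xbold_{1b}^N)$, together with the cooperating signals $(\coopsignalone^N,\coopsignaltwo^N)$. This gives $I(W_1;\ybold_1^N) \leq I(W_1;\ybold_1^N,\ybold_{2a}^N,\coopsignalone^N,\coopsignaltwo^N)$, which I would expand by the chain rule in the order $\ybold_{2a}^N$, then $(\coopsignalone^N,\coopsignaltwo^N)$, then $\ybold_1^N$.

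The three resulting terms are bounded separately. The first term, $I(W_1;\ybold_{2a}^N)$, is zero: since $\ybold_{2a}^N$ is a deterministic (sub-vector) function of $\ybold_2^N$, the data-processing inequality together with the perfect secrecy constraint $I(W_1;\ybold_2^N)=0$ forces it to vanish. The second term, $I(W_1;\coopsignalone^N,\coopsignaltwo^N \mid \ybold_{2a}^N)$, is at most $H(\coopsignalone^N,\coopsignaltwo^N) \leq H(\coopsignalone^N) + H(\coopsignaltwo^N) \leq 2NC$, because each direction of the noiseless cooperative link carries at most $C$ bits per channel use; this is precisely where the factor of two in $R_s \leq 2C$ originates. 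It then remains to show that the third term, $I(W_1;\ybold_1^N \mid \ybold_{2a}^N,\coopsignalone^N,\coopsignaltwo^N)$, is zero, and this is the crux of the argument.

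The key structural observation, and the step I expect to be the main obstacle, is that in the very high interference regime $\ybold_1^N$ depends on transmitter~$1$'s signal only through $\xbold_{1a}^N$. Indeed, the direct link $\mathbf{D}^{q-m}$ passes only the top $m$ levels of $\xbold_1$, which are exactly the levels forming $\xbold_{1a}$; the condition $\alpha \geq 2$ (i.e.\ $m \leq n-m$) guarantees that these $m$ levels lie entirely within the $n-m$ levels that reach receiver~$2$ without interference, so $\xbold_{1a}^N$ is recoverable from the genie information $\ybold_{2a}^N$. This is the only place $\alpha \geq 2$ is used, and it is what makes the argument fail for $\alpha<2$, where part of $\xbold_{1a}$ collides with $\xbold_2$'s direct signal at receiver~$2$ and is therefore absent from $\ybold_{2a}$. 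Once $\xbold_{1a}^N$ is pinned down by the conditioning, $\ybold_1^N$ becomes a deterministic function of $\xbold_{1a}^N$ and $\xbold_2^N$, so the residual dependence on $W_1$ can only come through $\xbold_2^N$.

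Finally I would invoke the Markov chain \eqref{eq:thouter0}: conditioned on $(\coopsignalone^N,\coopsignaltwo^N)$, the pair $(W_1,\xbold_1^N)$ is independent of $\xbold_2^N$, and since $\ybold_{2a}^N$ is a function of $\xbold_1^N$, this independence persists after further conditioning on $\ybold_{2a}^N$; hence $I(W_1;\xbold_2^N \mid \ybold_{2a}^N,\coopsignalone^N,\coopsignaltwo^N)=0$, which drives the third term to zero. Collecting the three bounds yields $N R_1 \leq 2NC + N\epsilon_N$, and letting $N\to\infty$ gives $R_s \leq 2C$. The one technical point to verify carefully is the persistence of the conditional independence under the extra conditioning on $\ybold_{2a}^N$, which follows from a short application of the Markov property since $\ybold_{2a}^N$ lies on the same side of the chain as $W_1$.
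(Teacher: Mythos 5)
Your proof is correct and follows essentially the same route as the paper's: the same genie information $\ybold_{2a}^N=(\xbold_{1a}^N,\xbold_{1b}^N)$, the same use of the perfect secrecy constraint to force $I(W_1;\ybold_{2a}^N)=0$, the same structural observation that for $\alpha\geq 2$ the receiver-$1$ output depends on $\xbold_1^N$ only through $\xbold_{1a}^N$, and the same Markov chain (\ref{eq:thouter0}) to decouple $\xbold_2^N$ from $W_1$ given the cooperative signals, with the factor $2C$ coming from $H(\coopsignalone^N,\coopsignaltwo^N)\leq 2NC$ in both cases. The only difference is bookkeeping: you hand $(\coopsignalone^N,\coopsignaltwo^N)$ to the genie upfront and split the mutual information into three terms, whereas the paper introduces the cooperative signals later inside an entropy-difference cancellation $H(\xbold_2^N|\cdot)-H(\xbold_2^N|\cdot,W_1)$; the ingredients are identical.
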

\begin{proof}
The proof is provided in Appendix~\ref{sec:theoremSLDIC-outer2}.
\end{proof}
\textit{Remarks:}
\begin{itemize}
\item The outer bound in Theorem~$2$ can be extended to the asymmetric case
 under the following condition
 \begin{align}
   \min\lcb n_1, n_2\rcb > m_1 + m_2,  \label{eq:extension2}
 \end{align}
 and the outer bound  becomes
 \begin{align}
  R_1 \leq C_{12} + C_{21}, \qquad R_2 \leq  C_{12} + C_{21}. \label{eq:extension3}
 \end{align}
\item Theorem~\ref{th:theoremSLDIC-outer2} implies that, for $\alpha \geq 2$, it is not possible to achieve a rate
greater than $2C$, regardless of $m$ and $n$. In particular, when $C=0$, i.e., without cooperation, it is not possible
to achieve a nonzero rate. However, in the other
 interference regimes, it is possible to achieve rates greater than $2C$ (See Figs.~\ref{fig:gdof1}~and~\ref{fig:gdof2}).
\end{itemize}
%Derivation of the third theorem
The third outer bound, stated as Theorem~\ref{th:theoremSLDIC-outer3} below, is applicable in the high
interference regime, i.e., $1 < \alpha < 2$. The derivation of the outer bound involves partitioning of the output
and the encoded message based on whether the bits are received with interference at the intended receiver,
or causes interference to the other receiver, as shown in Fig. \ref{fig:highouter}. The outer bound on the
symmetric secrecy rate for the high interference regime is stated in the following theorem.
\begin{theorem}\label{th:theoremSLDIC-outer3}
In the high interference regime, i.e., for $1 < \alpha < 2$, the symmetric rate of the $2$-user SLDIC with
limited-rate transmitter cooperation and secrecy constraints at the receivers is upper bounded as: $R_s \leq 2C + 2m-n$.
\end{theorem}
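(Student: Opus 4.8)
The plan is to derive an outer bound on the symmetric secrecy rate $R_s$ in the high interference regime $1 < \alpha < 2$ by combining Fano's inequality, a carefully chosen genie-aided side information, and the perfect secrecy constraint $I(W_i;\ybold_j^N)=0$. Following the partitioning suggested by Fig.~\ref{fig:highouter}, I would first split the encoded signal $\xbold_i$ at each transmitter and the output $\ybold_i$ at each receiver into subvectors according to whether the bits are received with interference at the intended receiver or cause interference at the unintended receiver. Since $1 < \alpha < 2$, the interference occupies more than $m$ but fewer than $2m$ levels, so the top $n-m$ levels of the signal and the overlapping middle levels behave differently; the partition isolates the portion $\xbold_{ib}$ that reaches the other receiver's cleanly-decodable levels from the portion $\xbold_{ia}$ that collides. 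The target is an inequality of the form $2R_s \le 2(2C+2m-n)$ or a single-user bound $R_s \le 2C + 2m - n$, obtained by bounding $NR_1 = H(W_1)$.

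The key steps, in order, are as follows. First I would write $NR_1 \le I(W_1;\ybold_1^N) + N\epsilon_N$ by Fano, and then introduce genie side information $\sbold^N$ consisting of the appropriate partitioned component of transmitter~$1$'s signal that is received interference-free at receiver~$2$, so that $I(W_1;\ybold_1^N) \le I(W_1;\ybold_1^N,\sbold^N)$. Second, I would invoke the secrecy constraint $I(W_1;\ybold_2^N)=0$ to subtract a vanishing mutual-information term, rewriting the bound as a difference of conditional entropies $H(\ybold_1^N \mid \sbold^N, \text{cooperation}) - H(\ybold_2^N \mid \ldots)$ plus a cooperation contribution. Third, the cooperation enters through the Markov chain \eqref{eq:thouter0}: conditioned on $(\coopsignalone^N,\coopsignaltwo^N)$ the two transmitters' signals are independent, and the entropy of the cooperative signals is bounded by $NC$ each, contributing the $2C$ term. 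Fourth, I would bound the residual entropy differences by counting the number of bit-levels in each partition block; in the regime $1 < \alpha < 2$ the net level count that survives after subtracting the secrecy-forced term works out to $2m - n$ per user, yielding the claimed $R_s \le 2C + 2m - n$.

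The main obstacle I anticipate is the bookkeeping of the entropy difference after applying the secrecy constraint. Because the signals are coupled through cooperation, the terms $H(\ybold_1^N \mid \cdot)$ and $H(\ybold_2^N \mid \cdot)$ cannot simply be decoupled; one must use the Markov relation \eqref{eq:thouter0} and the symmetry of the channel to pair up the conditional entropies so that the cooperative contribution is cleanly isolated as $2C$ and the remaining deterministic-channel entropies telescope to the level count $2m-n$. Getting the genie side information exactly right, so that it supplies enough information to make $I(W_1;\ybold_1^N \mid \sbold^N)$ collapse onto the $2m-n$ surviving levels while still being dominated by receiver~$2$'s observation for the secrecy step, is the delicate part; the partition in Fig.~\ref{fig:highouter} is chosen precisely so that the interference-free levels at receiver~$2$ align with the levels that would otherwise carry secure information at receiver~$1$. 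I expect the symmetry of the SLDIC to let me carry out the argument for user~$1$ and conclude the same bound for user~$2$, so that the symmetric-rate bound follows immediately.
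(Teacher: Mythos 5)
Your proposal follows essentially the same route as the paper's proof: Fano's inequality, a genie giving receiver~$1$ the part of transmitter~$1$'s signal that arrives interference-free at receiver~$2$ (the paper's $\ybold_{2a}^N$), the secrecy constraint $I(W_1;\ybold_2^N)=0$ to collapse that term, conditioning on the cooperative signals and bounding their entropy by $2NC$ via the Markov relation \eqref{eq:thouter0}, and finally partitioning the residual output (the paper splits $\ybold_{1b}$ into $\ybold_{1b}^{(1)}$ and $\ybold_{1b}^{(2)}$) so that the common conditional-entropy terms cancel and the surviving block of $2m-n$ levels gives the bound. The "delicate bookkeeping" you flag is exactly the paper's cancellation step, carried out by using \eqref{eq:thouter0} to drop $\xbold_{1a}^N$ from the conditioning and then discarding the nonnegative term conditioned on $W_1$.
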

%\begin{proof}
%The proof is provided in Appendix \ref{sec:theoremSLDIC-outer3}.
%\end{proof}

The following theorem gives the outer bound on the symmetric secrecy rate for the $\alpha=1$ case. In this case,
both the receivers see the same signal. Hence, it is possible for receiver~$2$ decode any message that receiver~$1$
is able to decode, and vice-versa. Therefore, it is not possible to achieve a nonzero secrecy rate, irrespective of $C$.
A similar reasoning also holds for the Gaussian case, even though the receivers see independent noise instantiations.
\begin{theorem}\label{th:theoremSLDIC-outer4}
When $\alpha=1$, the symmetric rate of the $2$-user SLDIC with limited-rate transmitter cooperation and
secrecy constraints at the receivers is upper bounded as: $R_s = 0$.
\end{theorem}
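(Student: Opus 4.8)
The plan is to exploit the degeneracy of the channel at $\alpha=1$: when $m=n$ we have $q=\max\{m,n\}=m=n$, so both shift exponents in \eqref{sysmodel2} vanish, $\mathbf{D}^{q-m}=\mathbf{D}^{q-n}=\mathbf{I}$, and the two received vectors collapse to the \emph{same} signal, $\ybold_1 = \xbold_1 \oplus \xbold_2 = \ybold_2$. Crucially this identity holds symbol-by-symbol for every channel use and for any encoding the transmitters may employ (including any use of the cooperative link), so across a block of length $N$ we have $\ybold_1^N = \ybold_2^N$ as random vectors. Hence whatever information receiver~$1$ can extract about $W_1$, receiver~$2$ can extract as well, and the reliability requirement at the intended receiver is in direct conflict with the secrecy requirement at the unintended one.

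First I would invoke reliability at receiver~$1$: by Fano's inequality, decodability of $W_1$ from $\ybold_1^N$ forces $H(W_1 \mid \ybold_1^N) \leq N\epsilon_N$ with $\epsilon_N \to 0$. Next I would invoke the perfect secrecy constraint at receiver~$2$, namely $I(W_1;\ybold_2^N)=0$; substituting $\ybold_2^N=\ybold_1^N$ gives $I(W_1;\ybold_1^N)=0$. Combining the two via the identity
\begin{align}
N R_1 = H(W_1) = I(W_1;\ybold_1^N) + H(W_1 \mid \ybold_1^N) \leq 0 + N\epsilon_N, \nonumber
\end{align}
I obtain $R_1 \leq \epsilon_N$, and letting $N\to\infty$ yields $R_1=0$. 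The identical argument with the roles of the users interchanged gives $R_2=0$, so the symmetric secrecy rate satisfies $R_s = 0$.

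The argument is essentially immediate once the observation $\ybold_1^N=\ybold_2^N$ is made, so there is no substantial analytic obstacle; the only point requiring care is to state this equality at the level of the random vectors (not merely their distributions), which is what permits the substitution $I(W_1;\ybold_2^N)=I(W_1;\ybold_1^N)$. I would also emphasize that the conclusion is independent of $C$: since the output collapse $\ybold_1=\ybold_2$ does not depend on how $\xbold_1,\xbold_2$ are generated, the presence of the cooperative link cannot separate the two receivers' observations, and therefore no amount of transmitter cooperation can create a positive secrecy rate at $\alpha=1$.
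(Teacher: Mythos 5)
Your proof is correct and follows essentially the same route as the paper's: Fano's inequality at the intended receiver, the observation that $\alpha=1$ forces $\mathbf{y}_1^N=\mathbf{y}_2^N$ as random vectors, and the perfect secrecy constraint $I(W_1;\mathbf{y}_2^N)=0$ together force $R_1=0$ (and symmetrically $R_2=0$), independently of $C$. The extra care you take in justifying the output identity at the level of random vectors, and in noting that cooperation cannot break it, is consistent with (and slightly more explicit than) the paper's one-line argument.
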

\begin{proof}
	The proof is provided in Appendix~\ref{sec:theoremSLDIC-outer4}.
\end{proof}

A consolidated expression for the outer bound, obtained by taking minimum of the outer bounds in
Theorems~\ref{th:theoremSLDIC-outer1}-\ref{th:theoremSLDIC-outer4}, is stated as the following corollary. In
particular, the minimum of the outer bounds in Theorems~\ref{th:theoremSLDIC-outer1} and \ref{th:theoremSLDIC-outer3}
is taken for the high interference regime, and the minimum of the outer bounds in Theorems~\ref{th:theoremSLDIC-outer1}
and \ref{th:theoremSLDIC-outer2} is taken in the very high interference regime.
\begin{corollary}\label{cor:cons-outer-sldic}
An outer bound on the symmetric secrecy rate of the SLDIC, obtained by taking the minimum of the outer
bounds derived in this work, is given by:
\begin{align}
\frac{R_s}{m} \leq \left\{\begin{array}{l l}
    \frac{2\beta}{3}-\frac{2\alpha}{3}+1 &\mbox{for $\alpha < 1$ } \\ 	
    0 & \mbox{for $\alpha = 1$}\\
    \frac{2\beta}{3} + \frac{\alpha}{3} & \mbox{for $1 < \alpha < 2$, $\beta > \alpha -\frac{3}{2}$} \\
    & \mbox{or $\alpha \geq 2$, $\beta > \frac{\alpha}{4}$ }\\
    2\beta -\alpha + 2 & \mbox{for $\frac{3}{2} < \alpha < 2$, $0 \leq \beta < \alpha - \frac{3}{2}$}\\
    2\beta &\mbox{for $\alpha \geq 2$, $0 \leq \beta \leq \frac{\alpha}{4}$},
 \end{array}\right. \label{eq:cons-outer-sldic}
\end{align}
where $\beta \triangleq \frac{C}{m}$.
\end{corollary}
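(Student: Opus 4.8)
The plan is to rewrite each of the four theorem bounds in the normalized form $R_s/m$ using $\alpha = n/m$ and $\beta = C/m$, and then, within each interference regime, to take the pointwise minimum of the applicable bounds by comparing them as affine functions of $\beta$. Dividing the bounds of Theorems~\ref{th:theoremSLDIC-outer1}--\ref{th:theoremSLDIC-outer4} by $m$ gives: from Theorem~\ref{th:theoremSLDIC-outer1}, the value $\frac{2\beta}{3} - \frac{2\alpha}{3} + 1$ for $\alpha \leq 1$ and $\frac{2\beta}{3} + \frac{\alpha}{3}$ for $\alpha > 1$; from Theorem~\ref{th:theoremSLDIC-outer2}, the value $2\beta$ for $\alpha \geq 2$; from Theorem~\ref{th:theoremSLDIC-outer3}, the value $2\beta - \alpha + 2$ for $1 < \alpha < 2$; and from Theorem~\ref{th:theoremSLDIC-outer4}, the value $0$ for $\alpha = 1$.

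For $\alpha < 1$ only Theorem~\ref{th:theoremSLDIC-outer1} applies, so its normalized bound is the first line; for $\alpha = 1$, Theorem~\ref{th:theoremSLDIC-outer4} forces $R_s = 0$, the second line. In the high interference regime $1 < \alpha < 2$, I would minimize the Theorem~\ref{th:theoremSLDIC-outer1} bound $\frac{2\beta}{3} + \frac{\alpha}{3}$ against the Theorem~\ref{th:theoremSLDIC-outer3} bound $2\beta - \alpha + 2$. Equating the two and solving for the crossover yields $\beta = \alpha - \frac{3}{2}$; since the former has the smaller slope in $\beta$, it is the tighter bound precisely when $\beta > \alpha - \frac{3}{2}$ (the third line), and the latter is tighter when $\beta < \alpha - \frac{3}{2}$ (the fourth line). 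Because $\alpha - \frac{3}{2} < 0$ for $\alpha < \frac{3}{2}$ while $\beta \geq 0$, the Theorem~\ref{th:theoremSLDIC-outer1} bound dominates throughout $1 < \alpha \leq \frac{3}{2}$, which explains why the crossover region $2\beta - \alpha + 2$ is stated only for $\frac{3}{2} < \alpha < 2$.

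In the very high interference regime $\alpha \geq 2$, I would instead compare the Theorem~\ref{th:theoremSLDIC-outer1} bound $\frac{2\beta}{3} + \frac{\alpha}{3}$ with the Theorem~\ref{th:theoremSLDIC-outer2} bound $2\beta$; equating them gives the crossover $\beta = \frac{\alpha}{4}$, so $\frac{2\beta}{3} + \frac{\alpha}{3}$ is the minimum for $\beta > \frac{\alpha}{4}$ (folding into the third line) and $2\beta$ is the minimum for $\beta \leq \frac{\alpha}{4}$ (the fifth line). Each comparison reduces to an elementary inequality between two affine functions of $\beta$, so I do not anticipate a genuine analytical obstacle. The only real care required is the bookkeeping: verifying that the crossover thresholds fall in the correct regime, that the nonnegativity constraint $\beta \geq 0$ and the boundary $\alpha = \frac{3}{2}$ are handled consistently, and that the five stated pieces together tile the relevant portion of the $(\alpha,\beta)$ plane without gaps or overlaps.
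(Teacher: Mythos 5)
Your proposal is correct and matches the paper's own (implicit) argument: the paper obtains the corollary exactly by normalizing Theorems~\ref{th:theoremSLDIC-outer1}--\ref{th:theoremSLDIC-outer4} by $m$ and taking the minimum of Theorems~\ref{th:theoremSLDIC-outer1} and \ref{th:theoremSLDIC-outer3} in the high interference regime and of Theorems~\ref{th:theoremSLDIC-outer1} and \ref{th:theoremSLDIC-outer2} in the very high interference regime, with the same crossovers $\beta = \alpha - \frac{3}{2}$ and $\beta = \frac{\alpha}{4}$. Your bookkeeping on the slope comparison and on why the $2\beta-\alpha+2$ piece only appears for $\frac{3}{2} < \alpha < 2$ is accurate.
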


\textit{Remarks:}\begin{itemize}
\item Under cooperation asymmetry, all the outer bounds developed in the deterministic
 model still hold. This requires replacing $2C$ with $C_{12} + C_{21}$ in the expression
 for the outer bound. This is due to the fact that the entropy term $H(\coopsignalone, \coopsignaltwo)$ can be
upper bounded by $C_{12} + C_{21}$.
\item  There are cases where it is non-trivial to extend these bounds to the
asymmetric scenario (e.g.: Theorem~\ref{th:theoremSLDIC-outer3}). One of
the key techniques used
in the derivation of these outer bounds is the partitioning of the encoded
messages/outputs and careful selection of the side-information to be provided to
the receiver. This partitioning and side-information does not easily generalize to the asymmetric scenario.
\end{itemize}

Next, the achievable schemes for the SLDIC are presented.
%Outer bound part ends here
%%%%%%%%%%%%%%%%%%%%%%%%%%%%%%%%%%%%%%%%%%%%%%%%%%%%%%%%%%%%%%%%%%%%%%%%%%%%%%%%%%%%%%%%%%%%%%%%%%%%%%%%%%%%%%%%%%%%%%%%%%%%%%%%%%%%%%%%%%%%%%%5
%%%%%%%%%%%%%%%%%%%%%%%%%%%%%%%%%%%%%%%%%%%%%%%%%%%%%%%%%%%%%%%%%%%%%%%%%%%%%%%%%%%%%%%%%%%%%%%%%%%%%%%%%%%%%%%%%%%%%%%%%%%%%%%%%%%%%%%%%%%%%%%%%%%%%%5
%%%%%%%%%%%%%%%%%%%%%%%%%%%%%%%%%%%%%%%%%%%%%%%%%%%%%%%%%%%%%%%%%%%%%%%%%%%%%%%%%%%%%%%%%%%%%%%%%%%%%%%%%%%%%%%%%%%%%%%%%%%%%%%%%%%%%%%%%%%%%%%%%%%%%%%%%%%%
\section{SLDIC: Achievable Schemes}\label{sec:sldic-ach}
\subsection{Weak interference regime $(0 \leq \alpha \leq \frac{2}{3})$}\label{sec:SLDIC-ach-weak}
In this regime, the proposed scheme uses interference cancelation. It is easy to see that data bits
transmitted on the lower $m-n$ levels $[1:m-n]$ remain secure, as these data bits do not cause interference
at the unintended receiver. Hence, it is possible to transmit $m-n$ bits securely, when $C=0$, as shown in Fig.~\ref{fig:sldic-modified-weak1}. However, with cooperation $(C > 0)$, it is possible to transmit on the top levels by
appropriately xoring the data bits with the cooperative bits in the lower levels prior to transmission. These
cooperative bits are precoded (xored) with the data bits at the levels $[1:\min\{n,C\}]$ to cancel interference
caused by the data bits sent by the other transmitter. When $C = n$, it can be shown that the proposed
scheme achieves the maximum possible rate of $\max\{m, n\}$~bits. When $C > n$, $C - n$ bits can be
discarded and $n$ cooperative bits can be used for encoding as above, to achieve $\max\{m,n\}$~bits. Hence,
in the sequel, it will not be explicitly mentioned that $C \leq n$. The proposed encoding scheme achieves the
following symmetric secrecy rate:
\begin{align}
R_{s} = m-n + C.  \label{weakach2}
\end{align}
A high level description of the achievable scheme is illustrated in Fig.~\ref{fig:sldic-modified-weak1}. The details of the encoding
scheme and the derivation of (\ref{weakach2}) can be found in \cite{partha-spawc-2013}.
\begin{figure*}[t]
\centering
\mbox{\subfigure[][]{\includegraphics[width=2.8in, height=2.4in]{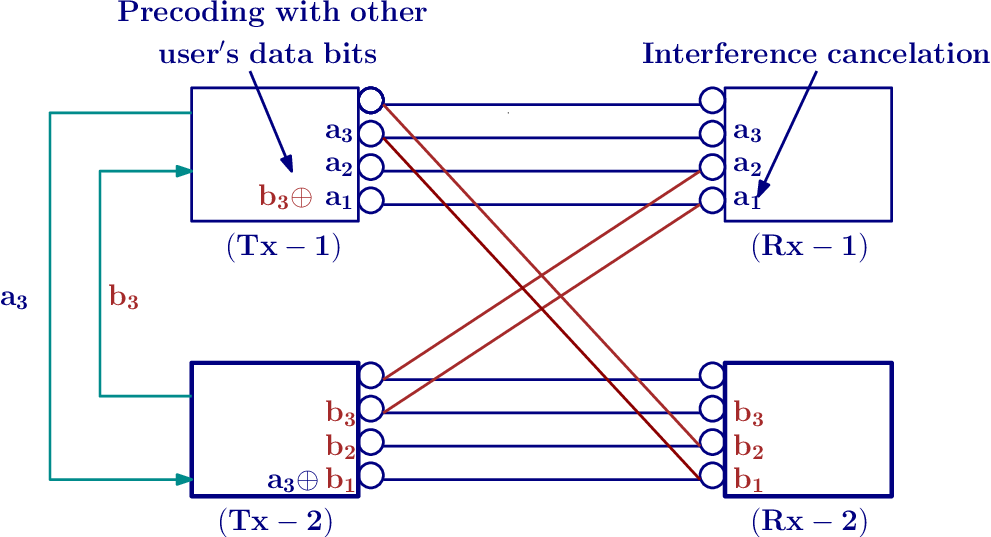}\label{fig:sldic-modified-weak1}} \quad
\subfigure[][]{\includegraphics[width=2.8in, height=2.4in]{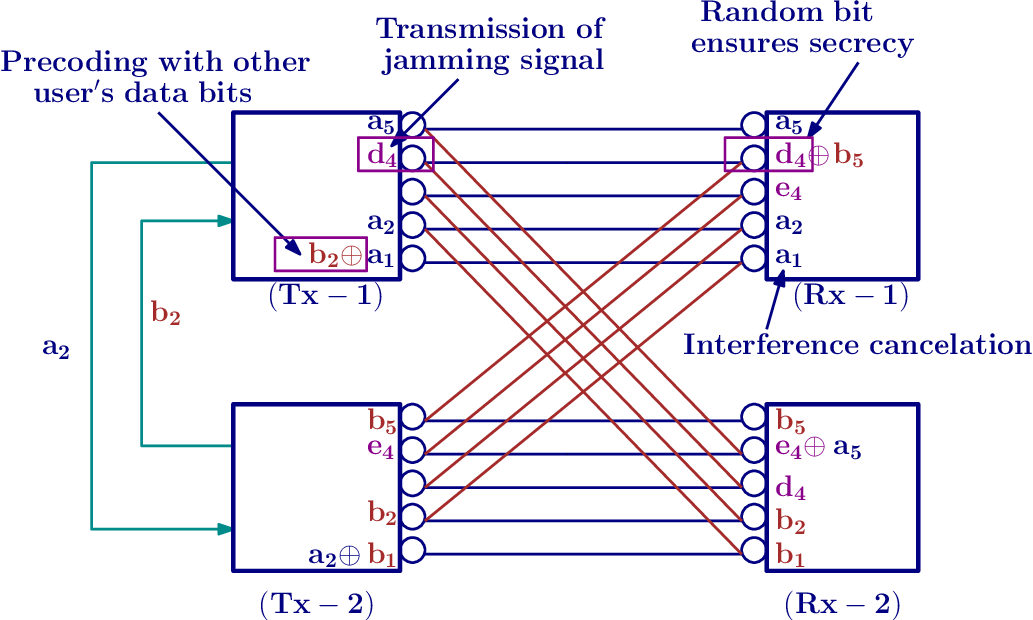}\label{fig:sldic-modified-weakmod1}}}
\caption[]{SLDIC: \subref{fig:sldic-modified-weak1} $m=4$, $n=2$, $C=1$ and $R_s = 3$, \subref{fig:sldic-modified-weakmod1}
 $m=5$ and $n=4$, $C=1$ and $R_s=3$.}\label{fig:sldic-weakmod4}
\end{figure*}

\textit{Remarks:}
\begin{enumerate}
\item In this regime, the proposed achievable scheme  meets the symmetric capacity of the SLDIC
\emph{without} secrecy constraints \cite{wang-TIT-2011} for all  values of $C$ (See Figs. \ref{fig:gdof1} and \ref{fig:gdof2}).
Thus, the secrecy constraints at the receivers do not reduce the symmetric capacity region of the SLDIC.
\item In this regime, the proposed scheme does not involve transmission of a jamming signal (or random
 bits), even when $C=0$. In the next subsection, it will be seen that the
 transmission of the jamming signal improves the achievable secrecy rate, when
 the capacity of the cooperative link is not sufficient to cancel interference
 at the unintended receiver.
 \end{enumerate}
\subsection{Moderate interference regime $(\frac{2}{3} < \alpha < 1)$}\label{sec:SLDIC-ach-weak-mod}
In this regime, the proposed scheme uses interference cancelation along with the transmission of random
bits. Without transmitter cooperation, it is possible to transmit at least $m-n$ bits securely, as in the weak
interference regime. Depending on the value of $C$ and $\alpha$, with the help of transmission of random bits,
it is possible to send additional data bits on the higher levels $[m-n+1:m]$ by carefully placing data bits along
with zero bits and random bits.

The proposed scheme achieves the following symmetric secrecy rate:
\begin{align}
R_s  = m-n + B(m-n) + q + C, \label{eq:sldic-modach1}
\end{align}
where $B \triangleq \left\lfloor\frac{g}{3r_2} \right\rfloor$, $g \triangleq \lcb n-(r_2 + C)\rcb^{+}$, $r_2 \triangleq m-n$,
$q \triangleq \min \lcb (t-r_2)^{+}, r_2\rcb$ and  $t\triangleq g\% \{3r_2\}$.

In the above equation, the first term corresponds to the number of data bits transmitted securely without using
random bits transmission or cooperation. The term $B(m-n)+q$ corresponds to the number of data bits that can
be securely transmitted using the help of random bits transmission. The last term $C$ represents the gain in
rate achievable due to cooperation.

A high level description of the achievable scheme is illustrated in Fig.~\ref{fig:sldic-modified-weakmod1}. The details of the
encoding scheme and the derivation of (\ref{eq:sldic-modach1}) can be found in \cite{partha-spawc-2013}.

\textit{Remark:}
In this regime, it is possible to transmit data bits securely in the higher levels $[m-n+1:m]$ by
intelligently choosing the placement of data and random bits, in addition to interference cancelation.

% Achievable scheme for the moderate intf. regime ends here
\subsection{Interference is as strong as the signal $(\alpha=1)$} In this case, from Theorem~\ref{th:theoremSLDIC-outer4},
it is not possible to achieve a nonzero secrecy rate.
%Achievable scheme for the high interference regime starts here.
\subsection{High interference regime $(1 < \alpha < 2)$}\label{sec:SLDIC-ach-highint}
The achievable scheme is similar to that proposed for the moderate interference regime, but it differs in the
manner the encoding of the message is performed at each transmitter. The proposed scheme achieves the
following secrecy rate:
\begin{enumerate}
\item When $(1 <\alpha \leq 1.5)$:
\begin{align}
R_s  = B(n-m)+  q + C, \label{eq:sldic-highach1}
\end{align}
where $B \triangleq \left\lfloor\frac{g}{3r_2}\right\rfloor$, $g \triangleq (m-C)^+$, $q \triangleq \min \lcb (t-r_2)^{+}, r_2\rcb$, $t\triangleq g\% \{3r_2\}$ and $r_2 \triangleq n-m$.
\item When $(1.5 <\alpha < 2)$:
\begin{align}
R_s = \left\{\begin{array}{l l}
   2m-n + C  &\mbox{for $0 \leq C \leq 4n-6m$} \\ 	
   4n-6m + C_{T_1}  
   + C_{T_2} + C_{T_3} + r_d & \mbox{for $4n-6m < C \leq n$},
 \end{array}\right. \label{eq:sldic-highach2}
\end{align}
where $C_{T_1} \triangleq  \min\lcb \mylceil \frac{C_{\text{rem}}}{2}\myrceil,2m-n\rcb$, $C_{\text{rem}} \triangleq (C' - C_{T_3})^+$,
$C_{T_3}\triangleq \min\lcb 2m-n,C''\rcb$, $C' \triangleq C - (4n-6m)$, $C'' \triangleq \mylceil \frac{C'}{3}\myrceil$,
$C_{T_2} \triangleq \min\lcb 2m-n, (C_{\text{rem}} - C_{T_1})^{+}\rcb$ and
$r_d \triangleq \min\lcb 2m-n-C_{T_3}, 2m-n-C_{T_2}\rcb$.
\end{enumerate}
The details of the encoding scheme and some illustrative examples can be found in Appendix~\ref{sec:appen-ach-high-sldic}.

\textit{Remarks:}
\begin{enumerate}
\item When $C=0$ and $1.5 < \alpha < 2$, the proposed scheme is capacity
achieving. The outer bound in Theorem~\ref{th:theoremSLDIC-outer3} helps to
establish this.
\item One can note that the achievable schemes for the moderate (Sec.~\ref{sec:SLDIC-ach-weak-mod})
and high interference regime (Sec.~\ref{sec:SLDIC-ach-highint}) use a combination of interference cancelation
and transmission of a jamming signal (random bits transmission). When precoding is done using the other
user's signal, it cancels the interference and also ensures secrecy. In the technique based on random bits
transmission, the transmitter self-jams its own receiver, so that the receiver cannot decode the other user's
data. But, in this process, transmitter causes interference to the other receiver, thereby adversely impairing
the achievable rate of secure communication. Thus, self jamming in that form only helps if the benefit to the
secrecy rate due to the interference caused at the own receiver outweighs the negative impact of the interference
caused at the other receiver. However, when the jamming signal can be canceled at an unintended receiver by
transmission of the same random bits by the other transmitter, its adverse impact is completely alleviated,
leading to larger achievable rates.
\end{enumerate}
%Result and discussion begins here
\subsection{Very high interference regime $(\alpha \geq 2)$} In this case, when $C=0$, it is not possible to
achieve nonzero secrecy rate as established by the outer bound in Theorem \ref{th:theoremSLDIC-outer2}.
However, with cooperation $(C > 0)$, it is possible to achieve nonzero secrecy rate. The proposed scheme
uses interference cancelation, time sharing, and relaying the other user's data bits. In contrast to the
achievable schemes for other interference regimes,  the transmitters exchange data bits, random bits, or
both, depending on the capacity of the cooperative link. The proposed scheme achieves the following secrecy rate:
\begin{figure*}
	\centering
	\mbox{\subfigure[Random bits sharing: $R_s = 2$.]{\includegraphics[width=2.7in,height=2.45in]{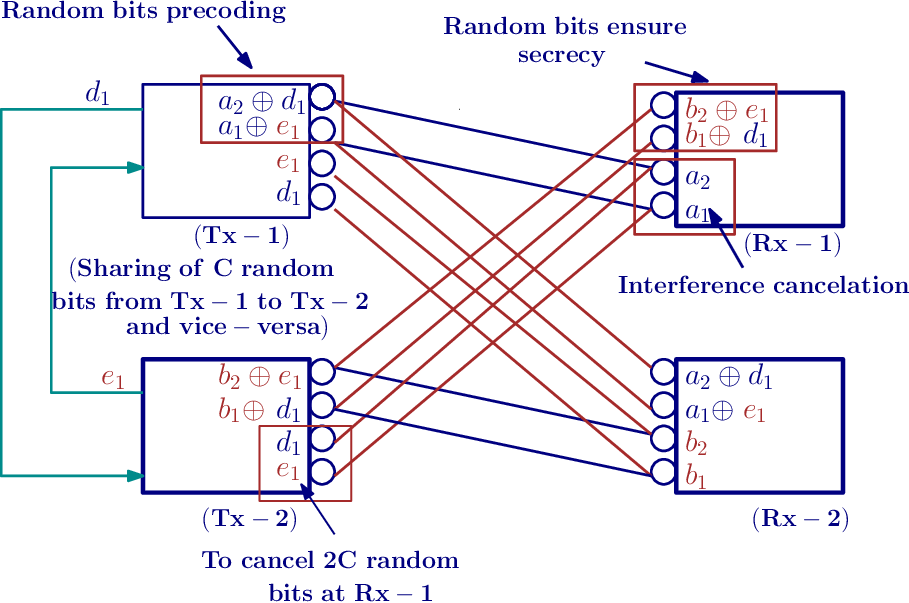}\label{fig:sldic-veryhigh1}} \qquad \qquad \qquad 
		\quad \subfigure[Data bits sharing: $R_s = 1$.]{\includegraphics[width=2.7in,height=2.42in]{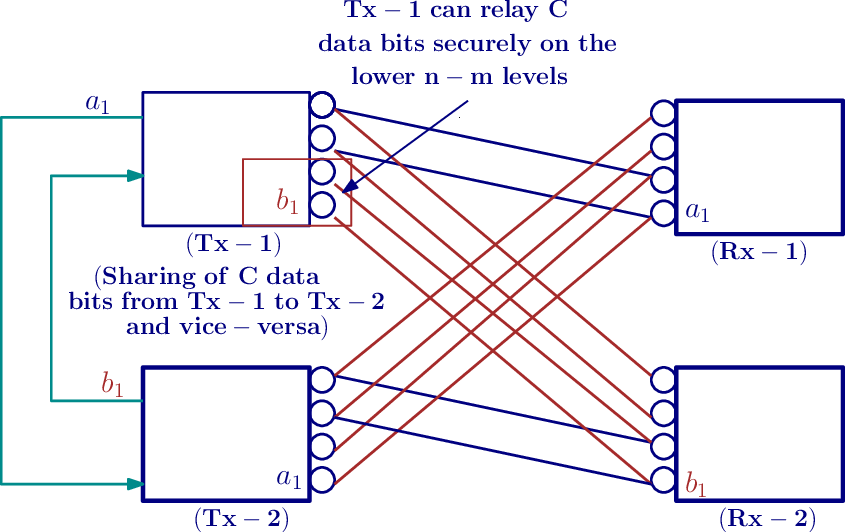}\label{fig:sldic-veryhigh2}}}
	\pullUp
	\caption[]{SLDIC with $m=2$, $n=4$ and $C=1$.}\label{fig:sldic-veryhigh3}
\end{figure*}
%One of the key novelties of the achievable scheme lies in using a combination of precoding the user own data bits with random bits or precoding data bits with the other user's data bits.
\begin{enumerate}
\item When $m$ is even:
\begin{align}
R_s = \left\{\begin{array}{l l}
   2C &\mbox{for $0 < C \leq \frac{m}{2}$} \\ 	
   \frac{m}{2} + C & \mbox{for $\frac{m}{2} < C \leq n- \frac{3m}{2}$} \\
   \frac{n}{2} - \frac{m}{4} + \frac{C}{2} & \mbox{for $n-\frac{3m}{2} < C < n-\frac{m}{2}$} \\
   C & \mbox{for $n-\frac{m}{2} \leq C \leq n$.}
 \end{array}\right. \label{eq:sldic-highach3a}
\end{align}
\item When $m$ is odd:
\begin{align}
& R_s = \left\{\begin{array}{l l}
    \min\{2C,m\} \qquad \mbox{for $0 < C \leq \frac{m+1}{2}$} \\ 	
     m + \min\lcb C - \frac{m+1}{2}, n-2m\rcb  \mbox{for $ \frac{m+1}{2}< C \leq \frac{2n-3m+1}{2}$} \\
    n-2m + \frac{1}{2}\lsqb \coneul + 2\coneuu\rsqb  + \frac{1}{2}\lsqb  \ctwouu + \conelu + \ctwoul\rsqb  \quad \mbox{for $\frac{2n-3m+1}{2} < C \leq n$},
 \end{array}\right. \label{eq:sldic-highach3}
\end{align}
\end{enumerate}
where $\coneuu \triangleq \left\lceil{\frac{C}{2}}\right\rceil$, $\coneul \triangleq (m-\coneuu)^+$,
$\ctwouu \triangleq (C - \ctwolu - C_2^r)^+$, $\conelu \triangleq (C - \coneuu - C_1^r)^{+}$,
$\ctwoul \triangleq \conell$, $\conell \triangleq \min\{2C_1^r, (m-\conelu)^{+})\}$,
$\ctwoll \triangleq \coneul$ and $C_2^r \triangleq \max\lcb \left\lceil \frac{\ctwoll}{2}\right\rceil, \left\lfloor \frac{\ctwoul}{2}\right\rfloor\rcb$,
$C_1^r \triangleq \left\lceil \frac{\coneul}{2} \right\rceil$.

The details of the achievable scheme can be found in Appendix~\ref{sec:appen-ach-veryhigh-sldic}.

\textit{Remarks:}
\begin{enumerate}
\item When $0 < C \leq \lceil\frac{m}{2}\rceil$, the capacity achieving scheme involves exchanging only
\emph{random bits} through the cooperative links. This is useful in scenarios where the transmitters trust
each other to follow the agreed-upon scheme, but are not allowed to share their data bits through the
cooperative link. The outer
bound in Theorem~$2$ establishes the optimality of the proposed scheme. The achievable scheme is illustrated for random bits sharing and data bits sharing for C = 1
in Figs.~\ref{fig:sldic-veryhigh1}~and~\ref{fig:sldic-veryhigh2}, respectively.
\item When $\frac{m}{2} < C  < n-\frac{m}{2}$ $ \lb \text{or } \frac{m+1}{2} < C \leq n \rb$ and $m$ is even
(or odd) valued,
the proposed scheme shares a combination of random bits and data bits through the cooperative links.
In Fig.~\ref{fig:sldic-veryhigh9}, a schematic representation of the achievable scheme for $m=2$ and $n=4$, with $C =2$ bits is
shown for the first time slot. In the second
time slot, the encoding for transmitters~$1$ and $2$ is reversed. In the second time slot, users~$1$ and $2$ achieve
a rate of $R_1=3$ and $R_2 = 2$, respectively. Hence, a symmetric rate of $R_s = 2.5$ is achievable.
\end{enumerate}
\begin{figure*}[t]
\centering
\includegraphics[width=4in, height=2.8in]{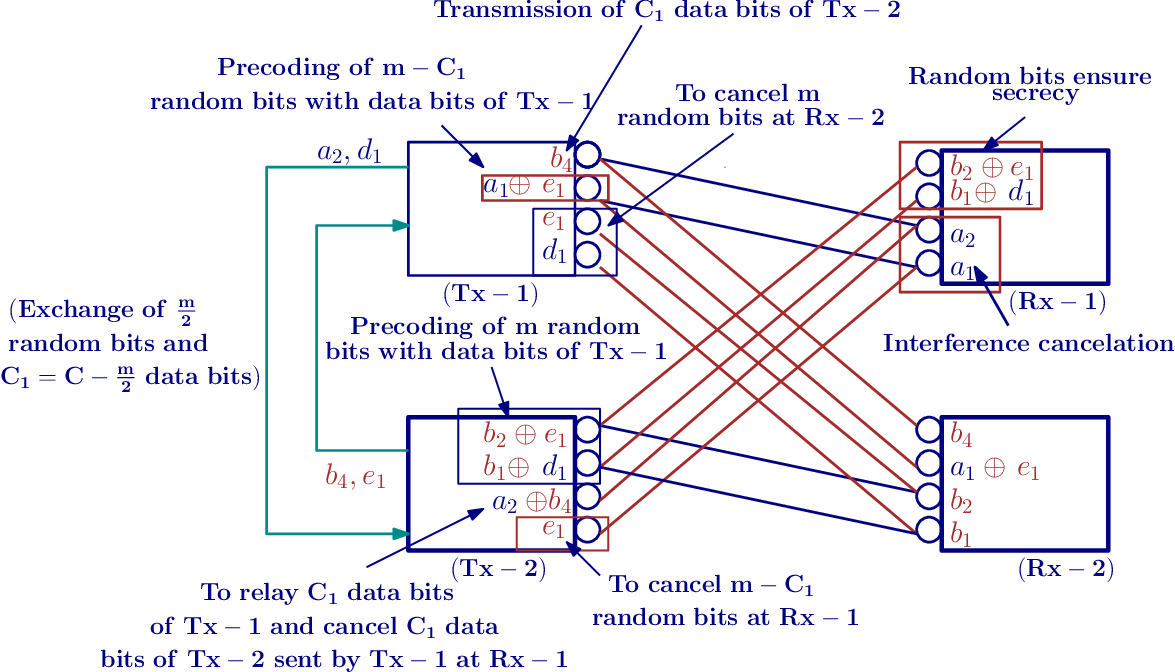}
\caption{SLDIC with $m=2$, $n=4$ and $C=2$: $R_1 = 2$ and $R_2 = 3$ is achievable in the first time slot. In the
second time slot, the role of transmitters~$1$ and $2$ is reversed and users~$1$ and $2$ achieve
a rate of $R_1=3$ and $R_2 = 2$, respectively. }\label{fig:sldic-veryhigh9}
\pullUp
\end{figure*}

Interestingly, it turns out that the symmetric capacity region of the SLDIC does
not change if the perfect secrecy constraint at the receiver is replaced with
the strong or the weak notion of secrecy, when the proposed scheme is capacity
achieving. This result is stated as the following theorem.
\begin{theorem}\label{th:newtheorem}
The symmetric secrecy capacity region of the deterministic SLDIC with transmitter cooperation satisfies the
following relationship, when the proposed scheme is capacity achieving:
\begin{align}
\mathcal{C}^{\text{perfect}} = \mathcal{C}^{\text{strong}} = \mathcal{C}^{\text{weak}}, \label{eq:coor-capacity}
\end{align}
where $\mathcal{C}^{\text{perfect}}$, $\mathcal{C}^{\text{strong}}$ and $\mathcal{C}^{\text{weak}}$
correspond to the capacity region with the perfect, strong and weak notions of secrecy, respectively.
\end{theorem}
\begin{proof}
Any communication scheme satisfying the perfect secrecy condition  will automatically satisfy the strong and
weak secrecy condition. Similarly, a communication scheme satisfying strong secrecy will automatically satisfy
the weak secrecy condition. Hence, the following holds
\begin{align}
\mathcal{C}^{\text{perfect}} \subseteq \mathcal{C}^{\text{strong}} \subseteq
\mathcal{C}^{\text{weak}}.
\label{eq:coor-capacity1}
\end{align}
The achievable results in Sec.~\ref{sec:sldic-ach} are obtained under perfect secrecy constraints at the
receivers. It is not difficult to show that the outer bounds on the secrecy rate in Theorems~\ref{th:theoremSLDIC-outer1}-\ref{th:theoremSLDIC-outer4}
do not change if the perfect secrecy is replaced with the weak notion of secrecy. When the achievable rates
meet the corresponding outer bounds, the relation in (\ref{eq:coor-capacity}) holds.
\end{proof}

Finally, this section is concluded with the following remarks:
\begin{enumerate}
 \item When $C=n$, i.e., the cooperative link is as strong as the
 interference, and  when $\alpha \neq 1$, the proposed scheme achieves the maximum possible rate of~$\max\{m,n\}$.
 \item In \cite{geng-globecom-2016},
it is shown that the proposed outer bound  in Theorem~$1$ in Sec.~\ref{sec:LDIC-outer} is tight for $\frac{3}{4}< \alpha < 1$
and $1 < \alpha < 1.5$, when $C=0$. Hence, the secrecy capacity is characterized for these regimes of $\alpha$ also. However,
the symmetric secrecy capacity region of the $2$-user SLDIC,  when $\frac{1}{2} < \alpha \leq \frac{3}{4}$ and
$C=0$, remains an open problem.
\item It is possible to extend the achievable scheme based on interference cancelation (involving exchange
  of data bits between the transmitters) as well as the scheme based on transmission of
  random bits to the asymmetric case for the deterministic model. However, it is not straightforward  to extend the
  achievable schemes which rely
  on the exchange of both data and random bits  to the asymmetric case.  The extension requires a careful 
  re-working of a scheme for sharing
  random bits and data bits in the asymmetric setting.
\end{enumerate}
In the following section, outer bounds for the GSIC are presented.
%Achievable scheme for SLDIC ends here.
%%%%%%%%%%%%%%%%%%%%%%%%%%%%%%%%%%%%%%%%%
\section{GSIC: Outer Bounds}\label{sec:outerGaussian}
In this section, the outer bounds on the secrecy rate for the GSIC with limited-rate transmitter cooperation are
stated as Theorems~\ref{th:theorem_GSIC_outer1}-\ref{th:theorem_GSIC_outer3}. The extension of the outer
bounds from the deterministic model to the Gaussian model is non-trivial, because of the following well known
differences between the models:
 \begin{enumerate}
 \item In the deterministic model, interference or superposition of signals is modeled using the XOR operation.
 Hence, the levels do not interact with each other.
 \item In the deterministic model, noise is modeled using truncation.
 \item In the Gaussian model, due to finite rate cooperation between the transmitters, the differential entropy
 terms contain discrete as well as continuous random variables. This makes the derivation of the outer
 bounds more difficult in the Gaussian case.
 \end{enumerate}

Due to the above differences, the partitioning used in the derivation of the outer bounds for the deterministic
case is not directly applicable to the Gaussian case. To overcome this problem, either analogous quantities
that serve as side-information need to be found to mimic the partitioning of the encoded messages/outputs,
or the bounding steps need to be modified taking cue from the deterministic model. This is discussed in detail
in this section.

The outer bound derived in Theorem \ref{th:theoremSLDIC-outer1} partitions the encoded message into two
parts: $\xbold_{ia}^N$ (received at receiver~$j$, $j \neq i$) and $\xbold_{ib}^N$ (not received at receiver~$j$, $j \neq i$).
However, it is not possible to partition the message in this way for the Gaussian case. Hence, in the derivation
of Theorem~\ref{th:theorem_GSIC_outer1}, $\sbold_i^N = h_c \xbold_i^N + \zbold_j^N$ $(j \neq i)$ is used as
a proxy for $\xbold_{ia}^N$. In this section, the following notation is used: $\text{SNR} \triangleq h_d^2 P$,
$\text{INR} \triangleq h_c^2 P$ and $\rho \triangleq E[\xbold_1 \xbold_2]$.
\begin{theorem}\label{th:theorem_GSIC_outer1}
The symmetric rate of the $2$-user GSIC with limited-rate transmitter cooperation and secrecy constraints at the receiver is upper bounded as follows:
\begin{align}
& R_s \leq  \displaystyle\max_{0 \leq |\rho| \leq 1} \frac{1}{3}\lsqb 2C_G + 0.5\log \text{det}\lb \Sigma_{\ybar|\sbar}\rb + 0.5\log\lb 1 + \SNRsum + 2\rho\SNRprod\rb\rsqb, \label{eq:theorem_GSIC_outer1}
\end{align}
where $\Sigma_{\ybar|\sbar} = \Sigma_{\ybar} - \Sigma_{\ybar,\sbar}\Sigma_{\sbar}^{-1}\Sigma_{\ybar,\sbar}^{T}$,
\begin{align}
& \Sigma_{\ybar}\!\! =\! \!\lsqb\begin{array}{ll}
1 + \SNRsum + 2\rho\SNRprod & 2\SNRprod \nonumber \\ & + \rho(\SNRsum) \\
%Next row elements
2\SNRprod + \rho(\SNRsum) & 1 + \SNRsum \\ & + 2\rho\SNRprod
           \end{array}\rsqb, 
& \Sigma_{\sbar} =  \lsqb\begin{array}{ll}
	1 + \text{INR} & \rho\text{INR} \\
	%Next row elements
	\rho\text{INR}   & 1 + \text{INR}
\end{array}\rsqb, \nonumber \\
& \Sigma_{\ybar, \sbar} = \lsqb\begin{array}{ll}
\SNRprod + \rho \text{INR} & \text{INR} + \rho\SNRprod \\
%Next row elements
\text{INR} + \rho\SNRprod  & \SNRprod + \rho \text{INR}
           \end{array}\rsqb, \text{ and } \nonumber 
\end{align} and $\det(\cdot)$ represents the determinant of a matrix.
\end{theorem}
\begin{proof}
The proof is provided in Appendix \ref{sec:appendouter1}.
\end{proof}
\textit{Remarks:}
\begin{enumerate}
   \item The outer bound in Theorem~\ref{th:theorem_GSIC_outer1} for the Gaussian model can be extended
   to obtain an outer bound on $2R_1 +  R_2$ under the asymmetric setting. The
   outer bound becomes\footnote{With a slight abuse of notation, $C_{ij}$ has been used to represent the capacity
   of the cooperative link from transmitter~$i$ to transmitter~$j$ for both the deterministic and the Gaussian models in
   the asymmetric case.}
 \begin{align}
& 2R_1 + R_2  \leq \displaystyle\max_{0 \leq |\rho| \leq 1}  C_{12} + C_{21}  + 0.5\log\lb 1 + \SNRt_1 + \INRt_1
+ 2\rho\sqrt{\SNRt_1\: \INRt_1}\rb \nonumber \\
& \qquad \qquad \qquad + 0.5\log \text{det}\lb \Sigma_{\ybar|\sbar}\rb, \label{eq:theorem_GSIC_outer1_2}
\end{align}
where $\Sigma_{\ybar|\sbar} = \Sigma_{\ybar} - \Sigma_{\ybar,\sbar}\Sigma_{\sbar}^{-1}\Sigma_{\ybar,\sbar}^{T}$,
\begin{align}
& \Sigma_{\ybar} = \lsqb\begin{array}{ll}
\Sigma_{\ybar, 11} & \Sigma_{\ybar, 12}\\
%Next row elements
\Sigma_{\ybar, 21}  &  \Sigma_{\ybar, 22}\end{array}\rsqb,  \text{ and }  \Sigma_{\sbar} =  \lsqb\begin{array}{ll}
1 + \INRt_2 & \rho\sqrt{\INRt_1 \: \INRt_2} \\
%Next row elements
\rho\sqrt{\INRt_1\: \INRt_2}  & 1 + \INRt_1
\end{array}\rsqb, \nonumber \\
%New line
& \Sigma_{\ybar, 11} \triangleq 1 + \SNRt_1 + \INRt_1 + 2\rho\sqrt{\SNRt_1\: \INRt_1}, \nonumber \\
& \Sigma_{\ybar, 12} \triangleq \sqrt{\SNRt_1 \: \INRt_2} + \sqrt{\SNRt_2\:\INRt_1}   +
\rho \lb \sqrt{\SNRt_1\: \SNRt_2} + \sqrt{\INRt_1 \: \INRt_2}\rb,  \nonumber \\
& \Sigma_{\ybar, 21} \triangleq \sqrt{\SNRt_1 \:\INRt_2} + \sqrt{\SNRt_2\:\INRt_1}  + \rho \lb \sqrt{\SNRt_1\: \SNRt_2} + \sqrt{\INRt_1\: \INRt_2}\rb \nonumber \\
& \text{ and } \Sigma_{\ybar, 22} \triangleq 1 + \SNRt_2 + \INRt_2 + 2\rho\sqrt{\SNRt_2\: \INRt_2}, \nonumber \\
& \Sigma_{\ybar, \sbar}\!\!=\!\!\lsqb\begin{array}{ll}
\sqrt{\SNRt_1 \: \INRt_2} + \rho \sqrt{\INRt_1 \: \INRt_2} & \rho \sqrt{\SNRt_1 \: \INRt_1}   + \INRt_1 \\ \\
%Next row elements
\rho\sqrt{\SNRt_2 \: \INRt_2} + \INRt_2  & \sqrt{\SNRt_2 \: \INRt_1}  + \rho \sqrt{\INRt_1 \: \INRt_2}\end{array}\rsqb, \nonumber \\
\end{align}
where $\SNRt_1\triangleq h_{11}^2 P_1$, $\SNRt_2 \triangleq h_{22}^2 P_2$, $\INRt_{1} \triangleq h_{12}^2
P_2$, and $\INRt_{2} \triangleq h_{21}^2 P_1$.
\item Using a similar approach as used in the proof of
Theorem~\ref{th:theorem_GSIC_outer1}, an outer bound on $R_1 + 2 R_2$ can be
obtained.
\end{enumerate}

The outer bound on the secrecy rate presented in the following theorem is based on the idea used in deriving
outer bounds in Theorems~\ref{th:theoremSLDIC-outer2} and \ref{th:theoremSLDIC-outer3} for case of the
SLDIC. But, in the Gaussian setting, it is not possible to partition the encoded message as was done for
the SLDIC. For example, in Theorem~\ref{th:theoremSLDIC-outer2}, a part of the output at receiver~$2$
which does not contain the signal sent by transmitter~$1$ is provided as side information to receiver~$1$.
Hence, the approach used in the derivation of the outer bound in case of SLDIC cannot be directly used for
the Gaussian case. To overcome this problem, for the Gaussian case, first $\xbold_2^N$ is provided as side
information to receiver~$1$; this eliminates the interference caused by transmitter~$2$. Then, the receiver~$1$
is provided with $\ybold_2^N$ as side-information. The outer bound on the symmetric secrecy rate is stated
in the following theorem.
\begin{theorem}\label{th:theorem_GSIC_outer2}
The symmetric rate of the $2$-user GSIC with limited-rate transmitter cooperation and secrecy constraints at 
the receiver is upper bounded as follows:
\begin{align}
& R_s \!\leq\!  \displaystyle\max_{0 \leq |\rho| \leq 1} \!\lsqb  0.5\log\lb 1 + \frac{\text{SNR} + \text{SNR}^2(1-\rho^2)}{1 + \text{SNR} + \text{INR} + 2\rho\sqrt{\text{SNR}\:\text{INR}}} \rb  + 2C_G\rsqb. \label{eq:theorem_GSIC_outer2}
\end{align}
%where $\text{SNR} \triangleq h_d^2 P$ and $\text{INR} \triangleq h_c^2 P$.
\end{theorem}
\begin{proof}
The proof is provided in Appendix \ref{sec:appendouter2}.
\end{proof}
\textit{Remark:}
The outer bound in Theorem~\ref{th:theorem_GSIC_outer2}  can be extended to the asymmetric
setting, and the outer bound becomes
\begin{align}
& R_1 \leq \nonumber \\
&\displaystyle\max_{0 \leq |\rho| \leq 1} \lsqb  0.5\log\lb 1
+ \frac{ \SNRt_1 + \SNRt_1\: \SNRt_2(1-\rho^2)}{1 + \SNRt_2 + \INRt_2 + 2\rho\sqrt{\SNRt_2 \: \INRt_2}} \rb + C_{12} + C_{21} \rsqb. \label{eq:theorem_GSIC_outer2_2}
\end{align}
Using a similar approach as used in the proof of
Theorem~\ref{th:theorem_GSIC_outer2}, an outer bound on $R_2$ can be
obtained.

%Third outer bound
The outer bound presented in the following theorem is similar to the outer bound presented in
Theorem~\ref{th:theoremSLDIC-outer4} in case of the SLDIC. This kind of outer bound exists in the
literature (see, for example, \cite{shamai-TIT-2012}), but for the sake of completeness, it is presented in the
following theorem. Unlike the results in Theorems~\ref{th:theorem_GSIC_outer1} and \ref{th:theorem_GSIC_outer2},
this outer bound does not depend on the capacity of the cooperative link.
\begin{theorem}\label{th:theorem_GSIC_outer3}
The symmetric rate of the $2$-user GSIC with limited-rate transmitter cooperation and secrecy constraints at the receiver is upper bounded as follows:
\begin{align}
& R_s \leq \displaystyle\max_{0 \leq |\rho| \leq 1} 0.5\log\lsqb 1 + \SNRsum + 2\rho\SNRprod -  \frac{(2\SNRprod + \rho(\SNRsum))^2}{1 + \SNRsum + 2\rho\SNRprod} \rsqb. \label{eq:theorem_GSIC_outer3}
\end{align}
%where $\text{SNR} \triangleq h_d^2 P$ and $\text{INR} \triangleq h_c^2 P$.
\end{theorem}
\begin{proof}
The proof is provided in Appendix~\ref{sec:appendouter3}.
\end{proof}
\textit{Remark:} The outer bound in Theorem~\ref{th:theorem_GSIC_outer3}  can be extended to the
asymmetric setting, and the outer bound becomes
\begin{align}
& R_1 \leq \displaystyle\max_{0 \leq |\rho| \leq 1} 0.5\log \lb 1 + \SNRt_1 + \INRt_1 + 2\rho\sqrt{\SNRt_1 \: \INRt_1} \right. \nonumber \\
&\left. - \frac{\lb \sqrt{\SNRt_1\: \INRt_2} \!+\! \sqrt{\SNRt_2\: \INRt_1} \!+\! \rho\lb \SNRt_{12} \!+\!  \INRt_{12}\rb\rb^2}{1 + \SNRt_2 + \INRt_2 + 2\rho\sqrt{\SNRt_2 \:
\INRt_2}}\rb, 
\end{align}
where $\SNRt_{12} \triangleq \sqrt{\SNRt_1\:\SNRt_2}$ and $\INRt_{12}  \triangleq  \sqrt{\INRt_1\: \INRt_2}$.

Using a similar approach as used in the proof of
Theorem~\ref{th:theorem_GSIC_outer3}, an outer bound on $R_2$ can be
obtained.
\subsection{Relation between the outer bounds for SLDIC and GSIC}\label{sec:OB-det-Gaussian}
In the following, it is shown that at high SNR and INR, the outer bounds developed for the Gaussian case
(Theorems~\ref{th:theorem_GSIC_outer1}~and~\ref{th:theorem_GSIC_outer2}) are approximately equal to
the outer bounds for the SLDIC, when $C=0$.\footnote{When $C \neq 0$, from Fig. \ref{fig:compare_outer_GSIC},
it appears that the approximate equivalence of the bounds for the GSIC and SLDIC will still hold.} In
Fig.~\ref{fig:compare_outer_GSIC}, the outer bounds on the achievable secrecy rate in Theorems~\ref{th:theorem_GSIC_outer1}-\ref{th:theorem_GSIC_outer3}
are compared as a function of $\alpha$, for $C_G=0$ and $C_G=1$, when $P=20$~dB and $h_d=1$. This validates that the approaches used in obtaining outer bounds in the two models are consistent with each other.
%begin figure
\begin{figure}%[t]
	\begin{center}
		\includegraphics[width=4.2in, height=2.5in]{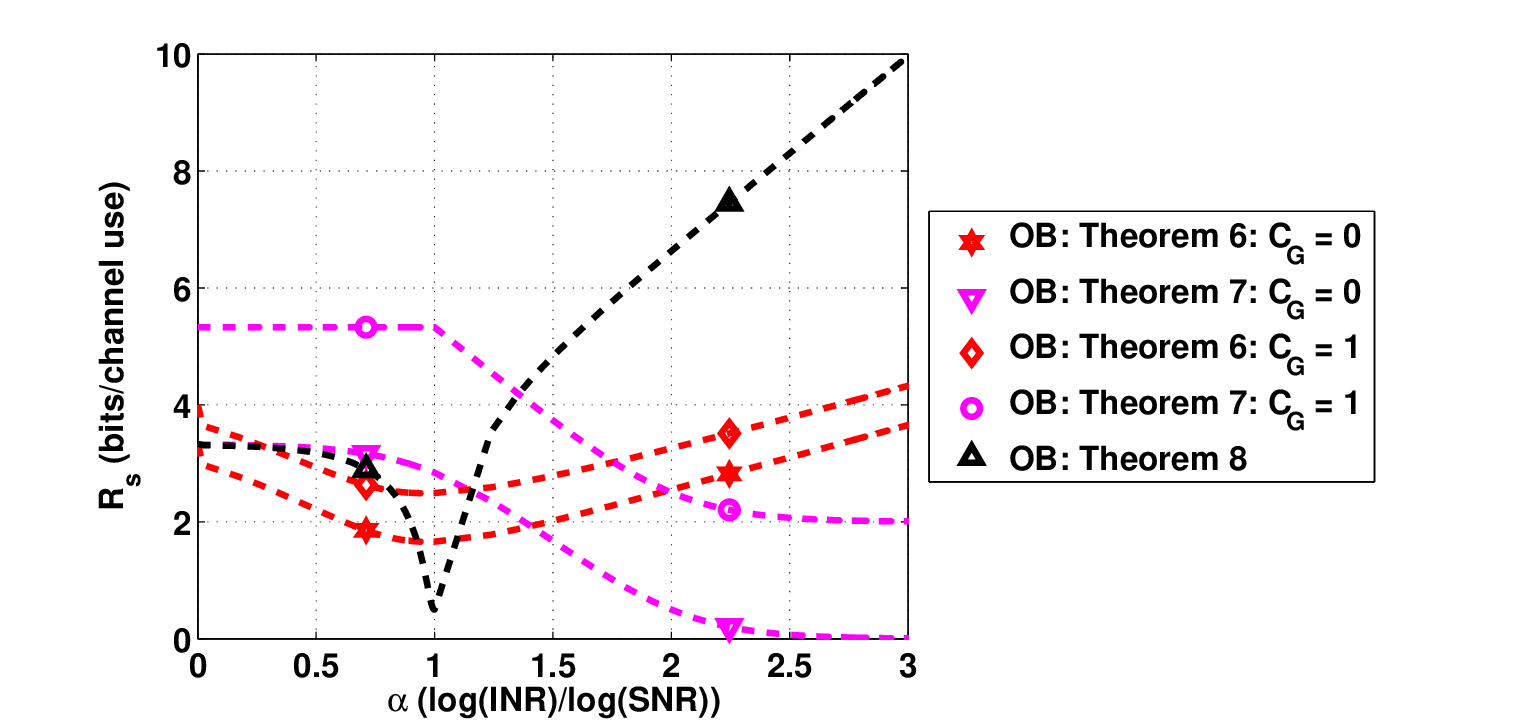}
		\vspace{-0.25cm}
		% ratecomp_m5n3.eps: 0x0 pixel, 300dpi, 0.00x0.00 cm, bb=   73   252   521   589
		\caption{Comparison of different outer bounds on the secrecy rate for the GSIC with $P=20$~dB and $h_d=1$. In the legend, OB stands for the outer bound.} \label{fig:compare_outer_GSIC}
	\end{center}
	%\pullUp
\end{figure}
%end figure

In the following, for ease of presentation, it is assumed that $0.5 \log \text{SNR}$ and $0.5 \log \text{INR}$
are integers. Recall that, the parameters $m$ and $n$ of the SLDIC are related to the GSIC as
$m = (\lfloor 0.5 \log \text{SNR}\rfloor)^{+}$ and $n =  (\lfloor 0.5 \log \text{INR}\rfloor)^{+}$, respectively.
\subsubsection{Outer bound in Theorem~\ref{th:theorem_GSIC_outer1}} Consider the following bound in the proof of Theorem~\ref{th:theorem_GSIC_outer1}, when $C_G=0$:
\begin{align}
N[R_1 + 2 R_2] & \leq h(\ybold_1^N) + h(\ybold_1^N,\ybold_2^N|\sboldtilde_1^N, \sboldtilde_2^N) - h(\widetilde{\zbold}_1^N) -h(\widetilde{\zbold}_2^N)  - h(\zbold_1^N) + N\epsilon'', \nonumber \\
%New line
& \leq h(\ybold_1^N) + h(\ybold_1^N|\sboldtilde_1^N) + h(\ybold_2^N| \sboldtilde_2^N)- h(\widetilde{\zbold}_1^N) -h(\widetilde{\zbold}_2^N)  - h(\zbold_1^N) + N\epsilon'', \nonumber \\
%New line
\text{or } R_1 + 2 R_2 
&  \leq 0.5 \lsqb \log(1 + \text{SNR} + \text{INR}) + 2\log\lb1 + \frac{\text{SNR} + \text{INR}}{1 + \text{INR}}\rb\rsqb, \nonumber \\
%New line
& \approx 0.5 [\log( \text{SNR} + \text{INR}) + 2\log(\text{SNR} + \text{INR}) - 2\log \text{INR} ], \label{eq:discussion2}
\end{align}
where the last equation is obtained for high SNR and INR. Using the above mentioned definitions of $m$ and $n$, (\ref{eq:discussion2}) reduces to:
\begin{align}
R_s \leq \left\{\begin{array}{l l}
    \frac{1}{3}\lsqb 3m-2n\rsqb &\mbox{for $\alpha \leq 1$ } \\ 	
    \frac{n}{3} & \mbox{for $\alpha > 1$}.
 \end{array}\right. \label{eq:discussion2a}
\end{align}
The above is the same as the outer bound for the SLDIC in Theorem~\ref{th:theoremSLDIC-outer1}, when
$C=0$.
%Case 1
% \textbf{Case 1 $(\alpha \leq 1 )$:} In this case, (\ref{eq:discussion2}) reduces to the following:
% \begin{align}
% R_s & \leq \frac{1}{3}[3m-2n]. \label{eq:discussion3}
% \end{align}
%  One can note that, the outer bound in (\ref{eq:discussion3}) corresponds to the outer bound for SLDIC in Theorem~\ref{th:theoremSLDIC-outer1} for the $\alpha \leq 1$ case.\\
% %Case 2
% \textbf{Case 2 $(\alpha > 1 )$:} In this case, (\ref{eq:discussion2}) reduces to the following:
% \begin{align}
% R_s & \leq \frac{n}{3}. \label{eq:discussion4}
% \end{align}
% \textbf{When $\alpha \leq 1$, the outer bound in Theorem~\ref{th:theorem_GSIC_outer1} decreases with increase in $\alpha$ as the terms $h(\ybold_1)$ and $h(\ybold_1,\ybold_2|\sboldtilde_1, \sboldtilde_2)$ decrease with increase in the strength of the interference. With further increase in $\alpha$, the term $h(\ybold_1,\ybold_2|\sboldtilde_1, \sboldtilde_2)$ further decreases, but the term $h(\ybold_1)$ continues to increase with INR. The same behavior can be observed in Fig. \ref{fig:compare_outer_GSIC}.}
\subsubsection{Outer bound in Theorem~\ref{th:theorem_GSIC_outer2}} When $C_G=0$, the outer bound in Theorem~\ref{th:theorem_GSIC_outer2} reduces to the following, in the high SNR and high INR regime:
\begin{align}
R_s &\leq \displaystyle 0.5\log\lb 1 + \frac{\text{SNR} + \text{SNR}^2}{1 + \text{SNR} + \text{INR}} \rb, \nonumber \\
& = 0.5\log \lb 1 + 2\SNRt + \INRt + \SNRt^2\rb   - 0.5\log\lb 1 + \SNRt + \INRt\rb, \nonumber \\
%New eqn
& \approx 0.5 \lsqb \log\max\lb \SNRt^2, \INRt\rb   - \log(\text{INR})\rsqb.  \label{eq:discussion5}
\end{align}
Using the above mentioned definitions of $m$ and $n$, (\ref{eq:discussion5}) reduces to:
\begin{align}
R_s \leq \left\{\begin{array}{l l}
    2m-n &\mbox{for $1 < \alpha < 2$ } \\ 	
    0 & \mbox{for $\alpha \geq 2$}.
 \end{array}\right. \label{eq:discussion2b}
\end{align}
The above is the same as the outer bound for the SLDIC in Theorem~\ref{th:theoremSLDIC-outer2},
when $C=0$.

In the following section, achievable schemes on the secrecy rate for the GSIC are presented.
%%%%%%%%%%%%%%%%%%%%%%%%%%%%%%%%%%%%%%%%%
%Achievable scheme for Gaussian case begins here
\section{GSIC: Achievable Schemes}\label{sec:achGaussian}
\subsection{Weak/moderate interference regime $(0 \leq \alpha \leq 1)$}\label{sec:weak-mod-achGaussian}
The achievable scheme is based on the approach used in Secs.~\ref{sec:SLDIC-ach-weak} and
\ref{sec:SLDIC-ach-weak-mod}, for the SLDIC. Again, the achievable scheme proposed for the deterministic model
is not directly applicable  to the Gaussian case due to the differences between the two models
mentioned earlier. In the case of the SLDIC, the achievable scheme used a
combination of interference cancelation, transmission of random bits, or both, depending on the value of
$\alpha$ and $C$. That scheme is extended to the Gaussian setting, as follows.

The message at transmitter $i$ is split into two parts: a \emph{non-cooperative private} part $(w_{pi})$
and a \emph{cooperative private} part $(w_{cpi})$. The non-cooperative private message is encoded using
stochastic encoding \cite{wyner-bell-1975}, and the cooperative private part is encoded using Marton's coding
scheme~\cite{marton-TIT-1979, wang-TIT-2011}. For the SLDIC, data bits transmitted at the lower levels $[1:m-n]$ are not received at the unintended receiver. Hence, these data bits remain secure. However, there is no one-to-one analogue of this in the GSIC, so the scheme does not extend directly. In the Gaussian case, for the non-cooperative private part, stochastic encoding is used to ensure secrecy. The transmitter~$i$ encodes the non-cooperative part $w_{pi} \in \mathcal{W}_{pi} = \{1,2,\ldots,2^{NR_{pi}}\}$ into $\xbold_{pi}^N$. A stochastic encoder is specified by a  conditional probability density $f_{pi}(x_{pi,k}|w_{pi})$ $(i=1,2)$, where $x_{pi} \in \mathcal{X}_{pi}$ and $w_{pi} \in \mathcal{W}_{pi}$, and it satisfies the following condition:
\begin{align}
\mysum_{x_{pi,k} \in \mathcal{X}_{pi}}f_{pi}(x_{pi,k}|w_{pi}) =1, \quad \qquad k=1,2,\ldots,N, \label{eq:weakint1}
\end{align}
where $f_{pi}(x_{pi,k}|w_{pi})$ is the probability that $x_{pi,k}$ is output by the stochastic encoder, when
message $w_{pi}$ is to be transmitted.

The cooperative private message $\wcpone \in \mathcal{W}_{cp1} = \{1,2,\ldots, 2^{NR_{cp1}}\}$ and
$\wcptwo \in \mathcal{W}_{cp2} = \{1,2,\ldots, 2^{NR_{cp2}}\}$ at transmitters $1$ and $2$ are encoded using Marton's coding scheme. One of the key aspects of the achievable scheme is in the proposed method for encoding  of the cooperative private message, which is chosen to ensure that this part of the message is completely canceled at the non-intended receiver. This corresponds to the scheme used for interference cancelation in the SLDIC. This serves two purposes: it cancels interference over the air, and simultaneously ensures secrecy. The transmitter~$2$ sends a dummy message along with the cooperative private message and the non-cooperative private message. Note that stochastic encoding is sufficient to ensure secrecy of the non-cooperative private message. However, the additional dummy message sent by the transmitter~$2$ can enhance the achievable secrecy rate, depending on the values of $\alpha$ and
$C$. In this case, both the receivers treat the dummy message as noise.
\subsubsection{Encoding and decoding}\label{sec:weak-encoder}
For the non-cooperative private part, transmitter $i$ $(i=1,2)$ generates $2^{N(R_{pi} + R_{pi}')}$ i.i.d. sequences of length $N$ at random according to
\begin{align}
P(\xbold_{pi}^N) = \prod_{k=1}^N P(x_{pi,k}). \label{eq:weakint2}
\end{align}
The $2^{N(R_{pi} + R_{pi}')}$ codewords in the codebook $C_{pi}$ are randomly grouped into $2^{NR_{pi}}$ bins, with each bin containing $2^{NR_{pi}'}$ codewords. Any codeword in $C_{pi}$ is indexed as $\xbold_{pi}^N(w_{pi},w_{pi}')$ for $w_{pi} \in \mathcal{W}_{pi}$ and $w_{pi}' \in \mathcal{W}_{pi}' = \{1,2,\ldots,2^{NR_{pi}'}\}$. In order to transmit $w_{pi}$, transmitter~$i$ selects a $w_{pi}' \in \mathcal{W}_{pi}'$ randomly and transmits the codeword $\xbold_{pi}^N(w_{pi},w_{pi}')$.

In order to transmit a dummy message, transmitter~$2$ generates $2^{NR_{d2}}$ i.i.d. sequences of length $N$ at random according to
\begin{align}
P(\xbold_{d2}^N) = \prod_{k=1}^N P(x_{d2,k}). \label{eq:weakint3}
\end{align}
The $2^{NR_{d2}}$ codewords in codebook $C_{d2}$ are randomly grouped into $2^{NR_{d2}'}$ bins, with
each bin containing $2^{NR_{d2}''}$ codewords (and thus $R_{d2}= R_{d2}' + R_{d2}''$). Any codeword in
$C_{d2}$ is indexed as $\xbold_{d2}^N(w_{d2}',w_{d2}'')$, where $w_{d2}' \in \mathcal{W}_{d2}' = \{1,2,\ldots,2^{NR_{d2}'}\}$
and $w_{d2}'' \in W_{d2}'' = \{1,2,\ldots,2^{NR_{d2}''}\}$. During encoding, transmitter~$2$ selects
$w_{d2}' \in \mathcal{W}_{d2}'$ and $w_{d2}'' \in \mathcal{W}_{d2}''$ independently at random and sends
the codeword $\xbold_{d2}^N(w_{d2}',w_{d2}'')$.

For the cooperative private message, the transmitter generates the cooperative private vector codeword
$\xbold_{cp}^N(\wcpone,\wcptwo)$ based on Marton's coding scheme according to
\begin{align}
P(\xbold_{cp}^N,\ubold_{1}^N,\ubold_{2}^N) = \prod_{k=1}^N P(\xbold_{cp,k},u_{1,k},u_{2,k}), \label{eq:weakint3a}
\end{align}
where $\ubold_1^N$ and $\ubold_2^N$ are auxiliary codewords. The choice of these codewords are
discussed in the proof of Theorem \ref{th:theorem_ach_weakint}. Finally, the non-cooperative private
codeword and cooperative private codeword are superimposed to form the transmit codeword at the
transmitter~$1$ and the non-cooperative private codeword, cooperative private codeword and the dummy
message codeword are superimposed to form the transmit codeword at the transmitter~$2$:
\begin{align}
& \xbold_{1}^N(\wcpone,\wcptwo,\wprvone,\wprvone') = \underline{\xbold}_{cp}^N[1] + \xbold_{p1}^N, \text{and }\nonumber \\
 &\xbold_{2}^N(\wcpone,\wcptwo,\wprvtwo,\wprvtwo',w_{d2}',w_{d2}'') \!=\! \underline{\xbold}_{cp}^N[2] +  \xbold_{p1}^N + \xbold_{d2}^N,  \label{eq:weakint4}
\end{align}
where  $\underline{\xbold}_{cp}^N$ is defined in (\ref{eq:pfhighint54}) in the proof of Theorem \ref{th:theorem_ach_weakint}.

For decoding, receiver~$i$ looks for a unique message tuple  such that $(\ybold_{i}^N, \ubold_{i}^N(\hat{\widetilde{w}}_{cpi}), \xbold_{pi}^N(\hat{w}_{pi},\hat{w}_{pi}'))$ is jointly typical. Based on the above coding strategy, the
 following theorem gives the achievable result on the secrecy rate.
\begin{theorem}\label{th:theorem_ach_weakint}
In the weak/moderate interference regime, the following rate is achievable for the GSIC with limited-rate transmitter cooperation and secrecy constraints at the receivers:
\begin{align}
& R_{1} + R_{p1}' \leq I(\ubold_1,\xbold_{p1};\ybold_1), \nonumber \\
%New eqn
& R_{1} + R_{p1}' \leq I(\xbold_{p1};\ybold_1|\ubold_1) + \min\lcb C_G, I(\ubold_1;\ybold_1|\xbold_{p1})\rcb, \label{eq:weakint5}
%New eqn
%New eqn
%& R_{1} + R_{p1}' \leq I(\xbold_{p1};\ybold_1|\ubold_1) + C,
\end{align}
where $R_{p1}' = I(\xbold_{p1};\ybold_2|\xbold_{p2},\ubold_2)$. The achievable secrecy rate for the user~$2$
can be obtained by exchanging the indices $1$ and $2$ in (\ref{eq:weakint5}).
\end{theorem}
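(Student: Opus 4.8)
The plan is to establish \eqref{eq:weakint5} by the standard two-part argument for wiretap-type achievability: a reliability analysis at receiver~$1$ that yields the rate constraints, followed by a secrecy (equivocation) analysis that shows the leakage $\frac{1}{N}I(W_1;\ybold_2^N)\to 0$ and thereby justifies the randomization overhead $R_{p1}'$. Throughout I would condition on the cooperating signals and use the Markov relation \eqref{eq:thouter0}, and at receiver~$1$ I would treat the cross terms from transmitter~$2$ (the codewords $\xbold_{p2}^N$ and $\xbold_{d2}^N$) as noise, which is legitimate because the superposition in \eqref{eq:weakint4} is built so that transmitter~$2$'s cooperative private contribution cancels at receiver~$1$.

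For reliability I would analyze the joint-typicality decoder that returns the unique tuple making $(\ybold_1^N,\ubold_1^N,\xbold_{p1}^N)$ jointly typical, where $\ubold_1^N$ carries the cooperative private index and $\xbold_{p1}^N$ carries $\wprvone$ together with the randomization index $\wprvone'$. Splitting the error event by which codeword is wrong, the packing lemma gives three constraints: the fully-wrong event gives $\tildercpone+R_{p1}+R_{p1}'\le I(\ubold_1,\xbold_{p1};\ybold_1)$; the event with $\ubold_1^N$ correct but $\xbold_{p1}^N$ wrong gives $R_{p1}+R_{p1}'\le I(\xbold_{p1};\ybold_1|\ubold_1)$; and the event with $\ubold_1^N$ wrong gives $\tildercpone\le I(\ubold_1;\ybold_1|\xbold_{p1})$. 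I would then use Marton's mutual-covering lemma to guarantee that a jointly typical auxiliary pair $(\ubold_1^N,\ubold_2^N)$ exists inside the prescribed bins, noting that the coordination overhead is supplied by the noiseless link, so that the usable cooperative private rate obeys $R_{cp1}\le\min\lcb C_G, I(\ubold_1;\ybold_1|\xbold_{p1})\rcb$. Writing $R_1=R_{p1}+R_{cp1}$ and combining with $R_{cp1}\le\tildercpone$ then turns the first constraint into $R_1+R_{p1}'\le I(\ubold_1,\xbold_{p1};\ybold_1)$ and the sum of the other two into $R_1+R_{p1}'\le I(\xbold_{p1};\ybold_1|\ubold_1)+\min\lcb C_G, I(\ubold_1;\ybold_1|\xbold_{p1})\rcb$, which is exactly \eqref{eq:weakint5}.

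For secrecy I would bound $I(W_1;\ybold_2^N)$ after splitting $W_1=(\wcpone,\wprvone)$. The cooperative private index leaks nothing, because the symmetric cancellation in \eqref{eq:weakint4} removes transmitter~$1$'s cooperative private signal from $\ybold_2^N$; it remains to control the leakage about $\wprvone$. Here I would use the stochastic-encoding bound: since $\wprvone'$ is drawn uniformly at rate $R_{p1}'=I(\xbold_{p1};\ybold_2|\xbold_{p2},\ubold_2)$, receiver~$2$'s residual observation of $\xbold_{p1}^N$ is saturated, and a conditional-typicality equivocation computation---with the dummy codeword $\xbold_{d2}^N$ furnishing the extra confusion that makes the chosen $R_{p1}'$ suffice---gives $\frac{1}{N}H(\wprvone\mid\ybold_2^N)\ge\frac{1}{N}H(\wprvone)-\epsilon$, i.e.\ weak secrecy. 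Eliminating $R_{p1}'$ between the reliability and secrecy bounds yields the achievable secrecy rate for user~$1$, and the user~$2$ region follows by interchanging the indices.

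I expect the secrecy step, not the reliability step, to be the main obstacle. Two things must be verified simultaneously: that the auxiliary construction in \eqref{eq:weakint4} genuinely cancels the unwanted cooperative private signal at the unintended receiver while leaving the intended signal decodable, and that the single randomization rate $R_{p1}'=I(\xbold_{p1};\ybold_2|\xbold_{p2},\ubold_2)$ is exactly enough to drive the leakage to zero in the presence of three superimposed codebooks (cooperative private, non-cooperative private, and dummy). Keeping the conditioning consistent---on $\ubold_1$ in the reliability constraints and on $(\xbold_{p2},\ubold_2)$ at the eavesdropper in the secrecy constraints---is the delicate bookkeeping that makes the two halves assemble into the stated region.
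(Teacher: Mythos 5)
Your proposal is correct and follows essentially the same route as the paper's proof: a joint-typicality decoding analysis yielding the three packing constraints, Marton's covering step with the $C_G$ cap on the cooperative private rate (the paper performs the final combination via Fourier--Motzkin rather than by hand), the linear-algebraic choice of $\ubold_1,\ubold_2$ that nulls the cooperative private signal at the unintended receiver, and an equivocation computation in which $R_{p1}'$ together with the dummy message saturates receiver~$2$'s observation conditioned on $(\xbold_{p2},\ubold_2)$. The only detail your sketch compresses is the mechanism the paper uses to make the dummy message ``furnish the extra confusion'': the dummy codebook is binned as $R_{d2}=R_{d2}'+R_{d2}''$, a genie gives $(W_{p1},W_{d2}'')$ to a fictitious joint decoder of the randomization indices $(W_{p1}',W_{d2}')$, and Fano's inequality plus a single-letterization lemma close the bound with $R_{p1}'+R_{d2}'=I(\xbold_{p1},\xbold_{d2};\ybold_2|\ubold_2,\xbold_{p2})$.
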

\begin{proof}
The proof is provided in Appendix \ref{sec:appendweakintf}.
\end{proof}
%Begin of corollary
The achievable \emph{symmetric} secrecy rate for the GSIC is stated in the following Corollary.
%In the first time slot, the transmitter $1$ sends non-cooperative private message and cooperative private message and the transmitter $2$ sends non-cooperative private message and cooperative private message along with dummy message. In the second time slot, the encoding scheme between the transmitter $1$ and $2$ is reversed.
\begin{corollary}\label{cor:cor_ach_weakint}
Using the achievable result in Theorem \ref{th:theorem_ach_weakint} and time-sharing between transmitters, following symmetric secrecy rate is achievable for the GSIC with limited-rate transmitter cooperation:
\begin{align}
R_s = \frac{1}{2}\lsqb R_1^*(1) + R_1^*(2)\rsqb, \label{eq:cor_ach_weakint1}
\end{align}
where $R_1^*(1)$ and $R_1^*(2)$ are the achievable secrecy rates for transmitter~$1$ in the first and second time slots, respectively, which are obtained by maximizing $R_s$ over parameters $\theta_i, \eta_i$ and $\beta_i$  $(i = 1, 2)$. The achievable rates for users~$1$ and $2$ in the first time slot are as follows:
\begin{align}
& R_1(1) \!\leq \!\left\{\begin{array}{l}
 			\!\!0.5\log\lb 1 + \frac{\sigma_u^2 + h_d^2 P_{p1}}{1 + h_{c}^2 P_{d2} + h_c^2 P_{p2}}\rb - R_{p1}',  \\
 			\!\!0.5\log\lb 1 + \frac{h_d^2P_{p1}}{1 + h_c^2P_{d2}+h_c^2P_{p2}}\rb  
 			+ \min\lcb C_G, 0.5\log\lb 1 + \frac{\sigma_u^2}{1 + h_c^2P_{d2} + h_c^2 P_{p2}}\rb \rcb-R_{p1}',\label{eq:cor_ach_weakint2}
                    \end{array}\right. \\
%For the second user
% \end{align}
% \begin{align}
& R_2(1)\!\leq \! \left\{\begin{array}{l}
 			\!0.5\log\lb 1 + \frac{\sigma_u^2 + h_d^2 P_{p2}}{1 + h_{d}^2 P_{d2} + h_c^2 P_{p1}}\rb - R_{p2}', \\
 			\!0.5\log\lb 1 + \frac{h_d^2P_{p2}}{1 + h_d^2P_{d2}+h_c^2P_{p1}}\rb + \min\lcb C_G, 0.5\log\lb 1 + \frac{\sigma_u^2}{1 + h_d^2P_{d2} + h_c^2 P_{p1}}\rb \rcb-R_{p2}',
                    \end{array}\right. \label{eq:cor_ach_weakint3}
\end{align}
where $R_{p1}' = 0.5\log\lb 1 + \frac{h_c^2 P_{p1}}{1 + h_d^2 P_{d2}}\rb$, $R_{p2}' = 0.5\log\lb 1 + \frac{h_c^2P_{p2}}{1 + h_c^2 P_{d2}}\rb$, $\sigma_u^2 \triangleq (h_d^2 - h_c^2)^2 \sigma_z^2$, $\sigma_z^2 \triangleq \frac{\theta_1}{\theta_1 + \theta_2}\frac{P_1}{h_d^2 + h_c^2}$, $P_{p1} \triangleq \frac{\theta_2}{\theta_1 + \theta_2}P_1$, $P_{p2}=\frac{\eta_1}{\eta_1 + \eta_2}P'$, $P_{d2}=\frac{\eta_2}{\eta_1 + \eta_2}P'$, $P' = (P_2 - (h_d^2 + h_c^2)\sigma_z^2)$, $P_i \triangleq \beta_i P$ $(i=1,2)$ and $0 \leq (\theta_i, \eta_i, \beta_i) \leq 1$. The rate equations for the second time slot can be obtained by exchanging indices  $1$ and $2$ in (\ref{eq:cor_ach_weakint2}) and (\ref{eq:cor_ach_weakint3}).
\end{corollary}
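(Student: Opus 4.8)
The plan is to specialize the general rate region of Theorem~\ref{th:theorem_ach_weakint} to jointly Gaussian inputs with a zero-forcing precoding of the cooperative private part, evaluate each mutual information term in closed form, and then symmetrize by time-sharing between the two transmitters.

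\emph{Step 1 (Choice of the auxiliary and transmit distributions).} First I would fix the Gaussian inputs. The non-cooperative private codewords $\xbold_{p1},\xbold_{p2}$ and the dummy codeword $\xbold_{d2}$ are taken as independent zero-mean Gaussians of variances $P_{p1}$, $P_{p2}$, and $P_{d2}$. For the cooperative private part, the key step is to realize the complete interference cancelation promised in Sec.~\ref{sec:weak-mod-achGaussian} through a linear precoding: a common cooperative private symbol $s\sim\mathcal N(0,\sigma_z^2)$ (available at both transmitters through the cooperative link) is sent as $h_d s$ by transmitter~$1$ and as $-h_c s$ by transmitter~$2$, so that at the unintended receiver the cross terms $h_c(h_d s)+h_d(-h_c s)=0$ cancel, while at the intended receiver the useful component is $(h_d^2-h_c^2)s$, carrying power $\sigma_u^2 \triangleq (h_d^2-h_c^2)^2\sigma_z^2$. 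This identifies the auxiliary codeword $\ubold_1$ of Marton's scheme with the post-cancelation useful signal and makes $\underline{\xbold}_{cp}^N$ in (\ref{eq:weakint4}) explicit. The power bookkeeping then forces $(h_d^2+h_c^2)\sigma_z^2$ to be the power each transmitter spends on the cooperative private part, which is exactly how $\sigma_z^2$, $P_{p1}$, $P_{p2}$, $P_{d2}$, and $P'$ are defined in the statement, with $\theta_i,\eta_i,\beta_i$ splitting the per-user budgets $P_1,P_2$.

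\emph{Step 2 (Evaluation of the mutual information terms).} Next I would compute the quantities appearing in (\ref{eq:weakint5}) under these distributions; each reduces to a scalar Gaussian SINR. At receiver~$1$ the useful signal is the cooperative private component (power $\sigma_u^2$) plus the non-cooperative private component (power $h_d^2 P_{p1}$), while the effective noise consists of the ambient noise together with the cross-channel contributions of the dummy and non-cooperative private signals of transmitter~$2$ (powers $h_c^2 P_{d2}$ and $h_c^2 P_{p2}$); crucially the cooperative private signal of the other user is absent because of the cancelation in Step~1. This yields $I(\ubold_1,\xbold_{p1};\ybold_1)$ as the first line of (\ref{eq:cor_ach_weakint2}); conditioning on $\ubold_1$ removes $\sigma_u^2$ from the numerator and conditioning on $\xbold_{p1}$ removes $h_d^2 P_{p1}$, giving the two terms inside the second line. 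The leakage rate sacrificed by the stochastic encoder, $R_{p1}'=I(\xbold_{p1};\ybold_2|\xbold_{p2},\ubold_2)$, is the cross-link SINR of $\xbold_{p1}$ at receiver~$2$ with the other user's codewords removed and only the dummy remaining as noise through its direct gain, which evaluates to $0.5\log\lb 1+\frac{h_c^2 P_{p1}}{1+h_d^2 P_{d2}}\rb$. The analogous computation with the roles of the two users and the direct/cross gains exchanged produces (\ref{eq:cor_ach_weakint3}) and $R_{p2}'$, the asymmetry of the noise terms $h_d^2 P_{d2}$ versus $h_c^2 P_{d2}$ being a consequence of the dummy originating only at transmitter~$2$.

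\emph{Step 3 (Symmetrization by time-sharing).} Finally I would invoke a two-slot time-sharing argument. The region above is in general asymmetric in the two users (the dummy is sent only by transmitter~$2$, and the power splits differ). Operating the scheme in slot~$1$ with the parameters that deliver $R_i^*(1)$ and in slot~$2$ with the indices $1$ and $2$ interchanged to deliver $R_i^*(2)$, and averaging, achieves the symmetric secrecy rate $R_s=\frac12\lsqb R_i^*(1)+R_i^*(2)\rsqb$ of (\ref{eq:cor_ach_weakint1}), where each $R_i^*$ is the maximum over the feasible power-allocation parameters $0\le(\theta_i,\eta_i,\beta_i)\le 1$ of the corresponding rate in (\ref{eq:cor_ach_weakint2})--(\ref{eq:cor_ach_weakint3}). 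The main obstacle I anticipate is Step~1: one must check that the proposed linear precoding is a bona fide instance of the Marton auxiliary-codeword construction used in Theorem~\ref{th:theorem_ach_weakint} (so that the abstract region genuinely applies), that it simultaneously meets both transmitters' power constraints after accounting for the $h_d^2+h_c^2$ amplification, and that the cross-term cancelation is exact rather than approximate. Once the covariance structure is pinned down correctly, Step~2 is a routine SINR calculation and Step~3 is a standard convexity argument.
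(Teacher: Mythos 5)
Your proposal follows essentially the same route as the paper's proof: the zero-forcing choice you describe in Step 1 (transmitter~$1$ sends $h_d s$, transmitter~$2$ sends $-h_c s$, yielding useful power $\sigma_u^2=(h_d^2-h_c^2)^2\sigma_z^2$ and per-transmitter cooperative power $(h_d^2+h_c^2)\sigma_z^2$) is exactly the paper's construction of $\underline{\xbold}_{cp}$, $\ubold_1$, $\ubold_2$ in (\ref{eq:pfhighint54}) and (\ref{eq:pfweakcor5}), and your Steps 2 and 3 reproduce the paper's Gaussian SINR evaluation of (\ref{eq:weakint5}) with the power splits (\ref{eq:pfweakcor7}) and the two-slot time-sharing with the dummy message alternating between the transmitters. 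The proposal is correct, including the asymmetry you note between $h_d^2 P_{d2}$ and $h_c^2 P_{d2}$ in the leakage terms.
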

\begin{proof}
The proof is provided in Appendix~\ref{sec:corappendweakintf}.
\end{proof}
%%%%%%%%%%%%%%%%%%%%%%%%%%%%%%%%%%%%%%%%%%%%%%%%%%%%%%%%%%%%%%%%%%%%%%%%%%%%%%%%%%%%%
\subsection{High/very high interference regime $(\alpha > 1)$}\label{sec:high-veryhigh-achGaussian}
The achievable scheme is based on the approach used for the SLDIC in case of high interference regime. The
achievable scheme for the SLDIC in Sec.~\ref{sec:SLDIC-ach-highint} used a combination of interference
cancelation, relaying of the other user's data bits, and transmission of random bits. In the case of the SLDIC,
as some of the interfering links are not present to the intended receiver, the levels corresponding to these
links can be directly used for the other user's data transmission. But, in the Gaussian setting, it is not possible
to relay the other user's data directly in this manner. The relationship between the corresponding achievable
schemes for the SLDIC and the GSIC will be made precise in the following paragraphs.

In the proposed scheme, user~$1$ sends a non-cooperative private message $(w_{p1})$ and a cooperative
private message $(w_{cp1})$. The other user transmits cooperative private message $(w_{cp2})$ along with
a dummy message $(w_{d2})$. For the SLDIC, the achievable scheme required transmission of random bits
for ensuring secrecy of data bits, in addition to the data bits that were sent with the help of cooperation.
Similarly, for the GSIC, the proposed scheme requires stochastic encoding and transmission of a dummy
message by the other user, in order to ensure secrecy of the non-cooperative private message sent by
user~$1$. It is important to note that stochastic encoding alone cannot ensure secrecy of the non-cooperative
private part of the message. For the cooperative private part of the message $(w_{cpi})$, the coding scheme
is the same as that mentioned in Sec.~\ref{sec:weak-mod-achGaussian}.

The transmission of the dummy message $x_{d2}$ by transmitter~$2$ can be considered as using another 
stochastic encoder $f_{d2}$, which is specified by a probability density function $f_{d2}(x_{d2,k})$, with  
$x_{d2,k}~\in~\mathcal{X}_{d2}$ and $\mysum_{x_{d2,k} \in \mathcal{X}_{d2}}f_{d2}(x_{d2,k})=1$. The rate 
$R_{d2}$ of the dummy message sent by transmitter~$2$ and the rate sacrificed by transmitter~$1$ in 
stochastic encoding in order to confuse the eavesdroppers at receivers~$1$ and $2$, respectively, are chosen 
such that  the non-cooperative private message sent by transmitter~$1$ remains secure at receiver~$2$, and receiver~$1$ is able to decode the dummy message. At transmitter~$1$, the cooperative private message and the non-cooperative private message are superimposed to form the transmit codeword $(\xbold_1^N)$. Finally, at transmitter~$2$, the cooperative private message and the dummy information are superimposed to form the transmit codeword $(\xbold_2^N)$.  In contrast to the achievable scheme for
the weak/moderate interference regime, the dummy message sent by one of the transmitters~$i$ is 
required to be decodable at the receiver~$j$ $(i \neq j)$.
%This is similar to the case in SLDIC, where it is not possible to transmit any data bits securely without the help of random bits, when $C=0$.
\subsubsection{Encoding and decoding}
The encoding for the non-cooperative private message at transmitter $1$ and the cooperative private message at both the transmitters are the same as described in Sec.~\ref{sec:weak-encoder}. In order to transmit the dummy message, transmitter~$2$ chooses $\xbold_{d2}^N(\dumytwo)$ for $\dumytwo \in \mathcal{W}_{d2}$. The codewords transmitted from the two transmitters are given by:
\begin{align}
& \xbold_{1}^N(\wcpone,\wcptwo,\wprvone,\wprvone') = \underline{\xbold}_{cp}^N[1] + \xbold_{p1}^N,  \text{ and } \nonumber \\
&\xbold_{2}^N(\wcpone,\wcptwo,\dumytwo) = \underline{\xbold}_{cp}^N[2] + \xbold_{d2}^N, \label{eq:highint4}
\end{align}
where  $\underline{\xbold}_{cp}^N$ is defined in (\ref{eq:pfhighint54}) in the proof of Theorem~\ref{th:theorem_ach_weakint}.

For decoding, receiver~$1$ looks for a unique message tuple  such that $(\ybold_{1}^N,\ubold_{1}^N(\tildewcponehat),\xbold_{d2}^N(\dumytwohat),$ $\xbold_{p1}^N(\wprvonehat,\wprvonehat'))$ is jointly typical. Receiver~$2$ looks for a index $\wcptwohat$ such that $(\ybold_{2}^N, \ubold_{2}^N(\tildewcptwohat))$ is jointly typical.

Based on the above coding strategy, the following theorem gives the achievable result on the secrecy rate.
\begin{theorem}\label{th:theorem_ach_highint}
In the high interference regime, the following rate is achievable for the GSIC with limited-rate transmitter cooperation and secrecy constraints at the receivers:
\begin{align}
& R_{1} + R_{p1}' \leq \min \lsqb I(\ubold_1,\xbold_{p1};\ybold_1|\xbold_{d2}), I(\xbold_{p1};\ybold_1|\ubold_1,\xbold_{d2})  + \min\lcb I(\ubold_1;\ybold_1|\xbold_{p1},\xbold_{d2}),C_G\rcb\rsqb, \nonumber \\
% \end{align}
% \begin{align}
%Second equation
& R_{1} + R_{p1}' + R_{d2} \leq \min\lsqb I(\ubold_1,\xbold_{p1},\xbold_{d2};\ybold_1),  I(\xbold_{p1},\xbold_{d2};\ybold_1|\ubold_1) +\min\lcb I(\ubold_1;\ybold_1|\xbold_{p1},\xbold_{d2}),C_G \rcb, \right. \nonumber \\
& \qquad \qquad \quad\left.I(\xbold_{p1};\ybold_1|\ubold_1,\xbold_{d2}) + I(\ubold_1,\xbold_{d2};\ybold_1|\xbold_{p1})\rsqb, \nonumber \\
%Third equation
& R_{1} + R_{p1}' + 2R_{d2} \!\leq \! I(\xbold_{p1},\xbold_{d2};\ybold_1|\ubold_1) \! + \! I(\ubold_1,\xbold_{d2};\ybold_1|\xbold_{p1}), \nonumber \\
%Fourth equation
& R_2 \leq \min\lcb I(\ubold_2;\ybold_2), C_G\rcb, \nonumber \\
& R_{d2} \leq I(\xbold_{d2};\ybold_1|\ubold_1, \xbold_{p1}), \label{eq:highint5}
\end{align}
where $R_1 \triangleq R_{p1} + R_{cp1}$, $R_2 \triangleq R_{cp2}$, $R_{p1}' \triangleq I(\xbold_{p1};\ybold_2|\ubold_2)$, and $R_{d2} \triangleq I(\xbold_{d2};\ybold_2|\xbold_{p1}, \ubold_2)$.
\end{theorem}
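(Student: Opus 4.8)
The plan is to prove achievability by a layered random-coding argument combining three ingredients already fixed by the construction in Section~\ref{sec:high-veryhigh-achGaussian}: Marton's scheme for the cooperative private codewords $(\ubold_1,\ubold_2)$, stochastic (wiretap) encoding \cite{wyner-bell-1975} for the non-cooperative private message $\wprvone$, and a dummy codebook $\xbold_{d2}^N$ that simultaneously serves as cooperative jamming. With the codebooks and the transmit signals~(\ref{eq:highint4}) in place, three things remain: (i) the probability-of-error analysis at both receivers, (ii) the feasibility of realizing the correlated Marton codeword over the rate-limited cooperative link, and (iii) the equivocation computation certifying that $\tfrac{1}{N} I(W_1;\ybold_2^N)\to 0$ and $\tfrac{1}{N} I(W_2;\ybold_1^N)\to 0$. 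Fourier--Motzkin elimination of the auxiliary rates then yields the stated region.

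First I would treat reliability. At receiver~$1$, joint-typicality decoding of $(\tildewcponehat,\dumytwohat,\wprvonehat,\wprvonehat')$ produces, through the packing lemma, one reliability constraint per subset of decoded indices; collecting these and adding the stochastic overhead $R_{p1}'$, the cooperative rate $R_{cp1}$, and the dummy rate $R_{d2}$ on the left reproduces the terms $I(\ubold_1,\xbold_{p1};\ybold_1|\xbold_{d2})$, $I(\xbold_{p1};\ybold_1|\ubold_1,\xbold_{d2})$, and their dummy-augmented variants in the first three lines of~(\ref{eq:highint5}). At receiver~$2$ only $\tildewcptwohat$ is decoded, yielding $R_2\le I(\ubold_2;\ybold_2)$. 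The $\min\{\cdot,C_G\}$ factors are not error events but reflect the cost of sharing the Marton bin index of $\ubold_1$ between the transmitters: following~\cite{wang-TIT-2011}, that index can be conveyed and exploited only if its rate stays below $C_G$, which is why each successive-decoding branch is intersected with $C_G$. I would invoke the mutual-covering lemma to guarantee that $\ubold_1$ and $\ubold_2$ are jointly typical, and verify that the construction of $\underline{\xbold}_{cp}^N$ in~(\ref{eq:pfhighint54}) cancels $\ubold_1$ at receiver~$2$ (and $\ubold_2$ at receiver~$1$).

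The hard part, and the step I expect to dominate the work, is the equivocation analysis. For the cooperative private parts the interference cancellation does almost all of the work: conditioned on its own auxiliary codeword, receiver~$2$'s observation is essentially independent of $\ubold_1$, so $\wcpone$ leaks nothing, and symmetrically $\wcptwo$ is secure at receiver~$1$. The substance lies in bounding the leakage of the non-cooperative private message. After conditioning on $\wcptwo$ (which receiver~$2$ decodes) and hence on $\ubold_2^N$, I would reduce the leakage to $I(\wprvone;\ybold_2^N|\ubold_2^N)$ and apply the standard secrecy lemma of~\cite{wyner-bell-1975}: if a decoder at receiver~$2$ that is told $\wprvone$ can still resolve the injected randomness $\wprvone'$ together with the dummy index $\dumytwo$, then $\wprvone$ stays hidden. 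The crux is that the rates are calibrated so the eavesdropper's decoding budget is \emph{exactly} exhausted by this randomness, i.e. $R_{p1}' + R_{d2} = I(\xbold_{p1};\ybold_2|\ubold_2) + I(\xbold_{d2};\ybold_2|\xbold_{p1},\ubold_2) = I(\xbold_{p1},\xbold_{d2};\ybold_2|\ubold_2)$, leaving no residual rate with which to pin down $\wprvone$. Converting this counting statement into a vanishing-equivocation bound demands a careful AEP argument and is the delicate core of the proof.

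The last subtlety to reconcile is the dual role of the dummy message: to confuse receiver~$2$ it must carry rate $R_{d2}=I(\xbold_{d2};\ybold_2|\xbold_{p1},\ubold_2)$, yet to avoid harming user~$1$ it must be decodable at receiver~$1$, i.e. $R_{d2}\le I(\xbold_{d2};\ybold_1|\ubold_1,\xbold_{p1})$, which appears as the final constraint in~(\ref{eq:highint5}). For $\alpha>1$ the cross link toward receiver~$1$ dominates, so both requirements can hold at once. Feeding all the reliability constraints, the secrecy calibration, and this decodability margin into Fourier--Motzkin elimination yields the consolidated region; in particular, the coefficient $2$ on $R_{d2}$ in the third inequality emerges from combining the dummy's reliability constraint at receiver~$1$ with the confusion it must provide at receiver~$2$.
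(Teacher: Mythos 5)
Your proposal follows essentially the same route as the paper's proof: the same Marton-plus-stochastic-encoding-plus-dummy codebook, the same joint-typicality error analysis at receiver~1 (with receiver~2 decoding only its own auxiliary codeword), the same genie-aided Fano argument in which receiver~2, given $W_{p1}$ and $\ubold_2^N$ as side information, resolves $(W_{p1}',W_{d2})$ so that the calibration $R_{p1}'+R_{d2}=I(\xbold_{p1},\xbold_{d2};\ybold_2|\ubold_2)$ makes the leakage vanish, followed by Fourier--Motzkin elimination. One cosmetic correction: the coefficient $2$ on $R_{d2}$ in the third inequality arises from summing two receiver-1 reliability constraints (the one for $(\xbold_{p1},\xbold_{d2})$ given $\ubold_1$ and the one for $(\ubold_1,\xbold_{d2})$ given $\xbold_{p1}$), not from mixing a receiver-1 constraint with the receiver-2 confusion requirement --- the latter only fixes the numerical value of $R_{d2}$.
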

\begin{proof}
The proof is provided in Appendix \ref{sec:appendhighintf}.
\end{proof}
The achievable symmetric secrecy rate is stated in the following Corollary.
%In the first time slot, the transmitter $1$ sends non-cooperative private message and cooperative private message and the transmitter $2$ sends cooperative private message along with dummy message. In the second time slot, the encoding scheme between the transmitter $1$ and $2$ is reversed.
\begin{corollary}\label{cor:cor_ach_highint}
Using the achievable result in Theorem \ref{th:theorem_ach_highint} and time-sharing between transmitters, the following symmetric secrecy rate is achievable for the GSIC with limited-rate transmitter cooperation:
\begin{align}
R_s = \frac{1}{2}\lsqb R_1^*(1) + R_1^*(2)\rsqb, \label{eq:cor_ach_highint1}
\end{align}
where $R_1^*(1)$ and $R_1^*(2)$ are the achievable secrecy rates for transmitter~$1$ in the first and second time slots, respectively, which are obtained by maximizing $R_s$ over parameters $\theta_i, \eta_i$ and $\beta_i$  $(i = 1, 2)$. The achievable rates for users~$1$ and $2$ in the first time slot are as follows:
\begin{align}
&  R_1(1) \leq  \left\{\begin{array}{l}
			\min \lsqb 0.5\log(1 + \sigma_{u}^2  + h_d^2 P_{p1}), 0.5\log(1 + h_d^2 P_{p1}) + \min\lcb 0.5\log(1 + \sigma_{u}^2), C_G\rcb\rsqb - R_{p1}', \\ 
			\min \lsqb 0.5\log(1 + \sigma_{u}^2 + h_d^2 P_{p1} + h_c^2 P_{d2}), 
			 0.5\log(1 + \sigma_{u}^2 + h_c^2 P_{d2}) \right. \\ \left.  \quad + \min\lcb 0.5\log(1 + \sigma_{u}^2),C_G\rcb,  0.5\log(1 + h_d^2 P_{p1}) + 0.5 \log(1 + \sigma_{u}^2 + h_c^2 P_{d2})\rsqb \nonumber \\ \qquad \qquad - (R_{p1}' + R_{d2}), \\ 
0.5\log(1 + h_d^2 P_{p1} + h_c^2 P_{d2}) + 0.5\log(1 + \sigma_{u}^2 + h_c^2 P_{d2}) - (R_{p1}' + 2R_{d2}) \end{array}\right.
\end{align}
and
\begin{align}
 & R_2(1) = \min\lcb 0.5\log\lb 1 + \frac{\sigma_u^2}{1 + h_d^2 P_{d2} + h_c^2 P_{p1}}\rb, C_G\rcb, \label{eq:cor_ach_highint3}
\end{align}
where $R_{p1}' = 0.5\log\lb 1 + \frac{h_c^2 P_{p1}}{1 + h_d^2 P_{d2}}\rb$, $R_{d2} = 0.5\log(1 + h_d^2 P_{d2})$, $\sigma_u^2 \triangleq (h_d^2 - h_c^2)^2 \sigma_z^2$, $\sigma_z^2 \triangleq \frac{\theta_1}{\theta_1 + \theta_2}\frac{P_1}{h_d^2 + h_c^2}$, $P_{p1} \triangleq \frac{\theta_2}{\theta_1 + \theta_2}P_1$, $P_{d2} \triangleq (P_2 - (h_d^2 + h_c^2)\sigma_z^2)^+$, $P_i \triangleq \beta_i P$ and $0 \leq (\theta_i, \beta_i) \leq 1$. The achievable rate equation for the second time slot can be obtained by exchanging indices~$1$~and~$2$ in (\ref{eq:cor_ach_highint3}).
\end{corollary}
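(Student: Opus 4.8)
The plan is to obtain the corollary by specializing the general achievable region of Theorem~\ref{th:theorem_ach_highint} to a jointly Gaussian input distribution governed by the power-splitting parameters $\theta_i,\eta_i,\beta_i$, evaluating each mutual-information term in closed form, and finally symmetrizing by time-sharing between the two transmitters. Since Theorem~\ref{th:theorem_ach_highint} already carries the secrecy penalty inside its left-hand sides through the subtracted quantities $R_{p1}'$ and $R_{d2}$, the remaining work is purely to select a good auxiliary/input distribution and to carry out the resulting Gaussian computations.

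First I would fix the joint distribution. The total power budget $P$ is divided between the users as $P_i=\beta_i P$. At transmitter~$1$, $P_1$ is split into a cooperative-private part, whose underlying Gaussian has variance $\sigma_z^2=\frac{\theta_1}{\theta_1+\theta_2}\frac{P_1}{h_d^2+h_c^2}$, and a non-cooperative-private part of power $P_{p1}=\frac{\theta_2}{\theta_1+\theta_2}P_1$, with $\xbold_{p1}$ drawn from the stochastic encoder of Sec.~\ref{sec:weak-encoder}. At transmitter~$2$, the power left after the cooperative-private allocation is assigned to the dummy codeword, so $P_{d2}=(P_2-(h_d^2+h_c^2)\sigma_z^2)^+$. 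The cooperative-private auxiliary codewords $\ubold_1,\ubold_2$ are taken to be the Gaussian Marton codewords of~(\ref{eq:pfhighint54}); the crucial property I would invoke is that they are constructed so that, at receiver~$i$, the combination of the two transmitted cooperative-private codewords delivers only the desired auxiliary signal $\ubold_i$ as a clean Gaussian of variance $\sigma_u^2=(h_d^2-h_c^2)^2\sigma_z^2$, while the unwanted cooperative-private codeword of the other user is canceled exactly. I would also check feasibility, i.e., that the superimposed transmit codewords meet the per-user constraint $E[|\xbold_i|^2]\le P$; indeed the cooperative-private codeword consumes transmit power $(h_d^2+h_c^2)\sigma_z^2$, which is precisely the term subtracted in the definition of $P_{d2}$, so the budgets close as $P_i=\beta_i P$.

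Next I would evaluate the mutual-information terms of Theorem~\ref{th:theorem_ach_highint} under this Gaussian choice. At receiver~$1$ the post-cancellation signal is the superposition of $\ubold_1$ (variance $\sigma_u^2$), the direct private signal (power $h_d^2P_{p1}$), the dummy signal arriving over the cross link (power $h_c^2P_{d2}$), and unit-variance noise. Conditioning on $\xbold_{d2}$ removes the $h_c^2P_{d2}$ term, so the two summands of the first constraint evaluate to $0.5\log(1+\sigma_u^2+h_d^2P_{p1})$ and $0.5\log(1+h_d^2P_{p1})+\min\{0.5\log(1+\sigma_u^2),C_G\}$, the $C_G$ term capping $I(\ubold_1;\ybold_1|\xbold_{p1},\xbold_{d2})$ by the cooperative-link rate; when $\xbold_{d2}$ is instead left unconditioned, its power appears as additive Gaussian interference and yields the remaining constraints on $R_1(1)$. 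For the secrecy quantities I would work at receiver~$2$ conditioned on $\ubold_2$: the leakage of $\xbold_{p1}$ over the cross link against the dummy-plus-noise background gives $R_{p1}'=0.5\log(1+\frac{h_c^2P_{p1}}{1+h_d^2P_{d2}})$, while the dummy, decodable at receiver~$2$, contributes $R_{d2}=0.5\log(1+h_d^2P_{d2})$. Finally $R_2(1)=\min\{I(\ubold_2;\ybold_2),C_G\}$ is obtained by treating the dummy (power $h_d^2P_{d2}$) and the cross-link private signal (power $h_c^2P_{p1}$) as noise against $\ubold_2$ of variance $\sigma_u^2$.

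Finally, because only transmitter~$2$ emits a dummy message the per-slot scheme is asymmetric, so I would restore symmetry by time-sharing: in the first slot transmitter~$1$ plays the role above, and in the second slot the roles are exchanged, which amounts to swapping the indices $1$ and $2$ in the rate expressions. Averaging the two per-slot secrecy rates and maximizing each over the feasible $(\theta_i,\beta_i)$ gives $R_s=\frac{1}{2}[R_1^*(1)+R_1^*(2)]$. The step I expect to be most delicate is the bookkeeping of which signals are conditioned out versus treated as noise across the several constraints of Theorem~\ref{th:theorem_ach_highint} --- in particular, tracking how the dummy message $\xbold_{d2}$ serves simultaneously as decodable side information at receiver~$1$ and as a secrecy-inducing jammer at receiver~$2$ --- together with verifying that the Marton power split giving rise to $\sigma_u^2$ and $P_{d2}$ is feasible.
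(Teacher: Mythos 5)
Your proposal is correct and follows essentially the same route as the paper's proof: jointly Gaussian inputs with the identical power split ($\sigma_z^2$, $P_{p1}$, $P_{d2}$), the Marton auxiliaries of (\ref{eq:pfhighint54}) giving a clean $\ubold_i$ of variance $\sigma_u^2=(h_d^2-h_c^2)^2\sigma_z^2$ after over-the-air cancelation, the same choices $R_{p1}'=0.5\log\lb 1+\tfrac{h_c^2P_{p1}}{1+h_d^2P_{d2}}\rb$ and $R_{d2}=0.5\log(1+h_d^2P_{d2})$, followed by time-sharing with the roles of the users exchanged and maximization over $(\theta_i,\beta_i)$. The only step the paper makes explicit that you leave implicit is checking the final decodability constraint of Theorem~\ref{th:theorem_ach_highint}, namely $R_{d2}=0.5\log(1+h_d^2P_{d2})\leq 0.5\log(1+h_c^2P_{d2})=I(\xbold_{d2};\ybold_1|\ubold_1,\xbold_{p1})$, which holds because $h_c^2>h_d^2$ in the high-interference regime.
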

\begin{proof}
The proof is provided in Appendix~\ref{sec:corappendhighintf}.
\end{proof}

\textit{Remarks:}
\begin{itemize}
 \item The terms $R_{p1}' = I(\xbold_{p1};\ybold_2|\xbold_{p2},\ubold_2)$ and
 $R_{p1}' = I(\xbold_{p1};\ybold_2|\ubold_2)$ in Theorems~\ref{th:theorem_ach_weakint} and
 \ref{th:theorem_ach_highint}, respectively, correspond the loss in rate due to
 stochastic encoding. As the capacity of the cooperative link increases, more
 power is assigned to the cooperative private message and hence, the loss in
 rate due to stochastic encoding decreases.
 \item When $1<\alpha < 2$ and  $C_G=0$, the proposed scheme cannot achieve non-zero secrecy rate
 without transmission of dummy message in case of GSIC. In the case
 of SLDIC, also, the proposed scheme uses random bits transmission to achieve
 non-zero secrecy rate.
\item When $C_G \approx \lceil 0.5\log\lb 1 + h_c^2 P\rb \rceil $, the achievable secrecy rate is
 very close to the outer bound (See Fig.~\ref{fig:GICresult4}).
\item  In all the interference regimes, the proposed scheme always achieves nonzero secrecy rate with
 cooperation $(\text{i.e.}, C>0 \text{ and } C_G >0 )$ in case of SLDIC as well as GSIC, except for the
 $\alpha = 1$ case.
 \item  The achievable schemes stated in Theorems~\ref{th:theorem_ach_weakint}  and \ref{th:theorem_ach_highint} 
 for the Gaussian
 case can be extended to the asymmetric case, when $h_{ji} < h_{ii}$ (Theorem~\ref{th:theorem_ach_weakint}) and $h_{ji} > h_{ii}$
 (Theorem~\ref{th:theorem_ach_highint}), respectively. For Theorem~\ref{th:theorem_ach_weakint}, the condition $h_{ji} < h_{ii}$  is required to enable
transmitter~$i$ to send the non-cooperative message securely to receiver~$i$ using stochastic encoding.
For Theorem~\ref{th:theorem_ach_highint}, the condition $h_{ji} > h_{ii}$ is required due to the fact that the dummy message sent by
transmitter~$i$ needs to be decodable at receiver~$j$ $(j \neq i)$ and
should not be decodable at receiver~$i$. However, for the extension, the choice of the auxiliary
codewords for the cooperative private message needs to be modified as mentioned in the proof of Theorem~\ref{th:theorem_ach_weakint}.
\end{itemize}
\subsection{Relation between the achievable rates for SLDIC and GSIC}\label{sec:ach-det-Gaussian}
In the following, it is shown that the achievable rates for both the models are
approximately the same at high SNR and INR, when $(0 \leq \alpha \leq \frac{1}{2})$
and for all the values of the capacity of the cooperative link. For ease of presentation, it is assumed that $0.5\log\SNRt$, $0.5\log\INRt$ and $C_G$ are integers. Without loss of generality, it is also assumed that $h_d=1$.

Consider the lower bound on the secrecy rate for transmitter~$1$ derived in Corollary~\ref{cor:cor_ach_weakint}, when the transmit power of the dummy message is set to zero. Correspondingly, notice that, in the deterministic case, transmission of jamming signal (random bits transmission) is not used in this regime of $\alpha$. Consider the following power allocation
\begin{align}
& P_{p1} = P_{p2}= \frac{1}{h_c^2}, \nonumber \\
&  \sigma_{1z}^2 = \sigma_{2z}^2 = \sigma_z^2 = 
\frac{1}{2} \lb P - \frac{1}{h_c^2}\rb, \nonumber \\
\text{ and } & P_{d2} = 0. \label{eq:intcanc3}
\end{align}
For high SNR $(\SNRt \gg 1)$, the power allocation to the non-cooperative private
message in (\ref{eq:intcanc3}) is always feasible. With this power allocation, the two bounds on $R_1$ in
(\ref{eq:cor_ach_weakint2}) reduce to
\begin{align}
& R_1 \leq 0.5\log \lb 1 + \frac{1}{2} \lsqb \frac{1}{2}(1 - h_c^2)^2\lb P - \frac{1}{h_c^2}\rb + \frac{1}{h_c^2}
\rsqb\rb - 0.5, \label{eq:intcanc4}  \\
& R_1 \leq 0.5\log\lb 1 + \frac{1}{2} \frac{1}{h_c^2}\rb + \min \lcb C_G, \underbrace{0.5\log\lb 1 + \frac{1}{4} (1 - h_c^2)^2\lb P - \frac{1}{h_c^2}\rb\rb}_{I_1}\rcb - 0.5.  \label{eq:intcanc5}
\end{align}
First consider the bound given in (\ref{eq:intcanc4}).
\begin{align}
 R_{11}  & = 0.5\log \lb 1 \!+\! \frac{1}{2} \lsqb \frac{1}{2} \lb 1 \! -\! \frac{\INRt}{\SNRt} \rb^2 \!\! \lb \SNRt \!-\! \frac{\SNRt}{\INRt}\rb \!+\! \frac{\SNRt}{\INRt}\rsqb\rb  - 0.5, \nonumber \\
 %New equation
 %New eqm
 & \stackrel{(a)}{\approx} 0.5\log \frac{\SNRt}{4} - 0.5, \nonumber \\
 %New eqn
 & = m - 1.5, \label{eq:intcanc6}
\end{align}
where (a) is obtained using the fact that for $ \lb 0 \leq \alpha \leq \frac{1}{2}
\rb$, $\SNRt \geq \INRt^2$ and $\SNRt, \INRt \gg 1$.

When $C_G \leq I_1$, the RHS of (\ref{eq:intcanc5}) becomes
\begin{align}
R_{12} & = 0.5\log\lb 1 + \frac{1}{2h_c^2}\rb + C_G - 0.5, \nonumber \\
%New eqn
& = 0.5\log \lb 1 + \frac{1}{2} \frac{\SNRt}{\INRt}\rb + C_G -0.5, \nonumber \\
%New eqn
%New eqn
& \approx m- n + C_G - 1. \label{eq:intcanc7}
\end{align}
When $C_G > I_1$, the RHS of (\ref{eq:intcanc5}) becomes
\begin{align}
R_{13} 
& = 0.5\log \lb 1 + \frac{\SNRt}{2\INRt} \rb + 0.5\log\lb 1 +  \frac{1}{4}\lb1 - \frac{\INRt}{\SNRt}\rb^2  \lb \SNRt  - \frac{\SNRt}{\INRt}\rb\rb - 0.5. \label{eq:intcanc8}
\end{align}
One can show that $R_{11} \leq R_{13}$, and hence, the approximate achievable
secrecy rate for high SNR and INR becomes
\begin{align}
R_1 = \min\lcb m - 1.5, m -n + C_G - 1\rcb.  \label{eq:intcanc9}
\end{align}
From (\ref{weakach2}), the result in the deterministic case for this range of $\alpha$ is
\begin{align}
R_1 = \min\lcb m, m - n + C\rcb. \label{eq:intcanc10}
\end{align}
From (\ref{eq:intcanc9}) and (\ref{eq:intcanc10}), it is clear that the
achievable results for both the models are approximately equal for all
values of $C_G$, when $\lb 0 \leq \alpha \leq \frac{1}{2}\rb$.

In the other interference regimes, it is tedious to establish a precise connection between the achievable results for the
deterministic model and the Gaussian model, due to the complexity of the rate expressions of both the models.
Nonetheless, it can be noticed that there exists a close resemblance in the behavior of the rate plots against $\alpha$ for
both the models (See~Figs.~\ref{fig:gdof1}~and~\ref{fig:GICresult2}, Figs.~\ref{fig:gdof2} and \ref{fig:GICresult3}).
%Following lemma is useful in deriving the achievable result in Theorem \ref{th:theorem_ach_weakint} and \ref{th:theorem_ach_highint}.
%Achievable scheme end here
%%%%%%%%%%%%%%%%%%%%%%%%%%%%%%%%%%%%%%%%%%%%
%%%%%%%%%%%%%%%%%%%%%%%%%%%%%%%%%%%%%%%%%%%%%
%%%%%%%%%%%%%%%%%%%%%%%%%%%%%%%%%%%%%%%%%
\section{Discussion and Numerical Examples}\label{sec:results-discussion}
\subsection{Comparison with existing results}
Some observations on how the bounds derived in this work stand in relation to existing works are as follows:
\begin{enumerate}
\item When $C=0$ and $\alpha = \frac{1}{2}$, the achievable rate result for the SLDIC in Section~\ref{sec:SLDIC-ach-weak}
reduces to the achievable rate result for the SLDIC in \cite{yates-isit-2008} with semi-secret constraint at each
receiver. The semi-secret constraint at each receiver depends on trusting the other transmitters.
\item When $(0 \leq \alpha \leq \frac{1}{2})$, the achievable rate result for the SLDIC in Sec~\ref{sec:SLDIC-ach-weak} is found to match with the achievable result for the SLDIC in~\cite{wang-TIT-2011} (See Figs.~\ref{fig:gdof1} and \ref{fig:gdof2}). As $\alpha$ increases, in~\cite{wang-TIT-2011}, the receiver can decode some part of interference and can achieve higher rate. Here, due to the secrecy constraints, the receivers cannot decode other users' messages, and hence, the achievable scheme is completely different. Also, for some values of $\alpha$, the achievable scheme proposed in this paper for the SLDIC requires the exchange of only random bits through the cooperative link, in contrast with the achievable scheme in~\cite{wang-TIT-2011}.
\item When $C_G = 0$, the system reduces to the $2$-user GSIC without cooperation, which was studied in \cite{liu-TIT-2008}. The achievable rate result in Theorem~\ref{th:theorem_ach_weakint} and Corollary~\ref{cor:cor_ach_weakint} reduce to the results reported in \cite{liu-TIT-2008} in this case.
%\item When $C_G=0$, then the GSIC with transmitter cooperation reduces to the $2$-user GSIC without transmitter cooperation. The achievable rate result in Theorem \ref{th:theorem_ach_weakint} and Corollary \ref{cor:cor_ach_weakint} reduce to the achievable rate result obtained in \cite{liu-TIT-2008} (multiplexed transmission and artificial noise) for the GSIC with confidential messages.
\item When $C_G=0$, the achievable result in Theorem \ref{th:theorem_ach_highint} reduces to the achievable result in  \cite[Theorem~$3$]{tang-TIT-2011} for the high/very high interference regime $(\alpha > 1)$ for the wiretap channel with  a helping interferer.
\item When the capacity of the cooperative links are sufficiently large, then the GSIC with transmitter cooperation reduces to a $2$-user Gaussian MIMO broadcast channel (GMBC) with two antennas at transmitter and one antenna at each receiver. The achievable rate result in Corollaries \ref{cor:cor_ach_weakint} and \ref{cor:cor_ach_highint} are found to be very close to the achievable rate result in \cite[Theorem~$1$]{liu-TIT-2009} for the GMBC, as shown in Fig. \ref{fig:comp-GMBC}.
\item The proposed outer bounds for the GSIC with limited rate transmitter cooperation in Theorems \ref{th:theorem_GSIC_outer1}-\ref{th:theorem_GSIC_outer3} are compared with existing outer bounds for the GSIC with secrecy constraints at each receiver~\cite{he-CISS-2009, tang-TIT-2011}, when $C_G=0$, in Fig. \ref{fig:GICresult1}. It can be observed that the outer bounds derived in this work improve over the best known outer bounds in the literature even in the absence of cooperation, i.e., when $C_G=0$.
\end{enumerate}
%start of figure
\begin{figure}%[t]
	\begin{center}
		\setxysizeo
		\epsffile{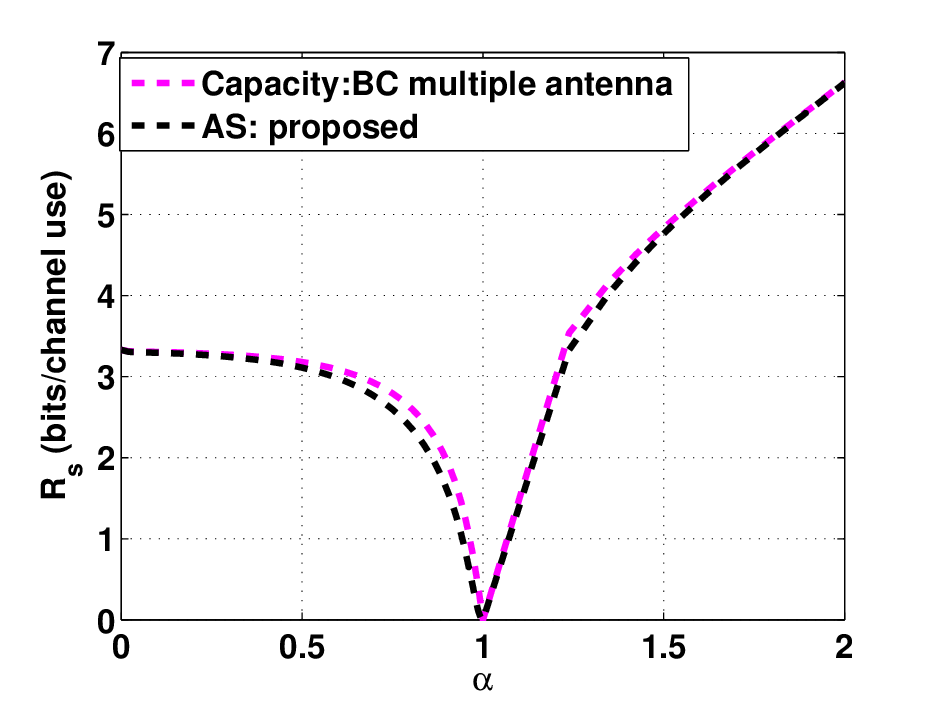}
		\vspace{-0.25cm}
		% ratecomp_m5n3.eps: 0x0 pixel, 300dpi, 0.00x0.00 cm, bb=   73   252   521   589
		\caption{Achievable secrecy rate for the GSIC with large $C_G$, and the capacity of GMBC with two transmit antennas and one receive antenna at each receiver \cite{liu-TIT-2009}. For the GSIC and GMBC the individual power constraints at each transmitter are $P=100$ and $P=200$, respectively. The channel gain to the intended receivers in both cases is $h_d=1$. In the legend, \textit{AS: proposed} stands for the achievable scheme proposed in this work.}\label{fig:comp-GMBC}
	\end{center}
	%\pullUp
\end{figure}
%new figure
\begin{figure}%[t]
	%\pullUp \pullUp \pullUp \pullUp \pullUp \pullUp
	\begin{center}
		\setxysizeo
		\epsffile{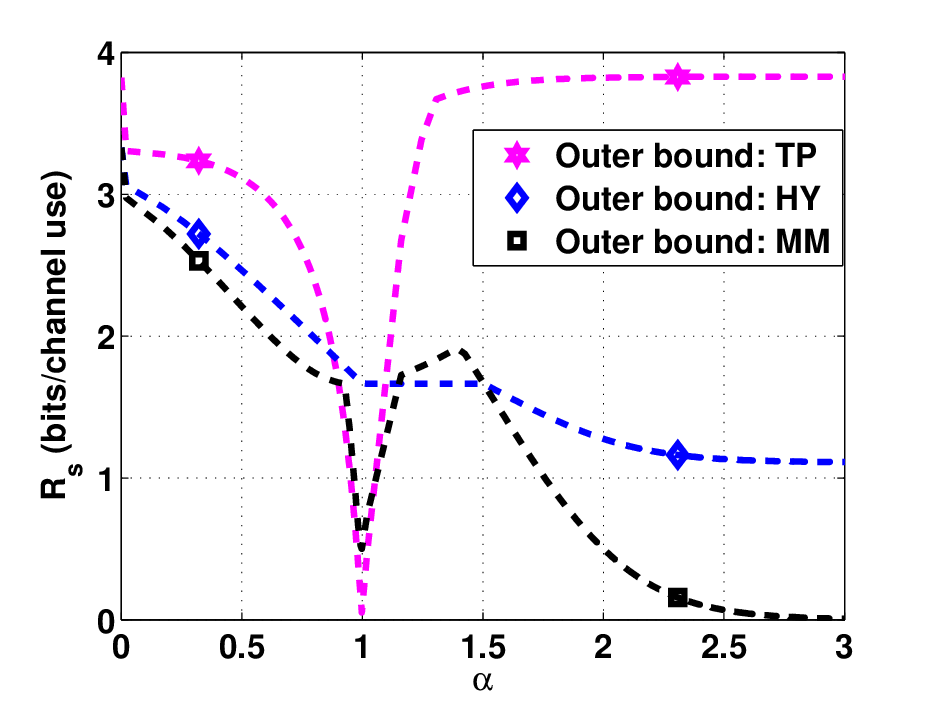}
		\vspace{-0.25cm}
		% ratecomp_m3n6.eps: 0x0 pixel, 300dpi, 0.00x0.00 cm, bb=   73   252   521   589
		\caption{Outer bound on the symmetric secrecy rate for GSIC with $C_G=0$, $P=100$ and $h_d=1$. In the legend, TP stands for the outer bound derived in \cite{tang-TIT-2011}, HY stands for the outer bound on secrecy rate in \cite{he-CISS-2009} and MM stands for the outer bound derived in this work.}\label{fig:GICresult1}
	\end{center}
\end{figure}
%end figure

In the following sections, some numerical examples are presented for the deterministic and Gaussian cases, to get insights into the bounds for different values of $C$, over different interference regimes.
\subsection{Numerical examples in case of SLDIC}\label{sec:numerical-SLDIC}
\begin{figure}%[t]
	\begin{center}
		\setxysizeo
		\epsffile{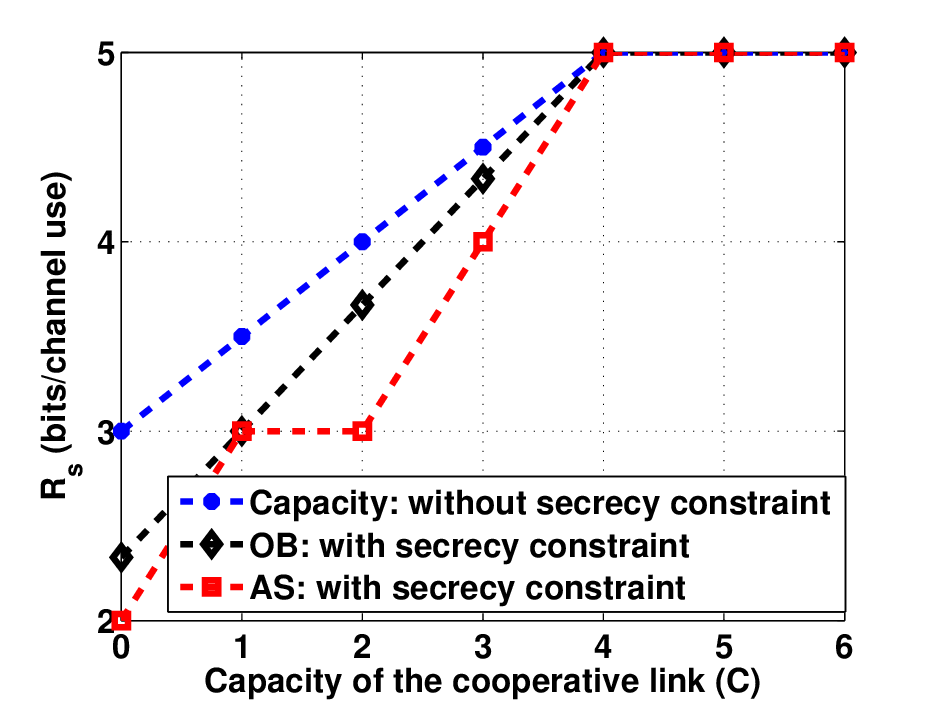}
		\vspace{-0.25cm}
		% ratecomp_m5n3.eps: 0x0 pixel, 300dpi, 0.00x0.00 cm, bb=   73   252   521   589
		\caption{Outer bound on the secrecy rate of the SLDIC with $m=5$ and $n=4$. The capacity of the SLDIC without secrecy constraints is known in the literature  \cite{wang-TIT-2011}. In the legend, OB and AS stand for the outer bound and achievable scheme derived in this work, respectively.} \label{fig:ratecomp1}
	\end{center}
	%\pullUp
\end{figure}
In Fig.~\ref{fig:ratecomp1}, the outer bound in Theorem~\ref{th:theoremSLDIC-outer1} is plotted along with
the achievable secrecy rate given in (\ref{eq:sldic-modach1}) for the $(m,n) = (5,4)$ case. Also plotted is the
per user capacity of the SLDIC with transmitter cooperation, but without the secrecy constraints \cite{wang-TIT-2011}.
It can be observed that the proposed scheme is optimal, when $C=1$ and $C \geq 4$. However, it is not possible
to achieve the capacity without the secrecy constraint, when $C \leq 3$. When $C \geq 4$, there is no loss in
the achievable rate due to the secrecy constraint at receivers.
\begin{figure}%[t]
	%\pullUp \pullUp \pullUp \pullUp \pullUp \pullUp
	\begin{center}
		\setxysizeo
		\epsffile{}
		\vspace{-0.25cm}
		% ratecomp_m3n6.eps: 0x0 pixel, 300dpi, 0.00x0.00 cm, bb=   73   252   521   589
		\caption{Outer bound on the secrecy rate of the SLDIC with $m=3$ and $n=6$. The capacity of the SLDIC without secrecy constraints is known in the literature  \cite{wang-TIT-2011}. In the legend, OB and AS stand for the outer bound and achievable scheme derived in this work, respectively. In this case the inner and outer bound match for $C=0,1$, and are fairly close at higher values of $C$. For $C=1$, the capacity achieving scheme uses random bits sharing through the cooperative link.} \label{fig:ratecomp2}
	\end{center}
\end{figure}
In Fig.~\ref{fig:ratecomp2}, the minimum of the outer bounds in Theorems~\ref{th:theoremSLDIC-outer1} and \ref{th:theoremSLDIC-outer2} is plotted as a function of $C$, with $(m,n) = (3,6)$. Also plotted is the achievable secrecy rate given in (\ref{eq:sldic-highach3}). From the plot, it can be observed that it is not possible to achieve a nonzero secrecy rate without cooperation between the transmitters, i.e., when $C=0$. The achievable scheme, which uses random bits sharing through the cooperative link and interference cancelation, is optimal for $C=1$. It can be observed the secrecy constraint results in a positive rate penalty, in the sense that it is not possible to achieve the capacity without the secrecy constraint, for $C \leq 5$.

In Figs.~\ref{fig:gdof1} and \ref{fig:gdof2}, the outer bound on the symmetric rate is plotted against~$\alpha$
for a given value of $C$, along with the per user capacity of the SLDIC with transmitter cooperation, but without
the secrecy constraints \cite{wang-TIT-2011}, and the inner bounds for the SLDIC with secrecy constraints
at the receiver. In order to generate these plots, $m$ is chosen to be $400$ and $n$ is varied from $0$ to $4m$,
and the rates are normalized by $m$.

In Fig.~\ref{fig:gdof1}, the achievable secrecy rate and the capacity without secrecy constraints \cite{wang-TIT-2011}
match when $0 \leq \alpha \leq \frac{1}{2}$. Hence, for this regime, it is not required to derive an outer bound.
When $\frac{1}{2} < \alpha \leq \frac{2}{3}$, in the absence of the secrecy constraint, the capacity increases with increase in the value of $\alpha$, as the receivers are able to decode some part of the interference. However, with the secrecy constraint, the receiver cannot decode the other user's message, and, hence, the achievable rate decreases with $\alpha$. When $\frac{2}{3} < \alpha < 1$, the achievable secrecy rate meets the outer bound at some of the points and the fluctuating behavior of the achievable rate is due to the floor-operation involved in the rate expression. In this regime, the transmission of random bits help to compensate for the loss in rate, to some extent. At $\alpha=1$, there exists a point of discontinuity, as no nonzero secrecy rate is achievable.
Intuitively, one would expect that the achievable secrecy rate should monotonically decrease with $\alpha$, because of the reasoning mentioned above. Interestingly, the achievable secrecy rate increases with increase in the value of $\alpha$, when $1 < \alpha \leq 1.5$, although the increase is not monotonic in nature due to the floor operation involved in the rate expression. The increase in the achievable secrecy rate arises due to the improved ability of the transmitters to jam the data bits at the unintended receivers by sending random bits as, $\alpha$ increases. However, when $1.5 < \alpha < 2$, the achievable secrecy rate decreases with increase in the value of $\alpha$ and the outer bound meets the inner bound. When $\alpha \geq 2$, it is no longer possible to achieve a nonzero secrecy rate.
\begin{figure}%[t]
	\begin{center}
		\setxysizeo
		\epsffile{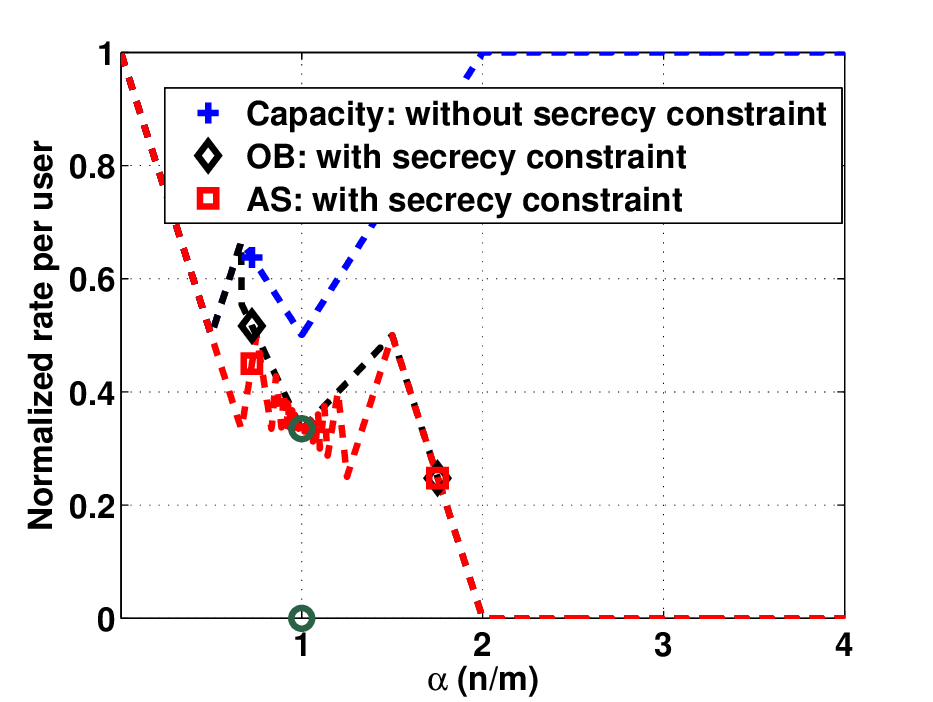}
		\vspace{-0.25cm}
		% ratecomp_m3n6.eps: 0x0 pixel, 300dpi, 0.00x0.00 cm, bb=   73   252   521   589
		\caption{Rate normalized w.r.t $m$ for the SLDIC with $C=0$. Although there exists a gap between the inner
			and outer bounds with the secrecy constraint, they match at many points too. In particular, in the initial part
			of the weak interference regime, i.e., for $0 < \alpha < \frac{1}{2}$ and regime $(\alpha  \geq 1.5)$, the
			capacity of the SLDIC is achieved by the proposed scheme.}\label{fig:gdof1}
	\end{center}
\end{figure}

In Fig. \ref{fig:gdof2}, compared to the $C=0$ case, the achievable secrecy rate is higher in all the
interference regimes due to the cooperation, except when $\alpha=1$. The cooperation between the
transmitters not only eliminates the interference, but at the same time ensures secrecy. Also, the utility of
random bit transmission decreases with increase in the value of $C$. Interestingly, it is possible to achieve a
nonzero secrecy rate even when $\alpha \geq 2$, and the achievable scheme is optimal in this case.
\begin{figure}%[t]
	%\pullUp \pullUp \pullUp \pullUp \pullUp \pullUp
	\begin{center}
		\setxysizeo
		\epsffile{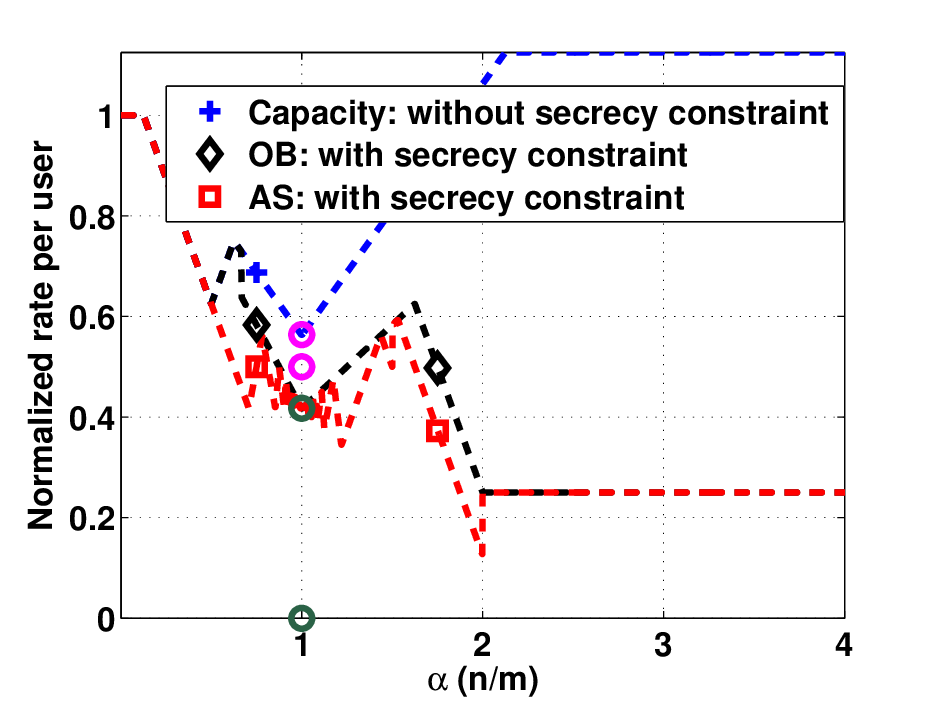}
		\vspace{-0.25cm}
		% ratecomp_m3n6.eps: 0x0 pixel, 300dpi, 0.00x0.00 cm, bb=   73   252   521   589
		\caption{Normalized rate w.r.t $m$ for the SLDIC with $C=50$. In this case, we obtain the capacity of the
			SLDIC in the initial part of the weak interference regime $(0 < \alpha \leq \frac{1}{2})$ and in the very high
			interference regime $(\alpha \geq 2)$.}\label{fig:gdof2}
	\end{center}
\end{figure}
\subsection{Numerical examples in case of GSIC}
%begin figure
\begin{figure}%[t]
	\begin{center}
		\setxysizeo
		\epsffile{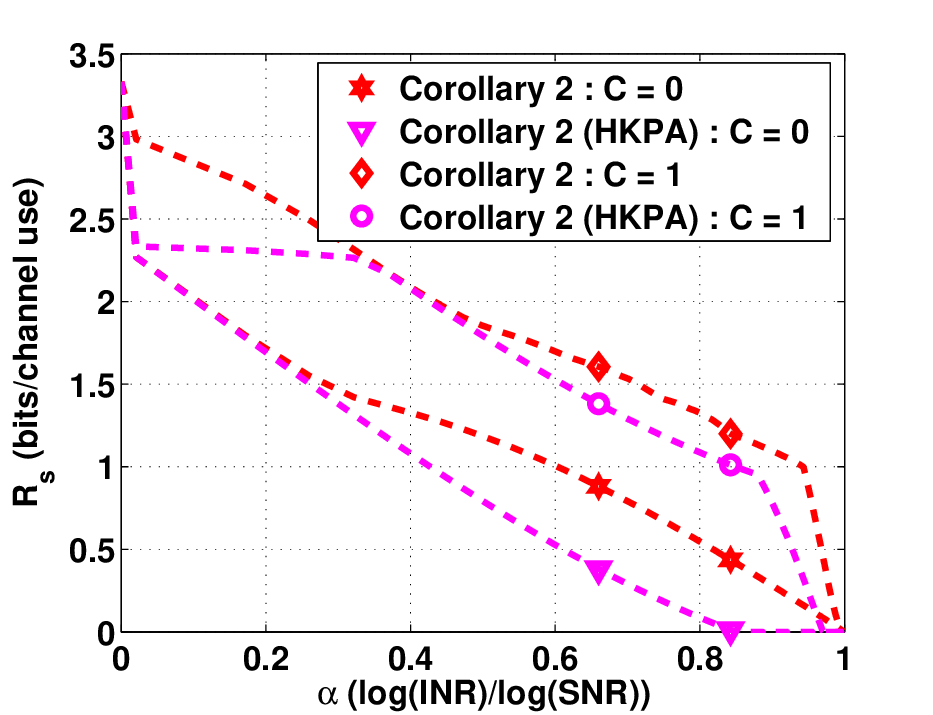}
		\vspace{-0.25cm}
		% ratecomp_m5n3.eps: 0x0 pixel, 300dpi, 0.00x0.00 cm, bb=   73   252   521   589
		\caption{Comparison of achievable schemes in Corollary \ref{cor:cor_ach_weakint} with different power allocations: $P=20$~dB and $h_d=1$.} \label{fig:compare_ach_HK}
	\end{center}
	%\pullUp
\end{figure}
%end figure
In Fig.~\ref{fig:compare_ach_HK}, the achievable result in Corollary~\ref{cor:cor_ach_weakint} is plotted against $\alpha$, for different values of $C_G$, with two types of power allocations. In the first case, no power is allotted for transmitting the dummy message. The power allocations for the non-cooperative private message and cooperative private message are discussed below. For the SLDIC, in the weak and moderate interference regimes, the data bits transmitted on the lower levels $[1:m-n]$ will not be received at the unintended receiver. For the GSIC, this corresponds to transmitting the non-cooperative private message such that it is received at the noise floor of the unintended receiver. In the existing literature, this type of power allocation has been used for the private message\footnote{In \cite{etkin-TIT-2008}, there is no secrecy constraint at the receiver and the terminology \textit{private} arises due to the fact that this part of the message is not required to be decodable at the unintended receiver.} in the Han-Kobayashi (HK)-scheme \cite{etkin-TIT-2008}, and hence, this special case is termed as
HKPA (HK-type power allocation) scheme in this paper. The remaining power is allotted for transmitting the cooperative private message. In the second case, the achievable result in Corollary~\ref{cor:cor_ach_weakint}, which involves transmission of a dummy message, is plotted. When $C_G=0$ and $\alpha > 0.4$, the scheme in Corollary~\ref{cor:cor_ach_weakint} outperforms the HKPA scheme. The gain in the achievable rate largely arises from the transmission of the dummy message. When $C_G=1$, the gap between the two schemes decreases, except for the initial part of the weak interference regime.
%Figure
\begin{figure}%[t]
	\begin{center}
		\setxysizeo
		\epsffile{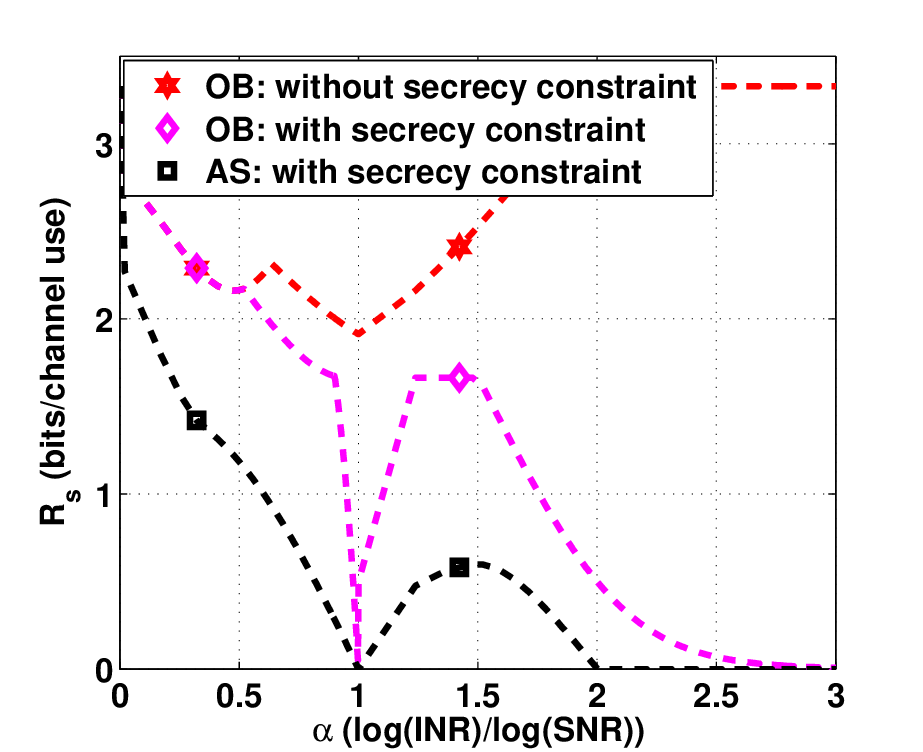}
		\vspace{-0.25cm}
		% ratecomp_m3n6.eps: 0x0 pixel, 300dpi, 0.00x0.00 cm, bb=   73   252   521   589
		\caption{Secrecy rate in case of GSIC with $P=100$ and $C_G=0$. In the legend, OB stands for the outer bound and
			AS stands for the achievable scheme. Interestingly, the achievable secrecy rate increases with increase in the
			value of $\alpha$, when $1 < \alpha \leq 1.5$. However, when $1.5 < \alpha <2$, the achievable secrecy rate
			decreases with increase in the value of $\alpha$.} \label{fig:GICresult2}
	\end{center}
\end{figure}
In Fig. \ref{fig:GICresult2}, the achievable symmetric secrecy rate in Corollaries \ref{cor:cor_ach_weakint} and \ref{cor:cor_ach_highint} are plotted against $\alpha$, for $C_G=0$ and $P=100$. Also plotted is the outer bound on the symmetric rate in case of GSIC without the secrecy constraint at receiver~\cite{wang-TIT-2011}. While plotting the outer bound with secrecy constraint, the minimum of the outer bounds derived in this work and outer bounds in \cite{wang-TIT-2011,he-CISS-2009,tang-TIT-2011} is taken for the $C_G=0$ case. When $(0 \leq \alpha \leq 1)$, the achievable secrecy rate decreases with increase in the value $\alpha$. At $\alpha=1$, the achievable secrecy rate becomes zero. The figure also reveals an interesting trade off between stochastic encoding and dummy message transmission in the high interference regime. Initially, as $\alpha$ increases, receiver~$i$ can decode more of the interference caused due to the dummy message transmission by transmitter~$j$, $j \neq i$, which, along with a relatively minimal rate loss due to stochastic encoding, leads to a net increase in rate with $\alpha$. However, with further increase in $\alpha$, the loss in rate due to stochastic encoding eventually outweighs the gain in rate due to the receiver's ability to decode the dummy message, as the transmissions need to be protected against a stronger cross-channel. Hence, for $\alpha \geq 1.5$, the achievable secrecy rate starts decreasing with $\alpha$.
%New figure

\begin{figure}%[t]
	\begin{center}
		\setxysizeo
		\epsffile{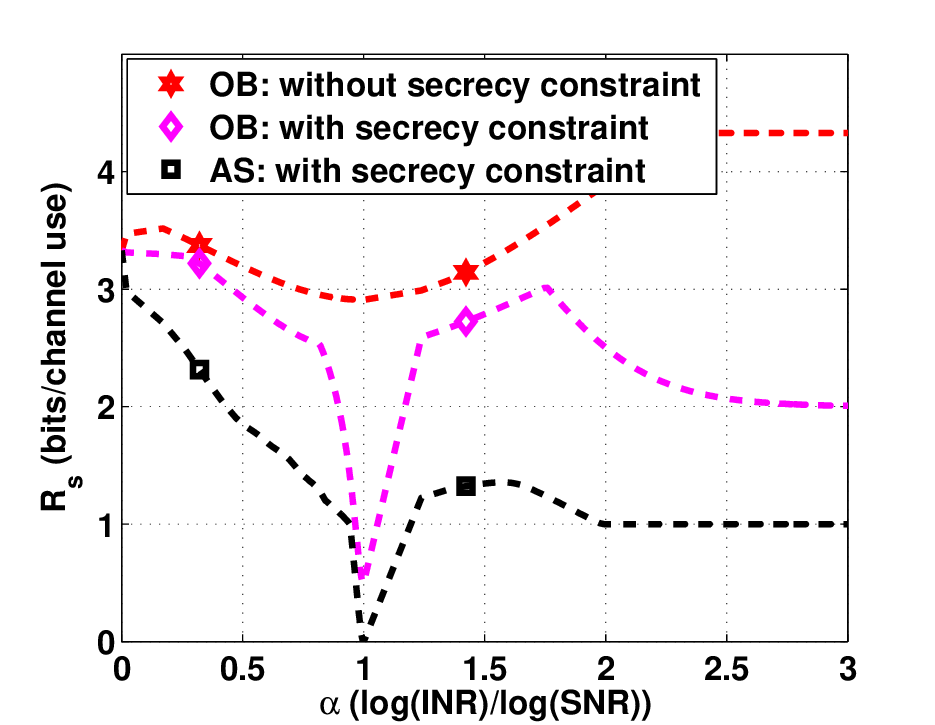}
		\vspace{-0.25cm}
		% ratecomp_m3n6.eps: 0x0 pixel, 300dpi, 0.00x0.00 cm, bb=   73   252   521   589
		\caption{Secrecy rate in case of GSIC with $P=100$ and $C_G=1$. In the legend, OB and AS stand for the
			outer bound and achievable scheme, respectively.} \label{fig:GICresult3}
		%In comparison to the achievable secrecy rate in Fig. \ref{fig:GICresult2}, there is gain in the achievable secrecy rate due to cooperation between the transmitters, except for the $\alpha=1$ case.
	\end{center}
\end{figure}

In Fig.~\ref{fig:GICresult3}, the achievable symmetric secrecy rate is plotted against $\alpha$ for $P=100$ and $C_G=1$, along with the outer bounds. For plotting the outer bound with secrecy constraints, the minimum of the outer bounds derived in this work and the outer bound in \cite{wang-TIT-2011} is used. When $\alpha > 1$, the achievable secrecy rate initially increases, and later decreases with $\alpha$. Finally, the achievable secrecy rate saturates when $(\alpha \geq 2)$, and this is due to the fact that it is no longer possible to transmit any non-cooperative private message and the gain in the achievable secrecy rate as compared to $C_G=0$ case is due to cooperation only. Hence, when $C_G > 0$, the proposed scheme achieves nonzero secrecy rate in all the interference regimes except for the $\alpha=1$ case. Hence, as the value of $C_G$ increases, it is required to assign lower powers for transmitting the dummy message and the non-cooperative private message. By assigning lower power to the
non-cooperative private message, the penalty in the achievable secrecy rate due to stochastic encoding also decreases. In the following example, no power is allocated for transmitting the non-cooperative private message and the dummy message.
%New figure
\begin{figure}%[t]
	\begin{center}
		\setxysizeo
		\epsffile{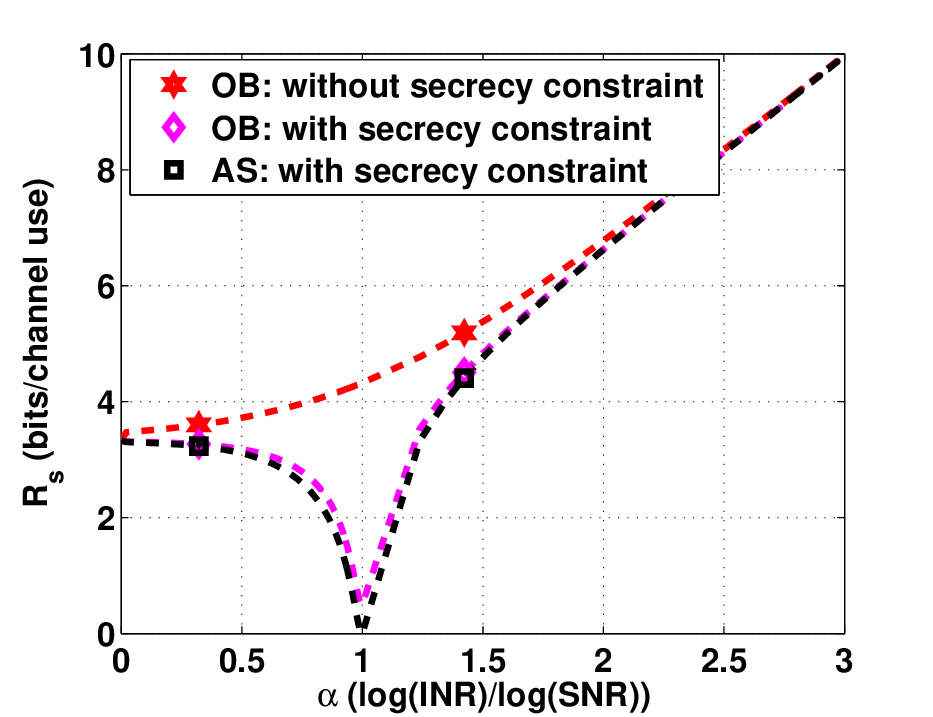}
		\vspace{-0.25cm}
		% ratecomp_m3n6.eps: 0x0 pixel, 300dpi, 0.00x0.00 cm, bb=   73   252   521   589
		\caption{Secrecy rate in case of GSIC with $P=100$ and $C_G=10$. In the legend, OB stands for the outer bound and AS stands for the achievable scheme. In the later part of the very high interference regime, the achievable secrecy rate matches with the outer bound without secrecy constraint. Hence, there is no loss in the achievable rate due to the secrecy constraint at receiver.} \label{fig:GICresult4}
	\end{center}
\end{figure}

In Fig.~\ref{fig:GICresult4}, the achievable symmetric secrecy rate is plotted against $\alpha$ for $P=100$ and $C_G=10$, along with the outer bounds. Here, the achievable secrecy rate and outer bounds are very close to each other. In this case, both the users transmit cooperative private messages only.
\section{Conclusions}\label{sec:conc}
This work explored the role of limited-rate transmitter cooperation in facilitating secure communication over
$2$-user IC. For the deterministic case, the achievable scheme used a combination of interference cancelation, random bit transmission, relaying of the other user's data bits, and time sharing, depending on the values of $\alpha$ and $C$. Also, outer bounds on the secrecy rate were derived for the deterministic case. The novelty in the derivation of the outer bound lies in providing side information to receiver in a carefully chosen manner, use of the secrecy constraints at the receivers, and partitioning the encoded message/output, depending on the value of $\alpha$. The study of the deterministic model gave useful insights for obtaining achievable schemes and outer bounds on the secrecy rate for the Gaussian case. The achievable scheme used a combination of Marton's coding scheme and stochastic encoding along with dummy message transmission. It was found that, with limited-rate cooperation, it
is possible to achieve nonzero secrecy rate in almost all cases, except when $\alpha=1$.
%Future work could study extending the achievable scheme proposed for the SLDIC involving the sharing of a combination of random bits and data bits through the cooperative links to the Gaussian case, thereby possibly improving on the achievable rate in the very high interference regime.
A fundamental finding of this work is that a limited rate secure cooperative link between the transmitters of a $2$-user IC can greatly enhance the achievable rates for secure communication. 
Future work could investigate the value of other forms of limited rate cooperation between nodes. Another 
related problem could be the study of the case where the nodes cannot completely trust each other. The 
achievable schemes and outer bounds results proposed in this paper can give useful insight and facilitate 
further studies of the IC with secrecy constraints.
%BEGIN OF APPENDIX
\appendix
%%%%%%%%%%%%%%%%%%%%%%%%%%%%%%%%%%%%%%%%%
%Proof of outer bounds in case of SLDIC
\subsection{Proof of Theorem \ref{th:theoremSLDIC-outer1}}\label{sec:theoremSLDIC-outer1}
Using Fano's inequality, the rate of user $1$ is upper bounded as
\begin{align}
	NR_1 & \leq I(W_1; \ybold_1^N) + N\epsilon_1, \nonumber \\
	& = H(\ybold_1^N) - H(\ybold_1^N|W_1) + N\epsilon_1, \nonumber \\
	% \end{align}
	% \begin{align}
	& \stackrel{(a)}{\leq} H(\ybold_1^N) - H(\ybold_1^N|W_1,\xbold_1^N) + N\epsilon_1, \nonumber \\
	& = H(\ybold_1^N) - H(\xbold_{2a}^N | W_1,\xbold_1^N) + N\epsilon_1, \text{ where, } \xbold_{2a}^N \triangleq \mathbf{D}^{q-n}\xbold_2^N, \nonumber \\ 
	& \leq H(\ybold_1^N) - H(\xbold_{2a}^N | W_1,\xbold_1^N,\coopsignalone^N,\coopsignaltwo^N) \nonumber + N\epsilon_1, \nonumber  \\
	& \stackrel{(b)}{=} H(\ybold_1^N) - H(\xbold_{2a}^N |\coopsignalone^N,\coopsignaltwo^N) + N\epsilon_1, \nonumber \\
	\text{or }  H(\xbold_{2a}^N |\coopsignalone^N,\coopsignaltwo^N) & \leq H(\ybold_1^N) - N R_1 + N\epsilon_1, \label{eq:outer-sldic1} 
\end{align}
where (a) follows by using the fact that the entropy cannot increase by additional conditioning and (b) follows by using the relation in (\ref{eq:thouter0}). 

Adopting  similar steps used to obtain (\ref{eq:outer-sldic1}), the following bound on the conditional entropy is obtained.
\begin{align}
	H(\xbold_{1a}^N|\coopsignalone^N,\coopsignaltwo^N) & \leq H(\ybold_2^N) - NR_2 + N\epsilon_2, \text{ where, } \xbold_{1a}^N \triangleq \mathbf{D}^{q-n}\xbold_1^N. \label{eq:oute-sldic2} 
\end{align}
The rate of user $1$ can also be bounded as
\begin{align}
	NR_1 &\leq I(W_1;\ybold_1^N) + N\epsilon_1, \nonumber \\
	%New line
	& \stackrel{(a)}{\leq} I(W_1;\ybold_1^N,\ybold_2^N) + N\epsilon_1, \nonumber \\
	%New line
	%& = I(W_1; \ybold_2^t) + I(W_1;\ybold_1^t|\ybold_2^t) + t\epsilon_1, \nonumber \\
	%New line
	& \stackrel{(b)}{=} I(W_1;\ybold_1^N|\ybold_2^N) + N\epsilon_1, \nonumber \\
	%New line
	%& = H(\ybold_1^t|\ybold_2^t) - H(\ybold_1^t|\ybold_2^t,W_1) + t\epsilon_1, \nonumber \\
	%New line
	&  \leq H(\ybold_1^N|\ybold_2^N) + N\epsilon_1, \nonumber \\
	% \end{align}
	% \begin{align}
	%New line
	&  \stackrel{(c)}{=} H(\ybold_1^N,\ybold_2^N) - H(\ybold_2^N) + N\epsilon_1, 
	\end{align}
	\begin{align}
	% %New line
	&  \stackrel{(d)}{\leq} H(\ybold_1^N,\ybold_2^N,\xbold_{1a}^N,\xbold_{2a}^N) - H(\ybold_2^N) + N\epsilon_1, \nonumber \\
	%  %New line
	&  \stackrel{(e)}{=}  H(\xbold_{1a}^N,\xbold_{2a}^N) +  H(\ybold_1^N,\ybold_2^N|\xbold_{1a}^N,\xbold_{2a}^N)- H(\ybold_2^N) + N\epsilon_1, \nonumber \\
	% \end{align}
	% \begin{align}
	% % %New line
	&  \stackrel{(f)}{\leq} H(\xbold_{1a}^N,\xbold_{2a}^N, \coopsignalone^N,\coopsignaltwo^N) +  H(\ybold_1^N,\ybold_2^N|\xbold_{1a}^N,\xbold_{2a}^N)- H(\ybold_2^N)+ N\epsilon_1,\nonumber \\
	%New line
	&   \leq H(\coopsignalone^N,\coopsignaltwo^N) + H(\xbold_{1a}^N | \coopsignalone^N,\coopsignaltwo^N) + H(\xbold_{2a}^N| \vonetwob^N,\vtwooneb^N) + H(\ybold_1^N,\ybold_2^N|\xbold_{1a}^N,\xbold_{2a}^N) - H(\ybold_2^N) + N\epsilon_1, \label{eq:oute-sldic3} 
\end{align}
where (a) is due to a genie providing $\ybold_2^N$ to receiver $1$; (b) is due to the perfect secrecy condition at receiver 2, i.e., $I(W_1;\ybold_2^N) = 0$; (c) is obtained from the joint entropy $H(\ybold_1^N,\ybold_2^N) = H(\ybold_2^N) + H(\ybold_1^N | \ybold_2^N)$; (d), (e) and (f): follows from the chain rule for joint entropy; (f) is obtained using chain rule for joint entropy; and (g) is obtained using the fact that removing conditioning cannot decrease the entropy. 

Using  (\ref{eq:outer-sldic1}) and (\ref{eq:oute-sldic2}), (\ref{eq:oute-sldic3}) becomes
\begin{align}
	NR_1 & \leq H(\coopsignalone^N,\coopsignaltwo^N) + H(\ybold_1^N) - N\lsqb R_1 + R_2\rsqb + H(\ybold_1^N,\ybold_2^N|\xbold_{1a}^N,\xbold_{2a}^N) + N\epsilon_1, \nonumber \\
	%New eqn
	\text{or } N[2R_1 + R_2] & \leq H(\coopsignalone^N,\coopsignaltwo^N) + H(\ybold_1^N) + H(\mathbf{D}^{q-m}\xbold_1^N|\xbold_{1a}^N) +  H(\mathbf{D}^{q-m}\xbold_2^N|\xbold_{2a}^N) + N\epsilon_1.  \label{eq:outer-sldic4} 
\end{align}
The above equation is simplified under the following cases.\\
\textbf{Case 1 $(m \geq n)$}: In this case $q = m$ and (\ref{eq:outer-sldic4}) becomes
\begin{align}
	N[2R_1 + R_2] & \leq H(\coopsignalone^N,\coopsignaltwo^N) + H(\ybold_1^N) + H(\xbold_1^N|\xbold_{1a}^N) +  H(\xbold_2^N|\xbold_{2a}^N) + N\epsilon_1, \nonumber \\
	%New line
	\text{or } R &\leq \frac{1}{3}\lsqb 2C + 3m-2n\rsqb. \label{eq:outer-sldic5} 
\end{align}
The above equation is obtained by using the fact that the entropy $H(\vonetwob,\vtwooneb)$, $H(\ybold_i)$ and  $H(\xbold_i|\xbold_{ia})$ are upper bounded by $2C$, $m$ and $m-n$, respectively.\\
\textbf{Case 2 $(m \leq n)$}: In this case $q = n$ and (\ref{eq:outer-sldic4}) becomes
\begin{align}
	N[2R_1 + R_2] & \leq H(\coopsignalone^N,\coopsignaltwo^N) + H(\ybold_1^N) + H(\mathbf{D}^{n-m}\xbold_1^N|\xbold_{1}^N) +  H(\mathbf{D}^{n-m}\xbold_2^N|\xbold_{2}^N) + N\epsilon_1, \nonumber \\
	%New line
	\text{or } R &\leq \frac{1}{3}\lsqb 2C + n\rsqb. \label{eq:outer-sldic6} 
\end{align}
The above equation is obtained by using the fact that the entropy $H(\vonetwob,\vtwooneb)$ and $H(\ybold_1)$ are upper bounded by $2C$ and $n$, respectively. Also, given $\xbold_i^N$, there is no uncertainty about $\mathbf{D}^{n-m}\xbold_i^N$. Combining (\ref{eq:outer-sldic5}) and (\ref{eq:outer-sldic6}) results in (\ref{eq:outerone_deter1}). This completes the proof.
%Proof of the third outer bound in sldic begins here.
\subsection{Proof of Theorem \ref{th:theoremSLDIC-outer2}}\label{sec:theoremSLDIC-outer2}
Using Fano's inequality, the rate of user $1$ is upper bounded as
\begin{align}
	NR_1 & \leq I(W_1;\ybold_1^N) + N\epsilon_1, \nonumber \\
	%New line
	& \stackrel{(a)}{\leq} I(W_1;\ybold_1^N,\ybold_{2a}^N) + N\epsilon_1, \text{ where }\ybold_{2a}^N \triangleq (\xbold_{1a}^N,\xbold_{1b}^N), \nonumber \\
	%New line
	& = I(W_1;\ybold_{2a}^N) + I(W_1;\ybold_1^N|\ybold_{2a}^N)+ N\epsilon_1. \label{eq:outertwo-sldic1}
\end{align}
where (a) is due to a genie providing $\ybold_{2a}^N$ to receiver $1$. From the secrecy constraint at receiver $2$, following holds.
\begin{align}
	& I(W_1;\ybold_2^N) = 0, \nonumber \\
	\text{or }& I(W_1;\ybold_{2a}^N,\ybold_{2b}^N) = 0, \text{ where } \ybold_{2b}^N = \xbold_{2a}^N \oplus \xbold_{1c}^N, \nonumber \\
	\text{or }& I(W_1;\ybold_{2a}^N) + I(W_1;\ybold_{2b}^N|\ybold_{2a}^N) = 0. \label{eq:outertwo-sldic2}
\end{align}
As mutual information cannot be negative, $I(W_1;\ybold_{2a}^N) = 0$ and (\ref{eq:outertwo-sldic1}) becomes
\begin{align}
	%New line
	NR_1 & \leq I(W_1;\ybold_1^N|\ybold_{2a}^N)+ N\epsilon_1, \nonumber \\
	%New line
	& = H(\ybold_1^N|\ybold_{2a}^N) - H(\ybold_1^N|\ybold_{2a}^N, W_1) + N\epsilon_1, \nonumber \\
	%New line
	& \stackrel{(a)}{=} H(\xbold_{2a}^N, \xbold_{2b}^N, \xbold_{1a}^N \oplus \xbold_{2c}^N|\xbold_{1a}^N, \xbold_{1b}^N)- H(\xbold_{2a}^N, \xbold_{2b}^N, \xbold_{1a}^N \oplus \xbold_{2c}^N|\xbold_{1a}^N, \xbold_{1b}^N,W_1) + N\epsilon_1, \nonumber \\
	%New line
	& = H(\xbold_2^N|\xbold_{1a}^N, \xbold_{1b}^N) - H(\xbold_2^N|\xbold_{1a}^N, \xbold_{1b}^N,W_1) + N\epsilon_1, \nonumber \\
	% \end{align}
	% \begin{align}
	%New line
	& \leq H(\coopsignalone^N,\coopsignaltwo^N,\xbold_2^N|\xbold_{1a}^N, \xbold_{1b}^N) - H(\xbold_2^N|\xbold_{1a}^N, \xbold_{1b}^N,W_1) + N\epsilon_1, \nonumber \\
	%New line
	& \stackrel{(b)}{\leq} H(\coopsignalone^N,\coopsignaltwo^N|\xbold_{1a}^N, \xbold_{1b}^N) + H(\xbold_2^N|\coopsignalone^N,\coopsignaltwo^N,\xbold_{1a}^N, \xbold_{1b}^N) - H(\xbold_2^N|\coopsignalone^N,\coopsignaltwo^N,\xbold_{1a}^N, \xbold_{1b}^N,W_1) + N\epsilon_1, \nonumber \\
	%New line
	& \stackrel{(c)}{\leq} H(\vonetwob^N,\vtwooneb^N) + H(\xbold_2^N|\coopsignalone^N,\coopsignaltwo^N,\xbold_{1a}^N, \xbold_{1b}^N)- H(\xbold_2^N|\coopsignalone^N,\coopsignaltwo^N,\xbold_{1a}^N, \xbold_{1b}^N,W_1) + N\epsilon_1, \nonumber \\
	%New line
	& \stackrel{(d)}{\leq} H(\coopsignalone^N,\coopsignaltwo^N) + N\epsilon_1, \nonumber \\
	%New line 
	\text{or } R_1 & \leq 2C.\label{eq:outertwo-sldic3}
\end{align}
where (a) is obtained by splitting the message into three parts as shown in Fig. \ref{fig:veryhighouter}; (b) is due to the fact that conditioning cannot increase the entropy; (c) is due to the fact that removing conditioning cannot decrease the entropy; and (d) follows by using the relationship in (\ref{eq:thouter0}). This completes the proof. 
\subsection{Proof of Theorem \ref{th:theoremSLDIC-outer3}}\label{sec:theoremSLDIC-outer3}
Using Fano's inequality, rate of user $1$ is bounded as
\begin{align}
& NR_1 \nonumber \\
& \leq I(W_1;\ybold_1^N) + N\epsilon_1, \nonumber \\
%New line
& \stackrel{(a)}{\leq}  I(W_1;\ybold_1^N, \ybold_{2a}^N) + N\epsilon_1, \nonumber 
\\
%New line
& \stackrel{(b)}{=} I(W_1;\ybold_1^N|\ybold_{2a}^N) + N\epsilon_1, \nonumber \\
%New line
& = H(\ybold_1^N|\xbold_{1a}^N) - H(\ybold_1^N|\xbold_{1a}^N,W_1) + N\epsilon_1, \nonumber \\
%New line
& \leq H(\ybold_1^N,\coopsignalone^N,\coopsignaltwo^N|\xbold_{1a}^N) - H(\ybold_1^N|\xbold_{1a}^N,W_1) + N\epsilon_1, \nonumber \\
%New line
& \leq H(\coopsignalone^N,\coopsignaltwo^N|\xbold_{1a}^N) + H(\ybold_1^N|\coopsignalone^N,\coopsignaltwo^N,\xbold_{1a}^N) - H(\ybold_1^N|\coopsignalone^N,\coopsignaltwo^N,\xbold_{1a}^N,W_1) + N\epsilon_1, \nonumber\\
%New line
& \leq H(\coopsignalone^N,\coopsignaltwo^N) + H(\ybold_1^N|\coopsignalone^N,\coopsignaltwo^N,\xbold_{1a}^N) - H(\ybold_1^N|\coopsignalone^N,\coopsignaltwo^N,\xbold_{1a}^N,W_1) + N\epsilon_1, \nonumber \\
%New line
& \stackrel{(c)}{=} H(\coopsignalone^N,\coopsignaltwo^N) + H(\xbold_{2a}^N,\ybold_{1b}^N|\coopsignalone^N,\coopsignaltwo^N,\xbold_{1a}^N)  - H(\xbold_{2a}^N,\ybold_{1b}^N|\coopsignalone^N,\coopsignaltwo^N,\xbold_{1a}^N,W_1) + N\epsilon_1, \nonumber \\
% \end{align}
% \begin{align}
%New line
& = H(\coopsignalone^N,\coopsignaltwo^N) + H(\xbold_{2a}^N|\coopsignalone^N,\coopsignaltwo^N,\xbold_{1a}^N)  \!+\! H(\ybold_{1b}^N|\coopsignalone^N,\coopsignaltwo^N,\xbold_{1a}^N,\xbold_{2a}^N) \!-\! H(\xbold_{2a}^N|\coopsignalone^N,\coopsignaltwo^N,\xbold_{1a}^N,W_1) \nonumber \\
&  \qquad \! -\! H(\ybold_{1b}^N|\coopsignalone^N,\coopsignaltwo^N,\xbold_{1a}^N,\xbold_{2a}^N,W_1) \!+\! N\epsilon_1, \nonumber \\
%New line
& \stackrel{(d)}{=} H(\coopsignalone^N,\coopsignaltwo^N) +  H(\ybold_{1b}^N|\coopsignalone^N,\coopsignaltwo^N,\xbold_{1a}^N,\xbold_{2a}^N)- H(\ybold_{1b}^N|\coopsignalone^N,\coopsignaltwo^N,\xbold_{1a}^N,\xbold_{2a}^N,W_1)+ N\epsilon_1, \label{eq:outerthree-sldic1}
%New line
\end{align}
where (a) is due to a genie providing $\ybold_{2a}^N$ to receiver~$1$; (b) is obtained using the secrecy constraint at receiver~$2$; (c) is obtained by partitioning of the encoded message and output as shown in Fig.~\ref{fig:highouter}; and (d) is obtained using the relation in~(\ref{eq:thouter0}).

Once again, as the encoded messages at transmitters are correlated, it is not straightforward to bound or simplify the entropy terms in (\ref{eq:outerthree-sldic1}). To overcome this problem, the output $\ybold_{1b}$ is partitioned  into two parts as follows:
\begin{itemize}
 \item $\ybold_{1b}^{(1)}$: contains $\xbold_{1a}$ sent by transmitter~$1$ and the interference caused by transmitter~$2$ due to transmission on the levels $[2m-n+1:m]$
 \item $\ybold_{1b}^{(2)}$: contains $\xbold_{1b}$ sent by transmitter~$1$ and the interference caused by transmitter~$2$ due to transmission on the levels $[1:2m-n]$
\end{itemize}
The partitioning of $\ybold_{1b} = (\ybold_{1b}^{(1)}, \ybold_{1b}^{(2)})$ is illustrated in the Fig.~\ref{fig:highouter}. Now consider the second term in (\ref{eq:outerthree-sldic1}):
\begin{align}
H(\ybold_{1b}^N|\coopsignalone^N,\coopsignaltwo^N,\xbold_{1a}^N,\xbold_{2a}^N) & = H(\ybold_{1b}^{(1)N},\ybold_{1b}^{(2)N}|\coopsignalone^N,\coopsignaltwo^N,\xbold_{1a}^N,\xbold_{2a}^N), \nonumber \\
%New line
& = H(\xbold_{2b}^N, \xbold_{2c}^{(1)N}|\coopsignalone^N,\coopsignaltwo^N,\xbold_{1a}^N,\xbold_{2a}^N) + H(\ybold_{1b}^{(2)N}|\coopsignalone^N,\coopsignaltwo^N,\xbold_{1a}^N,\xbold_{2a}^N,\ybold_{1b}^{(1)N}), \nonumber \\
%New line
& = H(\xbold_{2b}^N, \xbold_{2c}^{(1)N}|\coopsignalone^N,\coopsignaltwo^N,\xbold_{2a}^N)   + H(\ybold_{1b}^{(2)N}|\coopsignalone^N,\coopsignaltwo^N,\xbold_{1a}^N,\xbold_{2a}^N,\ybold_{1b}^{(1)N}), \label{eq:outerthree-sldic2}
\end{align}
where $\xbold_{ic}^{(1)}$ and $\xbold_{ic}^{(2)}$ correspond to the bits
transmitted on the levels $[\min(n-m, 2m-n)+1: n-m - \min(n-m, 2m-n) + 1]$ and $[1:\min(n-m,
2m-n)]$ of transmitter~$i$, respectively.

The above equation is obtained using the fact that $I(\xbold_{2b}^N, \xbold_{2c}^{(1)N};\xbold_{1a}^N|\coopsignalone^N,\coopsignaltwo^N,\xbold_{2a}^N) = 0$. This can be obtained using the relation in (\ref{eq:thouter0}).
%This can be shown with the help of functional dependency graph. It can also be shown by using the fact that given the cooperating signals i.e. $\coopsignalone^N$ and $\coopsignaltwo^N$, encoded message $\xbold_2^N$ is independent of $\xbold_1^N$ and vice-versa.
In a similar way, the third term in (\ref{eq:outerthree-sldic1}) can be simplified as follows:
\begin{align}
& H(\ybold_{1b}^N|\coopsignalone^N,\coopsignaltwo^N,\xbold_{1a}^N,\xbold_{2a}^N,W_1) = H(\xbold_{2b}^N, \xbold_{2c}^{(1)N}|\coopsignalone^N,\coopsignaltwo^N,\xbold_{2a}^N) + H(\ybold_{1b}^{(2)N}|\coopsignalone^N,\coopsignaltwo^N,\xbold_{1a}^N,\xbold_{2a}^N,\ybold_{1b}^{(1)N},W_1). \label{eq:outerthree-sldic3}
\end{align}
From (\ref{eq:outerthree-sldic2}) and (\ref{eq:outerthree-sldic3}), and dropping the last term in (\ref{eq:outerthree-sldic3}), (\ref{eq:outerthree-sldic1}) becomes
\begin{align}
NR_1 & \leq H(\coopsignalone^N,\coopsignaltwo^N) + H(\ybold_{1b}^{(2)N}|\coopsignalone^N,\coopsignaltwo^N,\xbold_{1a}^N,\xbold_{2a}^N,\ybold_{1b}^{(1)N}) + N\epsilon_1, \nonumber \\
%New eqn
%& \leq H(\coopsignalone^N,\coopsignaltwo^N) + H(\ybold_{1b}^{(2)N}) + N\epsilon_1, \nonumber \\
%New eqn
\text{or }R_1 & \leq H(\coopsignalone,\coopsignaltwo) + H(\ybold_{1b}^{(2)}) \leq 2C + 2m-n. \label{eq:outerthree-sldic4}
\end{align}
In the above equation, the term $H(\coopsignalone,\coopsignaltwo)$ is upper bounded by $2C$. From the definition of
$\ybold_{1b}^{(2)}$, it can be seen that the term $H(\ybold_{1b}^{(2)})$ can be upper bounded by $2m-n$. This completes the proof.
%Proof of the  fourth outer bound in sldic begins here.
\subsection{Proof of Theorem \ref{th:theoremSLDIC-outer4}}\label{sec:theoremSLDIC-outer4}
Using Fano's inequality, the rate of user-$1$ is upper bounded as
\begin{align}
	NR_1 & \leq I(W_1;\ybold_1^N) + N\epsilon_1 \stackrel{(a)}{=} I(W_1;\ybold_2^N) + N\epsilon_1, \nonumber \\
	%New line
	%& \stackrel{(b)}{=} t\epsilon_1, \nonumber \\
	%New line
	\text{or } R_1 & \stackrel{(b)}{=} 0, \label{eq:snreqinrone}
\end{align}
where (a) is obtained using the fact that $\ybold_1 = \ybold_2$ and (b) is obtained using the perfect secrecy condition. This completes the proof.
%%%%%%%555%%%%%%%%%%%%%%%%%%%%%%%%%%%%%%%%%%%%%%%%%%%%%%%%%%%%%%%%%%%%%%%%%%%%%%
%%%%%%%%%%%%%%%%%%%%%%%%%%%%%%%%%%%%%%%%%%%%%%%%%%%%%%%%%%%%%%%%%%%%%%%%%%%%%%%%%%%%%%%%%%%%%%%%%%%%%%%%%%%%%%%%%%%%%%%%%%%%%%%%%%%%%%%%%%%%%%%%%%%%%%%%%%%%%%%%%%%%5555
%%%%%%%%%%%%%%%%%%%%%%%%%%%%%%%%%%%%%%%%%%%%%%%%%%%%%%%%%%%%%%%%%%%%%%%%%%%%%%%%%%%%%%%%%%%%%%%%%%%%%%%%%%%%%%%%%%%%%%%%%%%%%%%%%%%%%%%%%%%%%%%%%%%%%%%%%%%%%%%%%%%%%%%%%%%%%%%%%%%%%%%
%%%%%%%%%%%%%%%%%%%%%%%%%%%%%%%%%%%%%%%%%%%%%%%%%%%%%%%%%%%%%%%%%%%%%%%%%%%%%%%%%%%%%%%%%%%%%%%%%%%%%%%%%%%%%%%%%%%%%%%%%%%%%%%%%%%%%%%%%%%%%%%%%%%%%%%%%%%%%%%%%%%%%%%%%%%%%%%%%%%%%%%%%%55
\subsection{Details of the achievable scheme for the SLDIC when $(1 < \alpha <2)$}\label{sec:appen-ach-high-sldic}
\subsubsection{When $(1 <\alpha \leq 1.5)$}
The achievable scheme uses transmission of random bits, interference cancelation, or both, depending on the capacity of the cooperative link. The bits received through the cooperative links are transmitted on the levels $[1:C]$. As the $n-m$ lower levels are not present to the intended receiver, these levels can be used for relaying other user's data bits. Any data bits transmitted on the levels higher than $n-m$ will cause interference. The cooperative data bits transmitted by transmitter $i$ on levels higher than $n-m$ $[n-m+1:C]$ are canceled by transmitter $j$ by transmitting the same data bits along with the data bits of user $i$ $(i \neq j)$. 

In the remaining higher levels, it is possible to transmit data bits securely with the help of transmission of random bits. The transmission of random bits is similar to that in case of moderate interference regime.
%In this case, transmission occurs in blocks of size $3(n-m)$ levels, with each block consisting of a sequence of random bits, data bits and zero-bits of size $n-m$ each, sent on consecutive levels. Such a  scheme ensures that the intended data bits are received cleanly at the desired receiver, and, data received at the unintended receiver remains secure. The total number of blocks of size $3(n-m)$ that can be sent is $B \triangleq \left\lfloor\frac{g}{3(n-m)} \right\rfloor$, where $g \triangleq (m-C)^+$. The remaining $t\triangleq g\% \{3(n-m)\}$ levels may or may not be usable for data transmission depending on the number of levels remaining for random bit transmission. The quantity $q = \min \lcb (t-r_2)^{+}, r_2\rcb$ is the number of data bits that can be securely sent on the remaining $t$ levels, where $r_2 \triangleq n-m$.
The message of transmitter $1$ is encoded as follows
\begin{itemize}
	\item Case 1 $((t-r_2)^+=0))$:
	\begin{align}
		\mathbf{x}_{1} =  \lsqb\begin{array}{l}
			\mathbf{a}_{p \times 1}^e \\
			\mathbf{0}_{s \times 1}
		\end{array} \rsqb \oplus
		\lsqb\begin{array}{l}
			\mathbf{0}_{(n-C) \times 1}^e \\
			\mathbf{b}_{C \times 1}
		\end{array} \rsqb \oplus
		\lsqb\begin{array}{l}
			\mathbf{0}_{v \times 1}^e \\
			\mathbf{a}_{(C-r2)^+ \times 1}^l
		\end{array} \rsqb, \label{eq:justmod1}
	\end{align}
	where $\mathbf{a}^{e} \triangleq [\mathbf{d}_1, \mathbf{u}_2, \mathbf{z}_3, \mathbf{d}_4, \mathbf{u}_5, \mathbf{z}_6, \ldots, \mathbf{d}_{3B-2}, \mathbf{u}_{3B-1}, \mathbf{z}_{3B},]^{T}$, $\mathbf{u}_l \triangleq [ a_{n-(l-1)r_2}, \\ a_{n-(l-1)r_2-1},  \ldots, a_{n-lr_2+1}]$, $\mathbf{d}_l \triangleq \lsqb d_{n-(l-1)r_2}, d_{n-(l-1)r_2-1}, \ldots, d_{n-lr_2+1} \rsqb$, $\mathbf{z}_l$ is a zero vector of size $1 \times r_2$, $\mathbf{b} \triangleq \lsqb b_C, b_{C-1}, \ldots, b_1\rsqb^T$, $\mathbf{a}^l = \lsqb a_C, a_{C-1}, \ldots,a_{r_2+1}\rsqb^T$ $p \triangleq 3Br_2$, $s \triangleq n-p$ and $v \triangleq (n-(C-r_2)^+)$.
	
	\item Case 2 $((t-r_2)^+ \neq 0)$:
	\begin{align}
		\mathbf{x}_{1}^{\text{mod}} =  \mathbf{x}_{1} \oplus \lsqb\begin{array}{l}
			\mathbf{0}_{w \times 1} \\
			\mathbf{a'}_{t \times 1} \\
			\mathbf{0}_{s \times 1}
		\end{array} \rsqb, \label{eq:justmod2}
	\end{align}
	where $\mathbf{x}_1$ is as defined in (\ref{eq:justmod1}), $\mathbf{a'} \triangleq \lsqb \mathbf{d}_{11},  \mathbf{d}_{12}, \mathbf{u}_{11}, \mathbf{u}_{12}, \mathbf{z'}\rsqb^{T}$, $\mathbf{u}_{11} = \lsqb a_{n-3Br_2-q-v'}, \right. \\ \left.a_{n-3Br_2-q-v'-1}, \ldots, a_{n-3Br_2-2q-v'+1}\rsqb$, $\mathbf{d}_{11} \triangleq \lsqb d_{n-3Br_2}, d_{n-3Br_2-1}, \ldots, d_{n-3Br_2-q+1}\rsqb$. Also, $\mathbf{d}_{12}$ and $\mathbf{u}_{12}$  are zero vectors of size $1 \times v'$ and $\mathbf{z'}$ is a zero vector of size $1 \times f$. Here, $v' \triangleq (n-m-q)^+$, $f \triangleq (t-2(q+v'))^{+}$, $s \triangleq  n-m + C$ and $w \triangleq (n-t-s)^+$.
\end{itemize}
The proposed encoding scheme achieves the following symmetric secrecy rate:
\begin{align}
	R_s  = B(n-m)+ C + q. \label{eq:justmod3}
\end{align}
%Next subsection begins here
\subsubsection{When $(1.5 <\alpha < 2)$}
The links in the SLDIC can be classified into three categories: Type I, Type II, and Type III, as shown in Fig. \ref{fig:highsec1}. The classification is based on whether the data bits are received cleanly or with interference at the intended receiver, and whether or not they are present to the intended receiver. \\
\textbf{Case $1$} \textit{When $(0 \leq C \leq 4n-6m)$:} In this case, the achievable scheme uses a combination of interference cancelation, transmission of random bits and relaying of other user's data bits. The data bits transmitted by transmitter $i$ on the levels associated with Type II links $[n-m+1:m]$ will be received at the unintended receiver $j$ $(j \neq i)$. In order to ensure secrecy, transmitter $j$ transmits random bits on the levels $[2(n-m)+1:n]$. The remaining levels can be used for transmitting other user's data bits received through cooperation. The cooperative bits are transmitted on the levels corresponding to Type I and Type III links. The $C_1 = \lfloor \frac{C}{2}\rfloor$ data bits transmitted by transmitter $i$ for transmitter $j$ on the levels corresponding to Type III links will not be received at the receiver $i$ and hence, will remain secure. The data bits transmitted on the levels corresponding to Type I links by transmitter $i$ for transmitter $j$ will cause interference at 
receiver $i$. The interference is required to be canceled by transmitter $j$ by transmitting the same bits. 

The achievable scheme is shown for $m=5$ and $n=8$ for $C=2$ in Fig. \ref{fig:highsec2}. The message of transmitter $1$ is encoded as follows
\begin{align}
	\mathbf{x}_{1} =  \lsqb\begin{array}{l}
		\mathbf{d}_{l \times 1} \\ \\ \\ \\
		\mathbf{0}_{(n-l) \times 1}
	\end{array} \rsqb \oplus
	\lsqb\begin{array}{l}
		\mathbf{0}_{(n-m) \times 1} \\ \\
		\mathbf{a}_{l \times 1} \\ \\
		\mathbf{0}_{(n-m) \times 1}
	\end{array} \rsqb \oplus
	\lsqb\begin{array}{l}
		\mathbf{0}_{(n-m-C_2) \times 1} \\ 
		\mathbf{b}_{C_2 \times 1}^u \\
		\mathbf{0}_{l \times 1} \\
		\mathbf{b}_{C_1 \times 1}^{l} \\
		\mathbf{0}_{(n-m-C_1) \times 1}
	\end{array} \rsqb \oplus
	\lsqb\begin{array}{l}
		\mathbf{0}_{m \times 1} \\ \\ 
		\mathbf{a'}_{C_2 \times 1} \\ \\
		\mathbf{0}_{(n-m-C_2) \times 1}
	\end{array} \rsqb, \label{eq:justmodpart1}
\end{align}
where  $\mathbf{d} \triangleq [d_{n}, d_{n-1}, \ldots, d_{n-l+1}]^T$, $\mathbf{a} \triangleq [a_{l}, a_{l-1}, \ldots, a_{1}]^T$,  $\mathbf{b}^l \triangleq [b_{n-m}, a_{n-m-1}, \ldots, b_{n-m-C_1+1}]^T$, $\mathbf{b}^u \triangleq [b_{m+C_2}, b_{m+C_2-1}, \ldots, b_{m+1}]^T$,  $\mathbf{a'} \triangleq [a_{m+C_2}, a_{m+C_2-1}, \ldots, a_{m+1}]^T$ and $l\triangleq 2m-n$. The proposed scheme achieves the following secrecy rate
\begin{align}
	R_s = 2m-n + C.  \label{eq:justmodpart2}
\end{align}
%First case ends here.
\textbf{Case $2$} \textit{When $(4n-6m < C \leq n)$:} In this case, $4n-6m$ cooperative data bits out of $C$ cooperative bits obtained through cooperation are used in a similar way as described in the previous case. Define $C_1 \triangleq C_2 \triangleq 2n-3m$. The remaining cooperative bits $C' \triangleq C - (4n-6m)$ is used as explained below. Let $C'' \triangleq \mylceil \frac{C'}{3}\myrceil$. The number of data bits that can be relayed by transmitter $i$ for transmitter $j$ on the lower levels corresponding to Type III links is $C_{T_3} = \min\lcb 2m-n,C''\rcb$. The remaining cooperative bits $C_{\text{rem}} = (C' - C_{T_3})^+$ are transmitted on the levels corresponding to Type I and II links. The $C_{T_1} \triangleq  \min\lcb \mylceil \frac{C_{\text{rem}}}{2}\myrceil,2m-n\rcb$ cooperative bits sent by transmitter $i$ causes interference at receiver $i$. These bits are canceled by transmitter $j$ by sending the same $C_{T_1}$ bits. The remaining $C_{T_2} = \min\lcb 2m-n, (C_{\text{rem}} - C_{T_1})^{+}\rcb$ bits are transmitted on the Type II links by transmitter $i$. These data bits cause interference at receiver $i$, which is canceled by transmitter $j$. The number of data bits that can be sent on the Type II links with the help of transmission of random bits is $r_d = \min\lcb (2m-n-C_{T_3})^{+}, 2m-n-C_{T_2}\rcb$. The achievable scheme is shown for $m=5$, $n=8$ and $C=4$ in Fig. \ref{fig:highsec4}. The message of transmitter $1$ is encoded as follows.
\begin{align}
	\mathbf{x}_{1} & =  \lsqb\begin{array}{l}
		\mathbf{0}_{(2m-n)\times 1} \\ 
		\mathbf{b}_{(2n-3m) \times 1}^{u} \\
		\mathbf{0}_{(2m-n) \times 1} \\
		\mathbf{b}_{(2n-3m) \times 1}^{l} \\
		\mathbf{0}_{(2m-n)\times 1} 
	\end{array} \rsqb \oplus
	\lsqb\begin{array}{l}
		\mathbf{0}_{(n-m-C_{T_1})^+ \times 1} \\ 
		\mathbf{b'}_{C_{T_1} \times 1}^u \\ \\
		\mathbf{0}_{(m-C_{T_3}) \times 1} \\
		\mathbf{b'}_{C_{T_3} \times 1}^l \\
	\end{array} \rsqb \oplus
	\lsqb\begin{array}{l}
		\mathbf{0}_{(m-C_{T_2}) \times 1} \\ \\ 
		\mathbf{b}_{C_{T_2} \times 1}^m \\ \\
		\mathbf{0}_{(n-m) \times 1}
	\end{array} \rsqb \oplus 
	\lsqb\begin{array}{l}
		\mathbf{0}_{m \times 1} \\ \\
		\mathbf{a}_{(2n-3m) \times 1}^u \\ 
		\mathbf{0}_{(2m-n-C_{T_2}) \times 1} \\ 
		\mathbf{a}_{C_{T_2} \times 1}^{m} 
	\end{array} \rsqb \oplus \nonumber \\
	& \qquad \qquad \qquad \qquad  \lsqb\begin{array}{l}
		\mathbf{0}_{(m-C_{T_1}) \times 1} \\ \\
		\mathbf{a'}_{C_{T_1} \times 1}^u \\ 
		\mathbf{0}_{(n-m) \times 1}^{m} 
	\end{array} \rsqb \oplus
	\lsqb\begin{array}{l}
		\mathbf{d}_{r_d \times 1} \\ 
		\mathbf{0}_{(n-m-r_d) \times 1}\\
		\mathbf{a}_{r_d \times 1}^e \\ 
		\mathbf{0}_{(m-r_d) \times 1} 
	\end{array} \rsqb, \label{eq:justmodpart3}
\end{align}
where $\mathbf{b}^l \triangleq [b_{n-m}, d_{n-m-1}, \ldots, b_{2m-n+1}]^T$, $\mathbf{b}^u \triangleq [b_{2(n-m)}, d_{2(n-m)-1}, \ldots, b_{m+1}]^T$, $\mathbf{b'}^l \triangleq [b_{C_{T_3}},\\ b_{C_{T_3}-1}, \ldots, b_{1}]^T$, $\mathbf{b'}^u \triangleq [b_{2(n-m) + C_{T_1}}, b_{2(n-m) + C_{T_1}-1}, \ldots, b_{2(n-m)+1}]^T$, $\mathbf{b'}^m \triangleq [b_{n-m + C_{T_2}}, b_{n-m + C_{T_2}-1}, \\ \ldots, b_{n-m+1}]^T$, $\mathbf{a}^u \triangleq [a_{2(n-m)}, a_{2(n-m)-1}, \ldots,  a_{m + 1}]^T$, $\mathbf{a}^m \triangleq [a_{n-m + C_{T_2}}, a_{n-m + C_{T_2}-1}, \ldots,  a_{n-m + 1}]^T$, $\mathbf{a'}^u \triangleq [a_{2(n-m) + C_{T_1}}, a_{2(n-m) + C_{T_1}-1}, \ldots,  a_{2(n-m) + 1}]^T$, $\mathbf{d} \triangleq [d_n, d_{n-1}, \ldots,  d_{n-r_d + 1}]^T$ and $\mathbf{a}^e \triangleq [a_{2m-n}, a_{2m-n-1}, \ldots,  a_{2m-n-r_d + 1}]^T$. The proposed scheme achieves the following secrecy rate
\begin{align}
	R_S = 4n-6m + C_{T_1} + C_{T_2} + C_{T_3} + r_d. \label{eq:justmodpart4}
\end{align}
\begin{figure}
	\centering
	\mbox{\subfigure[][]{\includegraphics[width=2in,height=2in]{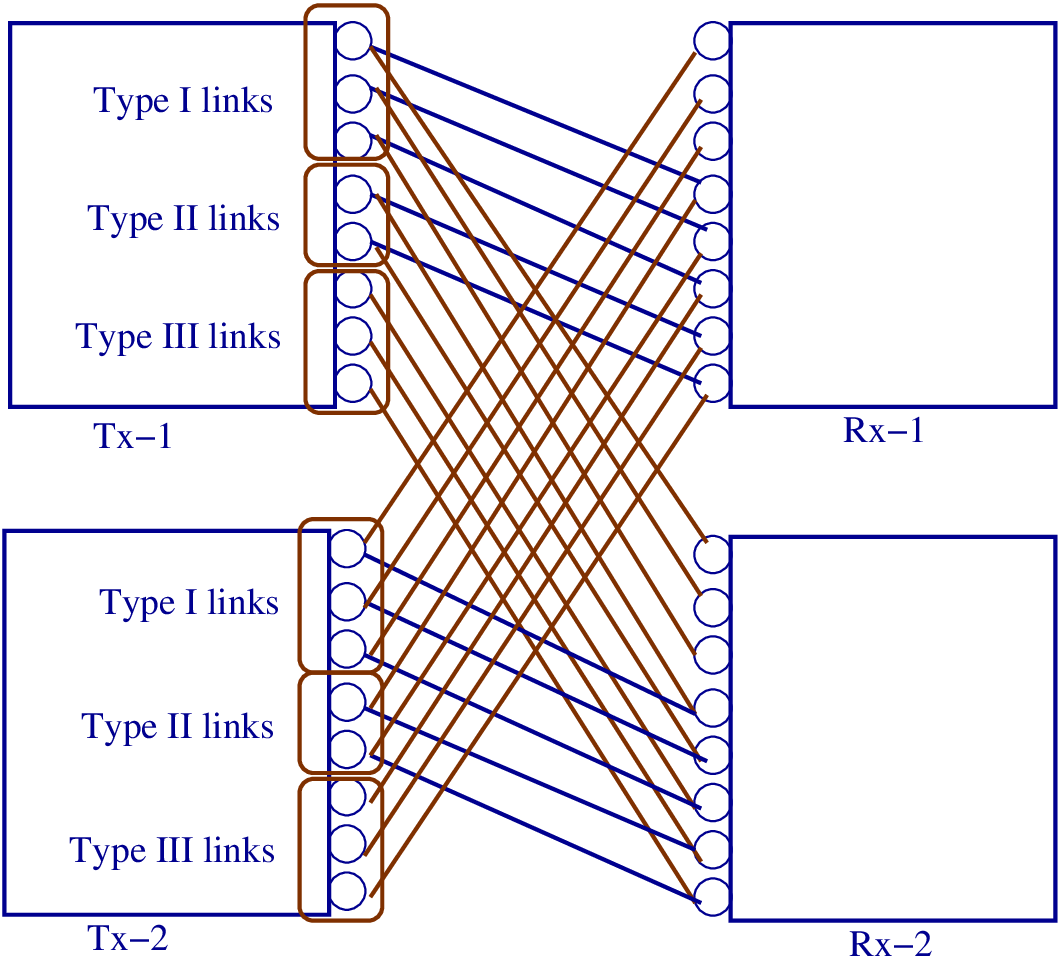}\label{fig:highsec1}} \quad
		\subfigure[][]{\includegraphics[width=2in,height=2in]{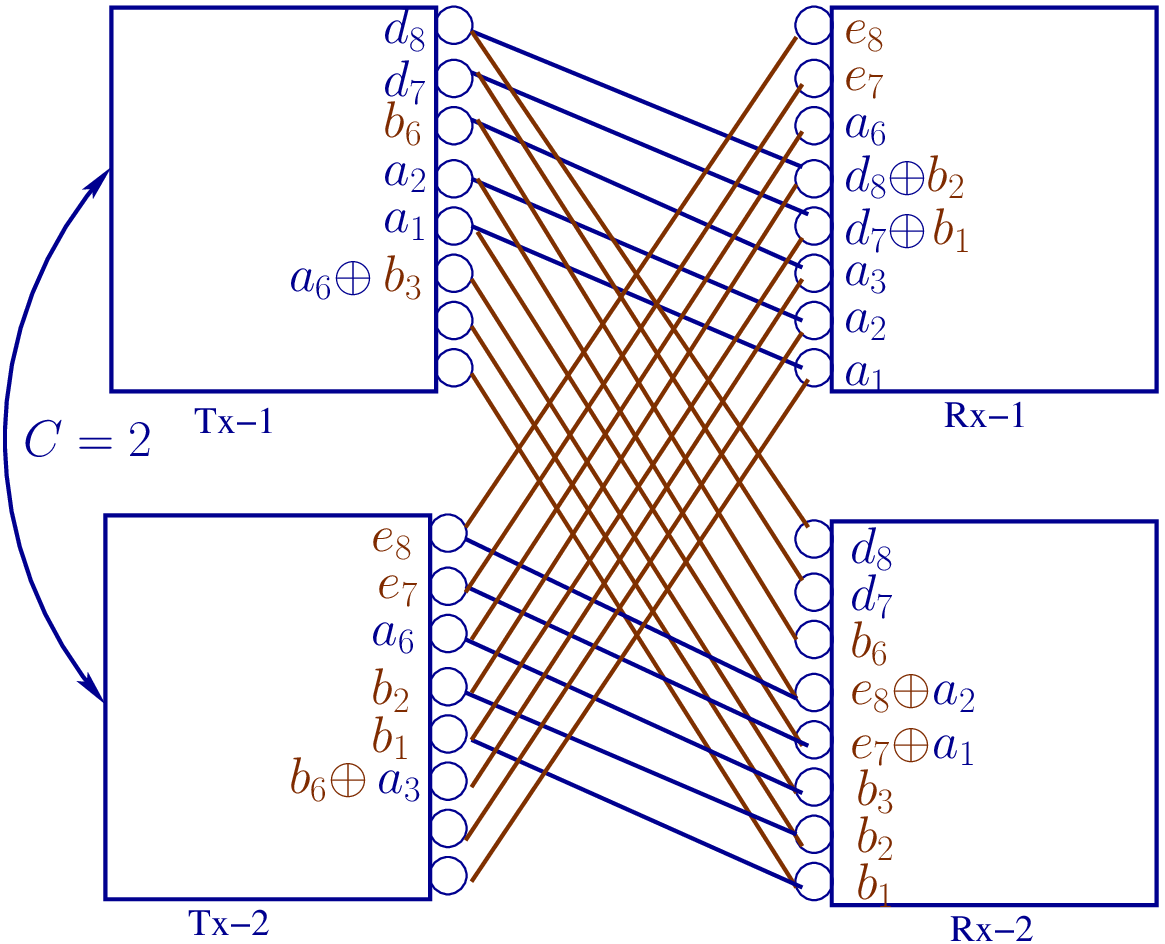}\label{fig:highsec2}} \quad
		\subfigure[][]{\includegraphics[width=2in,height=2in]{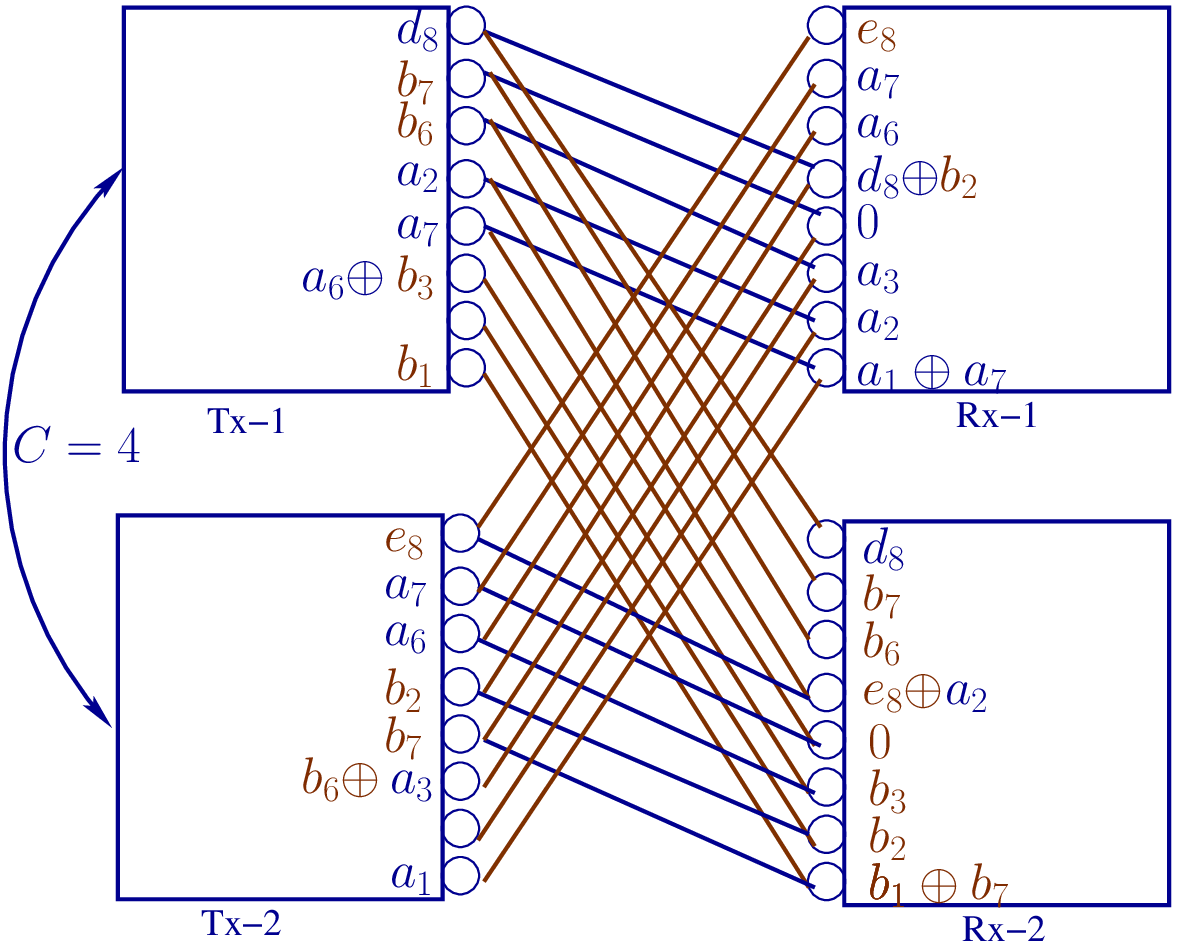}\label{fig:highsec3}}} \\
	%\subfigure[][]{\includegraphics[width=3in]{fig_m4n2c2.eps}\label{fig:weak3}}
	\caption[]{SLDIC with $m=5$ and $n=8$: \subref{fig:highsec1} Different types of links, \subref{fig:highsec2} $C=2$, $R_S=4$ \subref{fig:highsec3} $C=4$, $R_S=5$.}\label{fig:highsec4}
	\pullUp
\end{figure}
%Achievable scheme for the high interference regime ends here.
\subsection{Details of the achievable scheme for the SLDIC when $(\alpha \geq 2)$}\label{sec:appen-ach-veryhigh-sldic}
\subsubsection{When $0 < C \leq \frac{m}{2}$ and $m$ is even}\label{sec:veryhigh1} In this case, interestingly, transmitters
share only random bits through the cooperative links. Each transmitter generates $C$ random bits independent
of data bits with $\operatorname{Bern} \left({\frac{1}{2}}\right)$. The achievable scheme involves transmitting
the data bits xored with the random bits. The same random bits are transmitted by the other transmitter, so as
to cancel them out at the desired receiver. In contrast to the achievable schemes in Secs. \ref{sec:SLDIC-ach-weak-mod}
and \ref{sec:SLDIC-ach-highint}, the random bits transmission causes jamming to the unintended receiver only.
Through careful observation it is found that sharing random bits through the cooperative links can achieve higher
secrecy rate compared to sharing data bits only.

%The achievable scheme is illustrated for random bits sharing
%and data bits sharing for $C=1$ in Figs. \ref{fig:sldic-veryhigh1} and \ref{fig:sldic-veryhigh2}, respectively.
In this case, the signal of transmitter $1$ is encoded as follows:
\begin{align}
\mathbf{x}_{1} & =  \lsqb\begin{array}{l}
	           \mathbf{0}_{(m-2C)^+ \times 1} \\
		    \mathbf{a}_{2C \times 1} \\
                   \mathbf{0}_{(n-m) \times 1}
	           \end{array} \rsqb \oplus
		 \lsqb\begin{array}{l}
	           \mathbf{0}_{(n-2C) \times 1} \\ \\		
                   \mathbf{d}_{2C \times 1}^1
	           \end{array} \rsqb  \oplus
		\lsqb\begin{array}{l}
	           \mathbf{0}_{(m-2C)^+ \times 1} \\ 	
                   \mathbf{d}_{2C \times 1}^2 \\
		   \mathbf{0}_{(n-m) \times 1}
	           \end{array} \rsqb, \label{eq:sldic-veryhigh1}
\end{align}
where $\mathbf{a} \triangleq [a_{2C}, a_{2C-1}, \ldots, a_1]^T$, $\mathbf{d}^1 \triangleq [e_C,d_C, \ldots, e_1,d_1]^T$
and $\mathbf{d}^2 \triangleq [d_C,e_C, \ldots, d_1,e_1]^T$. 

The proposed scheme achieves the following secrecy rate:
\begin{align}
R_s = 2C. \label{eq:sldic-veryhigh2}
\end{align}
Note that, with data bits sharing, the achievable scheme achieves
\begin{align}
R_s = C. \label{eq:sldic-veryhigh3}
\end{align}
Hence, under the proposed scheme, one can achieve higher rate by sharing random bits than by
sharing the data bits.
%Next subsection begins here
\subsubsection{When $(\frac{m}{2} < C \leq n- \frac{3m}{2})$ and $m$ is even}\label{sec:sldic-veryhigh2} In this case, the
transmitters exchange $\frac{m}{2}$ random bits and $(C- \frac{m}{2})$ data bits. The random bits are used
in an analogous fashion as described in the previous subsection. The links corresponding to the levels from
$[m+1:n-m]$ are present only at the unintended receiver and data bits transmitted on these levels are
received without interference at the unintended receiver. Hence, any data bits of the other user relayed
using these levels will remain secure. In this case, the signal of transmitter~$1$ is encoded as follows:
\begin{align}
\mathbf{x}_{1} & =  \lsqb\begin{array}{l}
	            \mathbf{a}_{m \times 1} \\ \\
                   \mathbf{0}_{(n-m) \times 1}
	           \end{array} \rsqb \oplus
		 \lsqb\begin{array}{l}
	           \mathbf{0}_{(n-m) \times 1} \\ \\		
                   \mathbf{d}_{m \times 1}^1
	           \end{array} \rsqb \oplus
		\lsqb\begin{array}{l}
	           \mathbf{d}_{m \times 1}^2 \\ \\
		   \mathbf{0}_{(n-m) \times 1}
	           \end{array} \rsqb  \oplus
	           \lsqb\begin{array}{l}
	           \mathbf{0}_{(n-C-\frac{m}{2}) \times 1} \\ 		
                   \mathbf{b}_{(C - \frac{m}{2}) \times 1}^c \\
                   \mathbf{0}_{m \times 1}
	           \end{array} \rsqb, \label{eq:veryhigh3a}
\end{align}
where  $\mathbf{a} \triangleq [a_{m}, a_{m-1}, \ldots, a_1]^T, \mathbf{d}^1 \triangleq [e_{\frac{m}{2}},d_{\frac{m}{2}}, \ldots, e_1,d_1]^T$,
$\mathbf{d}^2 \triangleq [d_{\frac{m}{2}},e_{\frac{m}{2}}, \ldots, d_1,e_1]^T$ and
$\mathbf{b}^c \triangleq [ b_{\frac{m}{2}+C}, b_{\frac{m}{2}+C-1}, \ldots, b_{m+1}]^T$.

The proposed scheme achieves the following secrecy rate:
\begin{align}
R_s  = \frac{m}{2} + C. \label{eq:veryhigh3b}
\end{align}
%Next subsection begins here
\subsubsection{When $(n-\frac{3m}{2} < C < n-\frac{m}{2})$ and $m$ is even}\label{sec:sldic-veryhigh3} The novelty of the
proposed scheme is in precoding the data bits of the user partly with the other user's data bits and/or with
random bits. The random bits used for precoding may be generated at its own transmitter or obtained from
the other transmitter through the cooperative link. Then, by appropriately transmitting data bits or random
bits on the levels of the SLDIC, the random bits are canceled at the intended receiver, or the data bits of the
other user are canceled out at the unintended receiver. The details of the achievable scheme is as follows.

The achievable scheme uses transmission of random bits, interference cancelation, time sharing and relaying
of the other user's data bits. The transmitters share a combination of random bits and data bits through the
cooperative links. To simplify the understanding of the achievable scheme, first consider the $\alpha=2$ case.
In this case, both the transmitters share $\frac{m}{2}$ random bits along with $C_1 \triangleq C - \frac{m}{2}$
data bits. In the first time slot, transmitter~$1$ sends $m$ random bits ($d_i$ and $e_i$) on alternate levels
in $[1:m]$. In order to eliminate the interference caused by these random bits at receiver~$2$, the data bits of
transmitter~$2$ are precoded (xored) with these $m$ random bits and transmitted on the levels from $[m+1:2m]$
from transmitter~$2$. The random bits are not canceled at receiver~$1$. Further, receiver~$1$ has no knowledge
of these random bits. Hence, it cannot decode the bits intended to receiver~$2$. Also, the data bits of transmitter~$2$
received through
the cooperative link are transmitted at the upper levels $[n-C_1+1:n]$ from transmitter~$1$. Again, in order to
ensure secrecy at receiver~$1$, transmitter~$2$ sends the same data bits at levels $[m-C_1+1:m]$ along
with the $C_1$ data bits of transmitter~$1$, also received through cooperation. This not only cancels the
interference due to the bits sent on levels $[n-C_1+1:n]$ at receiver~$1$, but also enables transmitter~$2$ to
relay the data bits of transmitter~$1$.

In the remaining upper levels $[m+1:n-C_1]$, transmitter~$1$  sends its own data bits xored with random bits.
Transmitter~$2$ transmits the same random bits on levels $[1:C_1]$ to cancel the random bits at receiver~$1$.
In this way, transmitter $1$ sends $m-C_1$ data bits of its own and $C_1$ data bits of transmitter~$2$,
in the first time slot. Simultaneously, transmitter~$2$ is able to send $m$ data bits of its own and $C_1$
data bits of transmitter~$1$. In the second time slot, the roles of transmitters~$1$ and $2$ are reversed.

In contrast to the achievable schemes for other interference regimes, transmitters exchange both random
bits and data bits through the cooperative links. However, as the capacity of the cooperative links increases,
it is required to exchange less number of random bits.
%Figures~\ref{fig:sldic-veryhigh7} and \ref{fig:sldic-veryhigh8} illustrate the scheme for $m=2$ and $n=4$, with $C=2$ bits.

When $\alpha > 2$, it is straightforward to extend the achievable scheme described above. Both the transmitter
exchanges $\frac{m}{2}$ random bits and $C' \triangleq C - \frac{m}{2}$ data bits. Out of $C'$ data bits
obtained through cooperation, $n-2m$ data bits are securely relayed using the levels $[m+1:n-m]$. The $m$
random bits and the remaining $C_1 \triangleq C' - n + 2m$ data bits obtained through cooperation are
used in a similar manner as explained for the $\alpha=2$ case. The signal of transmitter~$1$ in the first time
slot is encoded as follows:
\begin{align}
\xbold_1 & = \lsqb\begin{array}{l}
	           \mathbf{0}_{(n-m) \times 1} \\ \\
		   \mathbf{d}_{m \times 1}^1
	           \end{array} \rsqb \oplus
		\lsqb\begin{array}{l}
	           \mathbf{b}_{C_1 \times 1}^c \\
	           \mathbf{a}_{(m-C_1) \times 1} \oplus \mathbf{d}_{(m-C_1) \times 1}^{2}\\
	           \mathbf{0}_{(n-m) \times 1}
	           \end{array} \rsqb  \oplus
	         \lsqb\begin{array}{l}
	           \mathbf{0}_{m \times 1} \\
		   \mathbf{b'}_{(n-2m) \times 1}^c \\
		   \mathbf{0}_{m \times 1}
	           \end{array} \rsqb  , \label{eq:sldic-veryhigh4}
\end{align}
where $\mathbf{d}^1 \triangleq [e_{m/2}, d_{m/2}, \ldots, e_1,d_1]^T$, $\mathbf{b}^c \triangleq [b_n, b_{n-1}, \ldots, b_{n-C_1+1}]^T$,
$\mathbf{a} \triangleq [a_{m-C_1}, \ldots, a_2,a_1]^T$, $\mathbf{d}^2 \triangleq [d_q, e_q, \ldots, d_1,e_1]^T$
if $m-C_1$ is even, $\mathbf{d}^2 \triangleq [e_{q+1}, d_q, e_q, \ldots, d_1,e_1]^T$ if $m-C_1$ is odd,
$q \triangleq \lfloor \frac{m-C_1}{2}\rfloor$ and $\mathbf{b'}^c \triangleq [b_{n-m}, b_{n-m-1} \ldots, b_{m+1}]^T$.

The signal of transmitter $2$ in the first time slot is encoded as follows:
\begin{align}
\xbold_2 & = \lsqb\begin{array}{l}
	           \mathbf{b}_{m \times 1} \oplus \mathbf{e}_{m \times 1}^2 \\ \\
		   \mathbf{0}_{(n-m) \times 1}
	           \end{array} \rsqb \oplus
		\lsqb\begin{array}{l}
	           \mathbf{0}_{(n-m) \times 1} \\
		   \mathbf{b}_{C_1 \times 1}^l \oplus \mathbf{a}_{C_1 \times 1}^c\\
		   \mathbf{e}_{(m-C_1) \times 1}^{1}
	           \end{array} \rsqb \oplus
 	          \lsqb\begin{array}{l}
	           \mathbf{0}_{m \times 1} \\
		   \mathbf{a'}_{(n-2m) \times 1}^c \\
		   \mathbf{0}_{m \times 1}
	           \end{array} \rsqb, \label{eq:sldic-veryhigh5}
\end{align}
where $\mathbf{b} \triangleq [b_m, b_{m-1}, \ldots, b_1]^T$, $\mathbf{e}^2 \triangleq [e_{m/2}, d_{m/2},
\ldots, e_1,d_1]^T$, $\mathbf{b}^l \triangleq [b_n, b_{n-1},\ldots, b_{n-C_1+1}]^T$,
$\mathbf{a}^c \triangleq [a_m, a_{m-1}, \ldots, a_{m-C_1+1}]^T$, $\mathbf{e}^1 \triangleq [d_q, e_q, \ldots, d_1,e_1]^T$ if
$m-C_1$ is even, $\mathbf{e}^1 \triangleq [e_{q+1}, d_q, e_q, \ldots, d_1,e_1]^T$ if
$m-C_1$ is odd, $q \triangleq \lfloor \frac{m-C_1}{2}\rfloor$ and $\mathbf{a'}^c \triangleq [a_{n-m}, a_{n-m-1}, \ldots, a_{m+1}]^T$. 

In the second time slot, the encoding for transmitters $1$ and $2$ is reversed. The proposed scheme achieves the
following secrecy rate:
\begin{align}
R_s =  \frac{n}{2} - \frac{m}{4} + \frac{C}{2}.  \label{eq:sldic-veryhigh6}
\end{align}
%Subsection ends here
\subsubsection{When $( n-\frac{m}{2} \leq C \leq n)$ and $m$ is even}\label{sec:sldic-veryhigh4} In this case, both the transmitters
share $C$ data bits and the achievable scheme uses interference cancelation. The signal of transmitter $1$
is encoded as follows:
\begin{align}
\xbold_1 = \lsqb\begin{array}{l}
	           \mathbf{0}_{(n-C+m)^+ \times 1} \\ \\
		   \mathbf{a}_{(C-m) \times 1}
	           \end{array} \rsqb \oplus
		\lsqb\begin{array}{l}
	           \mathbf{0}_{(n-C)^+ \times 1} \\ \\
		   \mathbf{b}_{C \times 1}
	           \end{array} \rsqb, \label{eq:sldic-veryhigh7}
\end{align}
where $\mathbf{a} \triangleq [a_{C}, a_{C-1}, \ldots, a_{m+1}]^T$ and $\mathbf{b}  \triangleq [b_C, b_{C-1}, \ldots, b_1]^T$. 

The proposed scheme achieves the following secrecy rate:
\begin{align}
R_s = C. \label{eq:sldic-veryhigh8}
\end{align}
\subsubsection{When $0 < C \leq \frac{m+1}{2}$ and $m$ is odd}\label{sec:sldic-veryhigh5} The achievable scheme is same as that mentioned for $\alpha \geq 2$ and even valued $m$ case. 
% There is minor modification to the encoding scheme. The message of transmitter $1$ is encoded as follows
% \begin{align}
% \mathbf{x}_{1} =  \lsqb\begin{array}{l}
% 	           \mathbf{0}_{(m-C') \times 1} \\
% 		    \mathbf{a}_{C' \times 1} \\
%                    \mathbf{0}_{(n-m) \times 1}
% 	           \end{array} \rsqb \oplus
% 		 \lsqb\begin{array}{l}
% 	           \mathbf{0}_{(n-C') \times 1} \\ \\		    
%                    \mathbf{d}_{C' \times 1}^1
% 	           \end{array} \rsqb \oplus
% 		\lsqb\begin{array}{l}
% 	           \mathbf{0}_{(m-C') \times 1} \\ 	    
%                    \mathbf{d}_{C' \times 1}^2 \\
% 		   \mathbf{0}_{(n-m) \times 1}
% 	           \end{array} \rsqb \label{eq:veryhigh8}
% \end{align}
% where $C' = \min\{2C,m\}$, $\mathbf{a} = \lsqb a_{C'}, a_{C'-1}, \ldots, a_{1}\rsqb^T$, $\mathbf{d}^1 = \lsqb e_C, d_C, \ldots, e_1, d_1\rsqb^T$ if $2C < m$,  $\mathbf{d}^1 = \lsqb d_C, e_{C-1}, d_{C-1}, \ldots, e_1, d_1\rsqb^T$ if $2C > m$,  $\mathbf{d}^2 = \lsqb d_C, e_C, \ldots, d_1, e_1\rsqb^T$ if $2C < m$ and $\mathbf{d}^2 = \lsqb e_C, d_{C-1}, e_{C-1}, \ldots, d_1, e_1\rsqb^T$ if $2C > m$. 
The proposed scheme achieves the following secrecy rate:
\begin{align}
	R_s = \min\{2C,m\}. \label{eq:sldic-veryhigh9}
\end{align}
\subsubsection{When $ \frac{m+1}{2}< C \leq \frac{2n-3m+1}{2}$ and $m$ is odd}\label{sec:sldic-veryhigh6} In this case, the achievable scheme is same as that mentioned for $(\frac{m}{2} < C \leq n- \frac{3m}{2})$ and even valued $m$ case. The message of transmitter $1$ is encoded as follows
\begin{align}
	\mathbf{x}_{1} =  \lsqb\begin{array}{l}
		\mathbf{d}_{m \times 1}^u \\
		\mathbf{0}_{(n-2m) \times 1} \\
		\mathbf{d}_{m \times 1}^l
	\end{array} \rsqb \oplus
	\lsqb\begin{array}{l}
		\mathbf{0}_{(n-m-C') \times 1} \\ 	    
		\mathbf{b}_{C' \times 1}^c \\
		\mathbf{0}_{m \times 1}
	\end{array} \rsqb, \label{eq:sldic-veryhigh10}
\end{align}
where $\mathbf{d}^u = [e_{\frac{m+1}{2}}, \ldots, d_1, e_1]^T$, $\mathbf{d}^l = [d_{\frac{m+1}{2}}, \ldots, e_1, d_1]^T$, $\mathbf{b}^c = [a_{m+C'}, a_{m+C'-1}, \ldots, a_{m+1}]^T$ and $C' = C - \frac{m+1}{2}$. The proposed scheme achieves the following secrecy rate
\begin{align}
	R_s = m + \min\lcb C - \frac{m+1}{2}, n-2m\rcb. \label{eq:sldic-veryhigh11}
\end{align}
\subsubsection{When $\frac{2n-3m+1}{2} < C \leq n$ and $m$ is odd}\label{sec:sldic-veryhigh7} In this case, the achievable scheme is similar to that described in \ref{sec:sldic-veryhigh3}. The proposed scheme differs in the way encoding is performed. To simplify the understanding of the encoding scheme, it is explained for the $\alpha=2$ case. In the first time slot, transmitter $1$ sends $\coneuu = \left\lceil{\frac{C}{2}}\right\rceil$ data bits of transmitter $2$ received through cooperation on the upper levels $[n - \coneuu+1:n]$. In order to ensure secrecy at receiver $1$, transmitter $2$ sends the same data bits at levels $[m-\ctwolu + 1:m]$ along with the $\ctwolu = \coneuu$ data bits of transmitter $1$, also received through cooperation. In the remaining upper levels $[m+1:n-\coneuu]$, transmitter $1$ sends $\coneul = (m-\coneuu)^+$ its own data bits xored with random bits. Transmitter $2$ sends the same random bits on levels $[1:\ctwoll]$ $(\ctwoll = \coneul)$ to cancel the random bits at receiver 
$1$. The number of random cooperative bits used in such transmission is $C_1^r = \left\lceil \frac{\coneul}{2} \right\rceil$. Also, transmitter $1$ relays $\conelu = (C - \coneuu - C_1^r)^{+}$ data bits of transmitter $2$ received through cooperation on the levels $[m-\conelu+1:m]$. As the links corresponding to these levels are not present to receiver $1$, these data bits remain secure. Transmitter $1$ sends $\conell = \min\{2C_1^r, (m-\conelu)^{+})\}$ of random bits on the levels $[1:\conell]$ to ensure secrecy for transmitter $2$ data. Transmitter $2$ sends its $\ctwoul=\conell$ data bits precoded with the same random bits transmitted on the levels $[1: \conell]$, to eliminate the random bits at receiver $2$. The number of cooperative random bits used by transmitter $2$ is $C_2^r = \max\lcb \left\lceil \frac{\ctwoll}{2}\right\rceil, \left\lfloor \frac{\ctwoul}{2}\right\rfloor\rcb$. Then. transmitter $2$ can relay the remaining $\ctwouu = (C - \ctwolu - C_2^r)^+$ cooperative data bits on the upper levels $[
n-\ctwouu +1:n]$. As these bits will cause interference at receiver $2$, transmitter $1$ sends the same data bits on the levels $[m-\ctwouu+1:m]$ to cancel the interference at receiver $2$. In the first time slot, the message of transmitter $1$ is encoded as follows.
\begin{align}
	\mathbf{x}_1 & = \lsqb\begin{array}{l}
		\mathbf{b'}_{\coneuu \times 1}^{uu} \\ 
		\mathbf{a}_{\coneul \times 1}^{ul} \\ \\
		\mathbf{0}_{(n-\coneuu-\coneul) \times 1}
	\end{array} \rsqb \oplus
	\lsqb\begin{array}{l}
		\mathbf{0}_{(n-m) \times 1} \\ 
		\mathbf{b'}_{\conelu  \times 1}^{lu} \\
		\mathbf{0}_{(m-\conelu - \conell)^+ \times 1} \\
		\mathbf{d}_{\conell \times 1}
	\end{array} \rsqb \oplus   
	\lsqb\begin{array}{l}
		\mathbf{0}_{\coneuu \times 1} \\ 
		\mathbf{d}_{\coneul  \times 1}^{ul} \\ \\
		\mathbf{0}_{(n-\coneuu-\coneul) \times 1}
	\end{array} \rsqb \oplus 
	\lsqb\begin{array}{l}
		\mathbf{0}_{m \times 1} \\
		\mathbf{b}_{(n-2m) \times 1}^c \\
		\mathbf{a}_{\ctwouu  \times 1}^{lu} \\ 
		\mathbf{0}_{(m-\ctwouu)^+ \times 1}
	\end{array} \rsqb,\label{eq:sldic-veryhigh12}
\end{align}
where  $\mathbf{b'}^{uu} = [b_{n}, b_{n-1}, \ldots, b_{n-\coneuu+1}]^T$, $\mathbf{b'}^{lu} = [b_{m}, b_{m-1}, \ldots, b_{m-\conelu+1}]^T$,  $\mathbf{a}^{ul} = [a_{\coneul}, a_{\coneul-1},\\ \ldots, a_{1}]^T$,  $\mathbf{d}^{ul} = [d_{C_1^r}, e_{C_1^r}, \ldots, d_1,e_1]^T$ if $C_1^r$ is even,  $\mathbf{d}^{ul} = [e_{C_1^r}, d_{C_1^r-1}, e_{C_1^r-1}, \ldots, d_1,e_1]^T$ if $C_1^r$ is odd,   $\mathbf{a}^{lu} = [a_{n}, a_{n-1}, \ldots, a_{n-\ctwouu+1}]^T$, $\mathbf{d}^{ll} = [e_{\conell}, d_{\conell}, \ldots, e_1,d_1]^T$ if $\conell$ is even, $\mathbf{d}^{ll} = [d_{\conell}, e_{\conell-1},d_{\conell-1} \ldots, e_1,d_1]^T$ if $\conell$ is odd and $\mathbf{b'}^{c} = [b_{n-m}, b_{n-m-1}, \ldots, b_{m+1}]^T$. The message of transmitter $2$ is encoded as follows
\begin{align}
	\mathbf{x}_2 & = \lsqb\begin{array}{l}
		\mathbf{0}_{(m-\ctwoul) \times 1}^{uu} \\
		\mathbf{b}_{\ctwoul \times 1}^{ul} \\
		\mathbf{0}_{(n-2m) \times 1} \\
		\mathbf{b}_{\ctwolu \times 1}^{lu} \\
		\mathbf{0}_{(m-\ctwolu) \times 1}
	\end{array} \rsqb \oplus
	\lsqb\begin{array}{l}
		\mathbf{a'}_{\ctwouu \times 1}^{uu} \\ 
		\mathbf{0}_{(m-\ctwouu)  \times 1} \\
		\mathbf{a}_{(n-2m) \times 1}^c \\
		\mathbf{a'}_{\ctwolu \times 1}^{lu} \\
		\mathbf{0}_{(m-\ctwolu) \times 1}
	\end{array} \rsqb \oplus   
	\lsqb\begin{array}{l}
		\mathbf{0}_{(n-\ctwoll) \times 1} \\ \\ \\ \\
		%\mathbf{d}_{\coneul  \times 1}^{ul} \\ \\
		\mathbf{e}_{\ctwoll \times 1}^{ll}
	\end{array} \rsqb \oplus 
	\lsqb\begin{array}{l}
		\mathbf{0}_{(m-\ctwoul) \times 1} \\ \\ 
		\mathbf{e}_{\ctwoul \times 1}^{ul} \\ \\
		\mathbf{0}_{(n-m) \times 1}
	\end{array} \rsqb,\label{eq:sldic-veryhigh13}
\end{align}
where $\mathbf{b}^{ul} = [b_{\ctwoul}, b_{\ctwoul-1}, \ldots, b_{1}]^T$, $\mathbf{b}^{lu} = [b_{n}, b_{n-1}, \ldots, b_{n-\ctwolu+1}]^T$, $\mathbf{a'}^{uu} = [a_{n}, b_{n-1}, \ldots, b_{n-\ctwouu+1}]^T$, $\mathbf{a}^{c} = [ a_{n-m}, a_{n-m-1}, \ldots, a_{m+1}]^T$, $\mathbf{a'}^{lu} = [a_{m}, a_{m-1}, \ldots, a_{m-\ctwolu+1}]^T$, $\mathbf{e}^{ll} = [d_{\frac{\ctwoll}{2}}, e_{\frac{\ctwoll}{2}}, \ldots, d_1, e_1]^T$ if $\ctwoll$ is even,  $\mathbf{e}^{ll} = [e_{\left\lceil\frac{\ctwoll}{2}\right\rceil}, d_{\left\lceil\frac{\ctwoll}{2}\right\rceil-1}, \ldots, d_1, e_1]^T$ if $\ctwoll$ is odd, $\mathbf{e}^{ul} = [e_{\frac{\ctwoul}{2}}, d_{\frac{\ctwoll}{2}}, \ldots, e_1, d_1]^T$ if $\ctwoul$ is even and $\mathbf{e}^{ll} = [d_{\left\lceil\frac{\ctwoul}{2}\right\rceil}, e_{\left\lceil\frac{\ctwoul}{2}\right\rceil-1}, \ldots, e_1, d_1]^T$ if  $\ctwoul$ is odd.

In the second time slot, the encoding scheme is reversed for transmitter $1$ and $2$. The proposed scheme achieves the following secrecy rate
\begin{align}
	R_s = n-2m + \frac{1}{2}\lsqb \coneul + 2\coneuu + \ctwouu + \conelu + \ctwoul\rsqb. \label{eq:sldic-veryhigh14}
\end{align}
%Proof for the Gaussian case begins here
%prof of the OB starts here
%Proof of the outer bound starts here
\subsection{Proof of Theorem \ref{th:theorem_GSIC_outer1}}\label{sec:appendouter1}
Using Fano's inequality, the rate of user $1$ is upper bounded as
\begin{align}
& NR_1 \nonumber \\
& \leq I(W_1;\ybold_1^N) + N\epsilon_N, \nonumber \\
%New eqn
& \stackrel{(a)}{\leq} h(\ybold_1^N) - h(\ybold_1^N|W_1, \xbold_1^N) + N\epsilon_N, \nonumber \\
%New eqn
& \stackrel{(b)}{\leq} h(\ybold_1^N) - h(h_c\xbold_2^N + \zbold_1^N|\coopsignal, W_1, \xbold_1^N) + N\epsilon_N, \nonumber \\
%New eqn
& \stackrel{(c)}{=} h(\ybold_1^N) - h(h_c\xbold_2^N + \zbold_1^N|\coopsignal) + N\epsilon_N, \nonumber \\
% \end{align}
% \begin{align}
%New eqn
& \stackrel{(d)}{=} h(\ybold_1^N) - h(h_c\xbold_2^N + \widetilde{\zbold}_1^N|\coopsignal) + N\epsilon_N, \nonumber \\
%New eqn
& \text{or } h(\sboldtilde_2^N|\coopsignal)  \leq  h(\ybold_1^N) - NR_1 + N\epsilon_N, \nonumber \\
& \qquad \qquad \quad \text{ where } \sboldtilde_2^N \triangleq h_c\xbold_2^N + \widetilde{\zbold}_1^N, \label{eq:appendouter1}
\end{align}
where (a) and (b) follow by using the fact that the entropy cannot increase by additional conditioning; (c) follows by using the 
relation in (\ref{eq:thouter0}), and (d) is obtained using the fact that the secrecy capacity region of an 
IC with confidential messages is invariant under any joint channel noise distribution $P(\zbold_1, \zbold_2)$ 
that leads to the same marginal distributions $P(\zbold_1)$ and $P(\zbold_2)$ \cite{he-CISS-2009}. Although this 
invariance property is stated for GIC in \cite{he-CISS-2009}, it is not difficult to see that this property holds for 
the GIC with limited-rate transmitter cooperation also.

Adopting similar steps as was used to obtain (\ref{eq:appendouter1}), the following bound on the conditional 
entropy is obtained.
\begin{align}
h(\sboldtilde_1^N|\coopsignal) & \leq  h(\ybold_2^N) - NR_2 + N\epsilon_N, \nonumber \\
& \qquad \qquad \quad \text{ where } \sboldtilde_1^N \triangleq h_c\xbold_1^N + \widetilde{\zbold}_2^N, \label{eq:appendouter2}
\end{align}
The rate of user $1$ can also be bounded as
\begin{align}
& NR_1 \nonumber \\
& \leq I(W_1;\ybold_1^N) + N\epsilon, \nonumber \\
%New eqn
& \stackrel{(a)}{\leq} I(W_1;\ybold_1^N) -  I(W_1;\ybold_2^N) + N\epsilon', \nonumber \\
% \end{align}
% \begin{align}
%New eqn
& \stackrel{(b)}{\leq} I(W_1;\ybold_1^N,\ybold_2^N) -  I(W_1;\ybold_2^N) + N\epsilon', \nonumber \\
%New eqn
& = I(W_1;\ybold_1^N|\ybold_2^N) + N\epsilon', \nonumber\\
%New eqn
& = h(\ybold_1^N|\ybold_2^N) - h(\ybold_1^N|\ybold_2^N, W_1) + N\epsilon', \nonumber \\
%New eqn
& = h(\ybold_1^N,\ybold_2^N) -h(\ybold_2^N)- h(\ybold_1^N|\ybold_2^N, W_1) + N\epsilon', \nonumber \\
%New eqn
& \stackrel{(c)}{=} h(\ybold_1^N,\ybold_2^N,\sboldtilde_1^N, \sboldtilde_2^N) - h(\sboldtilde_1^N, \sboldtilde_2^N|\ybold_1^N,\ybold_2^N) - h(\ybold_2^N)  -  h(\ybold_1^N|\ybold_2^N, W_1) + N\epsilon', \nonumber \\
%New eqn
& = h(\sboldtilde_1^N, \sboldtilde_2^N) + h(\ybold_1^N,\ybold_2^N|\sboldtilde_1^N, \sboldtilde_2^N) - 
h(\sboldtilde_1^N, \sboldtilde_2^N|\ybold_1^N,\ybold_2^N)  - h(\ybold_2^N)- h(\ybold_1^N|\ybold_2^N, W_1) + N\epsilon', \nonumber 
\\
%New eqn
& = I(\sboldtilde_1^N, \sboldtilde_2^N;\ybold_1^N,\ybold_2^N) + h(\ybold_1^N,\ybold_2^N|\sboldtilde_1^N, \sboldtilde_2^N) - h(\ybold_2^N) - h(\ybold_1^N|\ybold_2^N, W_1) + N\epsilon', \nonumber \\
%New eqn
& \stackrel{(d)}{\leq} I(\sboldtilde_1^N, \sboldtilde_2^N;\ybold_1^N,\ybold_2^N, \coopsignal) + h(\ybold_1^N,\ybold_2^N|\sboldtilde_1^N, \sboldtilde_2^N)  - h(\ybold_2^N) - h(\ybold_1^N|\ybold_2^N, W_1) + N\epsilon', \nonumber \\
%New eqn
& \leq H(\coopsignal) + I(\sboldtilde_1^N, \sboldtilde_2^N;\ybold_1^N,\ybold_2^N|\coopsignal) + h(\ybold_1^N,\ybold_2^N|\sboldtilde_1^N, \sboldtilde_2^N)  - h(\ybold_2^N) - h(\ybold_1^N|\ybold_2^N, W_1) + N\epsilon', \nonumber 
\end{align}
\begin{align}
%New eqn
& \leq H(\coopsignalone^N) + H(\coopsignaltwo^N) + h(\sboldtilde_1^N, \sboldtilde_2^N|\coopsignal) - h(\sboldtilde_1^N, \sboldtilde_2^N|\ybold_1^N,\ybold_2^N,\coopsignal) + h(\ybold_1^N,\ybold_2^N|\sboldtilde_1^N, \sboldtilde_2^N)  \nonumber \\
& \qquad - h(\ybold_2^N) - h(\ybold_1^N|\ybold_2^N, W_1) + N\epsilon', \nonumber \\ 
%New eqn 
& \stackrel{(e)}{\leq} H(\coopsignalone^N) + H(\coopsignaltwo^N) + h(\sboldtilde_1^N|\coopsignal) + h(\sboldtilde_2^N|\coopsignal)  - h(\sboldtilde_1^N, \sboldtilde_2^N|\ybold_1^N,\ybold_2^N,\coopsignal,\xbold_1^N,\xbold_2^N)  \nonumber \\
& \qquad    + h(\ybold_1^N,\ybold_2^N|\sboldtilde_1^N, \sboldtilde_2^N) - h(\ybold_2^N) -h(\ybold_1^N|\ybold_2^N, W_1,\xbold_1^N,\xbold_2^N) + N\epsilon', \nonumber \\
& = H(\coopsignalone^N) + H(\coopsignaltwo^N) + h(\sboldtilde_1^N|\coopsignal) + h(\sboldtilde_2^N|\coopsignal)   - h(\widetilde{\zbold}_1^N,\widetilde{\zbold}_2^N)+ h(\ybold_1^N,\ybold_2^N|\sboldtilde_1^N, \sboldtilde_2^N)   \nonumber \\
& \qquad - h(\ybold_2^N) - h(\zbold_1^N) + N\epsilon', \label{eq:appendouter5}
\end{align}
where (a) is obtained using the secrecy constraint at receiver $2$; (b) is due to the genie providing $\ybold_2^N$ to receiver~$1$; (c) is obtained using the relation $h(\ybold_1^N,\ybold_2^N,\sboldtilde_1^N, \sboldtilde_2^N)  = h(\ybold_1^N,\ybold_2^N) + h(\sboldtilde_1^N, \sboldtilde_2^N|\ybold_1^N,\ybold_2^N)$; (d) is obtained using chain rule for mutual information, and (e) is obtained using the fact that removing conditioning cannot decrease the entropy and conditioning cannot increase the entropy.

Using (\ref{eq:appendouter1}) and (\ref{eq:appendouter2}), (\ref{eq:appendouter5}) becomes
\begin{align}
 N[2R_1 + R_2] & \leq H(\coopsignalone^N) + H(\coopsignaltwo^N)  + h(\ybold_1^N)  + h(\ybold_1^N,\ybold_2^N|\sboldtilde_1^N, \sboldtilde_2^N) - h(\widetilde{\zbold}_1^N) -h(\widetilde{\zbold}_2^N)  - h(\zbold_1^N) + N\epsilon'', \nonumber \\
 \text{or } R  & \leq \displaystyle\max_{0 \leq |\rho| \leq 1} \frac{1}{3}\lsqb 2C_G + 0.5\log\lb 1 + \SNRsum  + 2\rho\SNRprod\rb  + 0.5\log \text{det}\lb \Sigma_{\ybar|\sbar}\rb\rsqb. \label{eq:appendouter6}
\end{align}
In the above equation, $\rho$, $\text{det}(\cdot)$ and $\Sigma_{\ybar|\sbar}$ are as defined in the statement of the theorem. The second term in (\ref{eq:appendouter6}) is obtained using the fact that differential entropy is maximized by the Gaussian distribution for a given power constraint. Hence, the following holds.
\begin{align}
h(\ybold_1) \leq  0.5\log\lb 2\pi e \lb 1 + \SNRsum + 2\rho\SNRprod \rb\rb, \label{eq:appendouter7}
\end{align}
where $\text{SNR}$ and $\text{INR}$ are as defined in the statement of the theorem. The last term in (\ref{eq:appendouter6}) is obtained as follows.
\begin{align}
h(\ybold_1,\ybold_2|\sboldtilde_1, \sboldtilde_2) \leq 0.5\log\text{det}\lb 2\pi e \Sigma_{\ybar|\sbar}\rb, \label{eq:appendouter8}
\end{align}
where $\Sigma_{\ybar|\sbar} \triangleq \Sigma_{\ybar} - \Sigma_{\ybar,\sbar}\Sigma_{\sbar}^{-1}\Sigma_{\ybar,\sbar}^{T}$, $\Sigma_{\ybar} \triangleq E[\bar{\ybold}\bar{\ybold}^T]$, $\Sigma_{\ybar,\sbar} \triangleq E[\bar{\ybold}\bar{\sbold}^T]$, $\Sigma_{\sbar} \triangleq E[\bar{\sbold}\bar{\sbold}^T]$, $\ybar \triangleq [\ybold_1\:\:\ybold_2]^T$, and  $\sbar \triangleq [\sboldtilde_1\:\:\sboldtilde_2]^T$. The evaluation of these terms are given in the statement of the theorem. This completes the proof.
%Proof of the theorem ends.
%%%%%%%%%%%%%%%%%%%%%%%%%%%%%%%%%%%%%%%%%%%%%%%%%%%%%%%%%%%%55
%%%%%%%%%%%%%%%%%%%%%%%%%%%%%%%%%%%%%%%%%%%%%%%%%%%%%%%%%%%%%%%%%%%%%%%%%%%%%%%%%%%%%%%%%%%%%%%%%%%%%%%%%%%%%%
%%%%%%%%%%%%%%%%%%%%%%%%%%%%%%%%%55
%Proof of the second theorem begin here.
\subsection{Proof of Theorem \ref{th:theorem_GSIC_outer2}}\label{sec:appendouter2}
Using Fano's inequality, the rate of user~$1$ is upper bounded as
\begin{align}
& NR_1 \nonumber \\
& \leq I(W_1;\ybold_1^N) + N\epsilon_N, \nonumber \\
%New eqn
& \stackrel{(a)}{\leq} I(W_1;\ybold_1^N,\xbold_2^N) + N\epsilon_N, \nonumber \\
%New eqn
& = I(W_1;\xbold_2^N) + I(W_1;\ybold_1^N|\xbold_2^N) +  N\epsilon_N, \nonumber \\
% \end{align}
% \begin{align}
%New eqn
& = I(W_1;\xbold_2^N) + h(\sbolddash_1^{N}|\xbold_2^N) - h(\sbolddash_1^{N}|\xbold_2^N,W_1) + N\epsilon_N, \text{ where } \sbolddash_1^N \triangleq h_d \xbold_1^N + \zbold_1^N, \nonumber \\
%New eqn
& = I(W_1;\xbold_2^N) + I(W_1;\sbolddash_1^N|\xbold_2^N) + N\epsilon_N, \nonumber  \\
%New eqn
& \stackrel{(b)}{=}  I(W_1;\xbold_2^N, \sbolddash_1^N) + N\epsilon_N, \nonumber \\
%New eqn
%& = I(W_1;\sbolddash_1^N) + I(W_1;\xbold_2^N|\sbolddash_1^N) +  N\epsilon_N, \nonumber \\
%New eqn
& \stackrel{(c)}{\leq}   I(W_1;\sbolddash_1^N) - I(W_1;\ybold_2^N) + I(W_1;\xbold_2^N|\sbolddash_1^N) +  N\epsilon_N', \nonumber \\
%New eqn
&  \stackrel{(d)}{\leq}   I(W_1;\sbolddash_1^N,\ybold_2^N) - I(W_1;\ybold_2^N) + I(W_1;\xbold_2^N|\sbolddash_1^N) +  N\epsilon_N', \nonumber \\
%New eqn
& = I(W_1;\sbolddash_1^N|\ybold_2^N) + I(W_1;\xbold_2^N|\sbolddash_1^N) +  N\epsilon_N', \nonumber \\
% \end{align}
% \begin{align}
%New eqn
& \stackrel{(e)}{\leq}  I(W_1;\sbolddash_1^N|\ybold_2^N) + I(W_1;\xbold_2^N,\coopsignal|\sbolddash_1^N) +  N\epsilon_N', \nonumber \\
& = I(W_1;\sbolddash_1^N|\ybold_2^N) + I(W_1;\coopsignal|\sbolddash_1^N) + I(W_1;\xbold_2^N|\sbolddash_1^N, \coopsignal) +  N\epsilon_N', \nonumber \\
% %New eqn
& \leq  I(W_1;\sbolddash_1^N|\ybold_2^N) + H(\coopsignal|\sbolddash_1^N) + h(\xbold_2^N|\sbolddash_1^N, \coopsignal) \nonumber \\
& \qquad \qquad - h(\xbold_2^N|\sbolddash_1^N, \coopsignal,W_1)  +  N\epsilon_N', \nonumber \\
%New eqn
& \stackrel{(f)}{\leq}  h(\sbolddash_1^N|\ybold_2^N) - h(\sbolddash_1^N|\ybold_2^N,W_1) + H(\coopsignal) +  N\epsilon_N', \nonumber \\
% \end{align}
% \begin{align}
& \text{or } R_1  \leq \displaystyle\max_{0 \leq |\rho| \leq 1} \lsqb 2C_G + 0.5\log \Sigma_{\sbolddash|\ybold_2}\rsqb, \label{eq:appendoutertwo1}
\end{align}
where (a) is due to the genie providing $\xbold_2^N$ to receiver $1$; (b) is obtained using chain rule for mutual information; 
(c) is obtained using secrecy constraint at receiver~$2$; (d) is due to the genie providing $\ybold_2^N$ as 
side information to receiver~$1$, where $\xbold_2^N$ is eliminated, (e) is obtained using the relation 
$I(W_1;\xbold_2^N,\coopsignal|\sbolddash_1^N) = I(W_1;\xbold_2^N|\sbolddash_1^N) + I(W_1;\coopsignal|\sbolddash_1^N,\xbold_2^N)$ 
and (f) is obtained using the relation in (\ref{eq:thouter0}) and the fact that removing conditioning does not decrease the entropy. 
The last inequality is obtained using the fact that the differential entropy is maximized by the Gaussian distribution 
for a given power constraint. The term $\Sigma_{\sbolddash|\ybold_2}$ is evaluated as follows.
\begin{align}
\Sigma_{\sbolddash|\ybold_2} & = E[\sbolddash_1^2] - E[\sbolddash\ybold_2]^2E[\ybold_2^2]^{-1}  \nonumber \\
& = 1 + \frac{\text{SNR} + \text{SNR}^2(1-\rho^2)}{1 + \SNRsum + 2\rho\SNRprod}. \label{eq:appendoutertwo2}
\end{align}
This completes the proof.
%Proof of third outer bound
\subsection{Proof of Theorem \ref{th:theorem_GSIC_outer3}}\label{sec:appendouter3}
Using Fano's inequality, the rate of user $1$ is upper bounded as
\begin{align}
	NR_1 & \leq I(W_1;\ybold_1^N) + N\epsilon_N, \nonumber \\
	%New eqn
	& \stackrel{(a)}{\leq} I(W_1;\ybold_1^N)-I(W_1;\ybold_2^N) + N\epsilon_N', \nonumber \\
	%New eqn
	& \stackrel{(b)}{\leq} I(W_1;\ybold_1^N,\ybold_2^N)-I(W_1;\ybold_2^N) + N\epsilon_N', \nonumber \\
	%New eqn
	& = I(W_1;\ybold_1^N|\ybold_2^N)-I(W_1;\ybold_2^N) + N\epsilon_N', \nonumber \\
	%New eqn
	& \leq  h(\ybold_1^N|\ybold_2^N) - h(\zbold_1^N) + N\epsilon_N', \nonumber \\
	% new eqn
	\text{or } R_1 & \leq \displaystyle\max_{0 \leq |\rho| \leq 1} 0.5\log \Sigma_{\ybold_1|\ybold_2}, \label{eq:appendouterthree1}
\end{align}
where $\Sigma_{\ybold_1|\ybold_2}$ is obtained as given below.
\begin{align}
	\Sigma_{\ybold_1|\ybold_2} & = E[\ybold_1^2] - E[\ybold_1\ybold_2]^2 E[\ybold_2^2]^{-1}, \nonumber \\
	%New eqn
	& = 1 + \SNRsum + 2\rho\SNRprod-\frac{(2\SNRprod + \rho(\SNRsum))^2}{1 + \SNRsum + 2\rho\SNRprod}. \label{eq:appendouterthree2}
\end{align}
Substituting the value of $\Sigma_{\ybold_1|\ybold_2}$ from (\ref{eq:appendouterthree2}) in (\ref{eq:appendouterthree1}), results in (\ref{eq:theorem_GSIC_outer3}) and this completes the proof.
%Proof of the ach starts here
\subsection{Proof of Theorem~\ref{th:theorem_ach_weakint}}\label{sec:appendweakintf}
The proof involves analyzing the error probability at the encoder and decoder along with equivocation computation. One of the novelties in obtaining the achievable scheme lies in the choice of the cooperative private auxiliary codewords $\ubold_1$, and $\ubold_2$ so that the codeword carrying the cooperative private part of the message $(w_{cpi})$ is canceled at the unintended receiver. This simultaneously eliminates interference and ensures secrecy of the cooperate private message. For ensuring secrecy of the non-cooperative private message, it is required to show that the weak secrecy constraint is satisfied at the receiver~$j$, i.e., $H(W_{pi}|\ybold_j^N) \geq N[R_{pi}-\epsilon_s]$. In the equivocation computation, the main novelty lies in choosing the value of the rate sacrificed in confusing the unintended receiver $(R_{pi}')$ and rate of the dummy message $(R_{di})$ so that the weak secrecy constraint is satisfied. 
\subsubsection{Analysis of the probability of error} In the following, the probability of error is analyzed at the encoder and decoder. \\
\textbf{Encoding error:} For the Marton's coding scheme to succeed, the transmitter $1$ and $2$ can find at least one jointly typical sequence pair $(\ubold_1^N,\ubold_2^N)$ provided following condition is satisfied.
\begin{align}
& R_{cp1} \leq \widetilde{R}_{cp1},  R_{cp2} \leq \widetilde{R}_{cp2}, \nonumber \\
& R_{cp1} \leq C,  R_{cp2} \leq C, \nonumber \\
& \widetilde{R}_{cp1} - R_{cp1} + \widetilde{R}_{cp2} - R_{cp2} \geq I(\ubold_1;\ubold_2). \label{eq:pfweakint1}
\end{align}
When $\ubold_1$ and $\ubold_2$ are chosen independent of each other, then $I(\ubold_1;\ubold_2) = 0$. Given $R_{cp1} \leq \widetilde{R}_{cp1}$ and  $R_{cp2} \leq \widetilde{R}_{cp2}$, the last condition in (\ref{eq:pfweakint1}) becomes redundant.\\
% \begin{align}
% \tildercpone - R_{cp1} + \tildercptwo - R_{cp2} \geq 0. \label{eq:pfweakint2}
% \end{align}
\textbf{Decoding error:} Define the following event
\begin{align}
E_{ijk} = \lcb \lb \ybold_1^N, \xbold_{p1}^N(i,j), \ubold_1^N(k) \rb \in T_{\epsilon}^{N}\rcb. \label{eq:pfweakint3}
\end{align}
Without loss of generality, assume that transmitter $1$ and $2$ sends $\xbold_1^N(1,1,1,1)$ and $\xbold_2^N(1,1,1,1,1,1)$ with $(\tildewcpone,\tildewcptwo) =(1,1)$, respectively. An error occurs if the transmitted and received codewords are not jointly typical or a wrong codeword is jointly typical with the received codewords. Then by the union of events bounds
\begin{align}
\lambda_{e}^{(n)}  = P\lb E_{111}^c \bigcup \cup_{i \neq 1, j \neq 1, k \neq 1}E_{ijk}\rb \leq P(E_{111}^c) + P( \cup_{i \neq 1, j \neq 1, k \neq 1}E_{ijk}).  \label{eq:pfweakint4}
\end{align}
From the joint AEP \cite{cover-infotheory-2012}, $P(E_{111}^c) \rightarrow 0$ as $N \rightarrow \infty$. When $i\neq 1, j \neq 1$, and $k = 1$, then
\begin{align}
\lambda_{ij1}  = \displaystyle\sum_{i \neq 1, j \neq 1} P(E_{ij1})& \leq 2^{N\lsqb R_{p1}+R_{p1}'\rsqb}\sumAEPweak P(\xbold_{p1}^N)P(\ubold_1^N)P(\ybold_1^N|\ubold_1^N), \nonumber \\
%New equation
& \leq 2^{N\lsqb R_{p1}+R_{p1}' - I(\xbold_{p1};\ybold_1|\ubold_1) + 4\epsilon\rsqb}. \label{eq:pfweakint8}
\end{align}
Hence, $\lambda_{ij1} \rightarrow 0$ as $N \rightarrow \infty$, if
\begin{align}
R_{p1} + R_{p1}' \leq  I(\xbold_{p1};\ybold_1|\ubold_1). \label{eq:pfweakint9}
\end{align}
When the above condition is satisfied, also, the probability of error $\lambda_{1j1}$ and $\lambda_{i11}$ goes to zero as $N \rightarrow \infty$. When $k\neq 1$ and $(i,j)=(1,1)$, then
\begin{align}
\lambda_{11k} = \displaystyle\sum_{k \neq 1} P(E_{11k}) & \leq 2^{N\tildercpone}\sumAEPweak P(\xbold_{p1}^N)P(\ubold_1^N)P(\ybold_1^N|\xbold_{p1}^N), \nonumber \\
%New eqn
& \leq 2^{N\lsqb \tildercpone - I(\ubold_1;\ybold_1|\xbold_{p1}) + 4\epsilon\rsqb}. \label{eq:pfweakint10}
\end{align}
Hence, $\lambda_{11k} \rightarrow 0$ as $N \rightarrow \infty$, if 
\begin{align}
\tildercpone \leq I(\ubold_1;\ybold_1|\xbold_{p1}). \label{eq:pfweakint11}
\end{align}
When $i \neq 1$, $j \neq 1$, and $k \neq 1$, then
\begin{align}
\lambda_{ijk}  = \displaystyle\sum_{i \neq 1, j \neq 1, k \neq 1}P(E_{ijk}) & \leq 2^{N\lsqb \tildercpone + R_{p1} + R_{p1}'\rsqb} \sumAEPweak P(\xbold_{p1}^N)P(\ubold_1^N)P(\ybold_1^N), \nonumber \\
% %New eqn
& \leq 2^{N\lsqb \tildercpone + R_{p1} + R_{p1}' - I(\ubold_1,\xbold_{p1};\ybold_1) + 4\epsilon\rsqb}. \label{eq:pfweakint12}
\end{align}
Hence, $\lambda_{ijk} \rightarrow 0$ as $N \rightarrow \infty$, if
\begin{align}
\tildercpone + R_{p1} + R_{p1}' \leq I(\ubold_1,\xbold_{p1};\ybold_1). \label{eq:pfweakint13}
\end{align}
The above condition also ensures that $\lambda_{i1k}$ and $\lambda_{1jk}$ go to zero as $N \rightarrow \infty$. Hence, $\lambda_{e}^{(n)}$ in (\ref{eq:pfweakint4}) goes to $0$ as $N \rightarrow \infty$, when the conditions in (\ref{eq:pfweakint9}), (\ref{eq:pfweakint11}) and (\ref{eq:pfweakint13}) are satisfied. 

Now, using the Fourier-Motzkin procedure \cite{gamal-netinfotheory-2011}, the achievable rate in Theorem \ref{th:theorem_ach_weakint} is obtained. The choice of $R_{p1}'$ is discussed in Section~\ref{sec:eqvocweakint}.\\
%%%%%%%%%%%%%%%%%%%%%5555555555555%%%%555

\textbf{Choice of $\ubold_1$ and $\ubold_2$}\\
The cooperative private auxiliary codewords $\ubold_i\:(i =1,2)$, are chosen such that the interference caused by the codeword $\ubold_j$ at receiver~$i$ $(i \neq j)$ is nulled out completely. This not only eliminates the interference caused by the cooperative private part, but also ensures secrecy of the cooperative private message. Choose $\underline{\xbold}_{cp}$, $\ubold_1$ and $\ubold_2$ to be jointly Gaussian, and such that
\begin{align}
&  \underline{\xbold}_{cp} = \mathbf{w}_{1z}\underline{v}_{1z} +  \mathbf{w}_{2z}\underline{v}_{2z}, \nonumber \\
%New eqn
& \ubold_1 = \lsqb h_d\quad h_c\rsqb\underline{v}_{1z}\mathbf{w}_{1z}, \text{ and } \ubold_2 = \lsqb h_c\quad h_d\rsqb\underline{v}_{2z}\mathbf{w}_{2z},  \label{eq:pfhighint54}
\end{align}
where $\underline{v}_{1z} \triangleq\lsqb h_d \quad\quad-h_c \rsqb^{T}$, $\underline{v}_{2z} \triangleq \lsqb -h_c \quad\quad h_d \rsqb^{T}$
and $\mathbf{w}_{1z}$ and $\mathbf{w}_{2z}$ are independent Gaussian with variance $\sigma_{1z}^2$ and $\sigma_{2z}^2$,
respectively. Recall that $\mathbf{w}_{iz}$ $(i=1,2)$ is the codeword representing the cooperative private message
$w_{cpi}$ of transmitter $i$. The choice of $\sigma_{1z}^2$ and $\sigma_{2z}^2$ is discussed in the proof of
Corollary \ref{cor:cor_ach_weakint} (Appendix \ref{sec:corappendweakintf}).

\textit{Remark:} For the extension to asymmetric setting, the choice of the auxiliary
codewords for the cooperative private message needs to be modified as follows:
\begin{align}
& \ubold_1 = \lsqb h_{11}\quad h_{12}\rsqb\underline{v}_{1z}\mathbf{w}_{1z} = (h_{11}h_{22} - h_{12}h_{21})\wbold_{1z}, \nonumber \\
\text{ and } & \ubold_2 =
\lsqb h_{21}\quad h_{22}\rsqb\underline{v}_{2z}\mathbf{w}_{2z} = (h_{11}h_{22} - h_{12}h_{21})\wbold_{2z}, \label{eq:rev-precoding}
\end{align}
where $\underline{v}_{1z} \triangleq\lsqb h_{22} \quad\quad-h_{21} \rsqb^{T}$,
$\underline{v}_{2z} \triangleq \lsqb -h_{12} \quad\quad h_{11} \rsqb^{T}$, $\mathbf{w}_{1z}$ and
$\mathbf{w}_{2z}$ are independent Gaussian distributed with variance $\sigma_{1z}^2$ and $\sigma_{2z}^2$,
respectively. The rest of the proof follows along similar lines as explained 
below.
\subsubsection{Equivocation computation}\label{sec:eqvocweakint} The equivocation at the receiver~$2$ is bounded as follows.
\begin{align}
H(W_1|\ybold_2^N) & = H(W_{p1}, W_{cp1}|\ybold_2^N), \nonumber \\
%New eqn
& = H(W_{p1}|\ybold_2^N) + H(W_{cp1}|\ybold_2^N,W_{p1}). \label{eq:eqvocweak1}
\end{align}
First consider the term $H(W_{cp1}|\ybold_2^N,W_{p1})$. The output at the receiver $2$ is
\begin{align}
y_2 = u_2 + h_d x_{d2} + h_c x_{p1} + z_2.  \label{eq:eqvocweak2}
\end{align}
As $\ubold_1$ and $\ubold_2$ are chosen to be independent of each other, i.e., $I(\ubold_1;\ubold_2) = 0$, and $\wcpone$ is chosen independent of $\wprvone$, the following holds:
\begin{align}
H(W_{cp1}|\ybold_2^N,W_{p1}) = H(W_{cp1}). \label{eqvocweak3}
\end{align}
Hence, it is only required to show the following:
\begin{align}
H(W_{p1}|\ybold_2^N) \geq N\lsqb R_{p1} - \epsilon_s\rsqb. \label{eq:eqvocweak4}
\end{align}
Consider the following:
\begin{align}
H(W_{p1}|\ybold_2^N) 
& \geq H(W_{p1}|\ybold_2^N,\xbold_{p2}^N,\ubold_2^N, W_{d2}''), \nonumber \\
%New eqn
& \stackrel{(a)}{=} H(W_{p1},\ybold_2^N|\xbold_{p2}^N,\ubold_2^N, W_{d2}'') - H(\ybold_2^N|\xbold_{p2}^N,\ubold_2^N, W_{d2}''), \nonumber 
\\
%New eqn
& \stackrel{(b)}{=} H(W_{p1},\ybold_2^N, \xbold_{p1}^N,\xbold_{d2}^N|\xbold_{p2}^N,\ubold_2^N, W_{d2}'') - H(\xbold_{p1}^N,\xbold_{d2}^N|\ybold_2^N, \ubold_2^N, \xbold_{p2}^N, W_{p1},W_{d2}'') \nonumber \\
& \qquad -  H(\ybold_2^N|\xbold_{p2}^N,\ubold_2^N, W_{d2}''), \nonumber \\
%New eqn
& = H(\xbold_{p1}^N,\xbold_{d2}^N|\xbold_{p2}^N,W_{d2}'')  + H(W_{p1},\ybold_2^N|\xbold_{p1}^N, \xbold_{d2}^N,\ubold_2^N,\xbold_{p2}^N,W_{d2}'') \nonumber \\
& \quad  - H(\xbold_{p1}^N,\xbold_{d2}^N|\ybold_2^N, \xbold_{p2}^N, W_{p1},W_{d2}'') -  H(\ybold_2^N|\xbold_{p2}^N,\ubold_2^N, W_{d2}''), \nonumber \\
%New eqn
& \geq H(\xbold_{p1}^N) + H(\xbold_{d2}^N|W_{d2}'') + H(\ybold_2^N|\xbold_{p1}^N, \xbold_{d2}^N,\ubold_2^N,\xbold_{p2}^N,W_{d2}'')  H(\ybold_2^N|\xbold_{p2}^N,\ubold_2^N) \nonumber \\
& \qquad - H(\xbold_{p1}^N,\xbold_{d2}^N|\ybold_2^N, \ubold_2^N, \xbold_{p2}^N, W_{p1},W_{d2}''), \nonumber 
\\
%New eqn
& \stackrel{(c)}{=} H(\xbold_{p1}^N) + H(\xbold_{d2}^N|W_{d2}'') + H(\ybold_2^N|\xbold_{p1}^N, \xbold_{d2}^N,\ubold_2^N,\xbold_{p2}^N)  -  H(\ybold_2^N|\xbold_{p2}^N,\ubold_2^N) \nonumber \\
& \qquad   -  H(\xbold_{p1}^N,\xbold_{d2}^N|\ybold_2^N, \ubold_2^N, \xbold_{p2}^N, W_{p1},W_{d2}''), \nonumber \\
%New eqn
& = N\lsqb R_{p1} + R_{p1}' + R_{d2}'\rsqb - I(\xbold_{p1}^N,\xbold_{d2}^N;\ybold_2^N|\ubold_2^N,\xbold_{p2}^N)  \nonumber \\
& \qquad - H(\xbold_{p1}^N,\xbold_{d2}^N|\ybold_2^N, \ubold_2^N, \xbold_{p2}^N, W_{p1},W_{d2}''), \label{eq:eqvocweak5}
\end{align}
where (a) and (b) are obtained using the relations in (\ref{eq:eqvocweak5a}) and (\ref{eq:eqvocweak5b}), respectively;  and (c) is obtained using the fact that $W_{d2}'' \rightarrow (\xbold_{p1}^N,\xbold_{d2}^N,\xbold_{p2}^N,\ubold_2^N) \rightarrow \ybold_2^N$ forms a Markov chain. This can be shown with the help of a functional dependency graph \cite{kramer-now-2008}.

\begin{align}
& H(W_{p1}, \ybold_2^N|\xbold_{p2}^N,\ubold_2^N,W_{d2}'') = H(\ybold_2^N|\xbold_{p2}^N,\ubold_2^N,W_{d2}'')  + H(W_{p1}|\ybold_2^N,\xbold_{p2}^N,\ubold_2^N,W_{d2}''), \label{eq:eqvocweak5a} \\
%New eqn
& H(W_{p1},\ybold_2^N,\xbold_{p1}^N,\xbold_{d2}^N|\xbold_{p2}^N,\ubold_2^N,W_{d2}'')  = H(W_{p1},\ybold_2^N|\xbold_{p2}^N,\ubold_2^N,W_{d2}'')   \nonumber \\
&  \qquad \qquad \qquad \qquad + H(\xbold_{p1}^N,\xbold_{d2}^N|\ybold_2^N,\xbold_{p2}^N,\ubold_2^N,W_{p1},W_{d2}''). \label{eq:eqvocweak5b}
\end{align}

Using Lemma~\ref{lm1} in Appendix~\ref{sec:usefullemma}, it can be shown that
\begin{align}
I(\xbold_{p1}^N,\xbold_{d2}^N;\ybold_2^N|\ubold_2^N,\xbold_{p2}^N) \leq N I(\xbold_{p1},\xbold_{d2};\ybold_2|\ubold_2,\xbold_{p2}) + N\epsilon'.  \label{eq:eqvocweak6}
\end{align}
Thus the remaining key step in showing that the condition in (\ref{eq:eqvocweak4}) is satisfied is to bound the last term in (\ref{eq:eqvocweak5}). To bound this term, consider the joint decoding of $W_{p1}'$ and $W_{d2}'$ at receiver~$2$ assuming that a genie has given $W_{p1}$ and $W_{d2}''$ as side information to receiver~$2$. For a given $W_{p1} = w_{p1}$ and $W_{d2}'' = w_{d2}''$, assume that $w_{p1}'$ and $w_{d2}'$ are sent by transmitters~$1$ and $2$, respectively and receiver~$2$ knows the sequence $\ybold_2^N = y_2^N$ and $\ubold_2^N = u_2^N$.  For a given $W_{p1} = w_{p1}$ and $W_{d2}''= w_{d2}''$, receiver~$2$ declares that $j$ and $l$ was sent if $(\xbold_{p1}^N(w_{p1},j), \xbold_{d2}^N(l,w_{d2}''),\ybold_2^N)$ is jointly typical and such $(j,l)$ exists and is unique. Otherwise, an error is declared. Now, define the following event
\begin{align}
& E_{jl}^1 =   \lcb (\xbold_{p1}^N(w_{p1},j), \xbold_{d2}^N(l,w_{d2}''),\ybold_2^N) \in T_{\epsilon}^{N}(P_{\xbold_{p1},\xbold_{d2},\ybold_2|\ubold_2,\xbold_{p2}})\rcb,\label{eq:eqvocwak7}
\end{align}
where $T_{\epsilon}^{N}(P_{X_{p1}X_{d2}Y_2|U_2X_{p2}})$ denotes, for given typical sequences $\ubold_2$ and $\xbold_{p2}$, the set of jointly typical sequences $\ybold_1, \xbold_{p1}$, and $\ubold_1$ with respect to $P_{X_{p1}X_{d2}Y_2|U_2X_{p2}}$. Without loss of generality, assume that $\xbold_{p1}^N(w_{p1},1)$ and $\xbold_{d2}^{N}(1,w_{d2}'')$ were sent. Then, by the union of events bound, the following is obtained:
\begin{align}
P_{e1}^{N} & = P\lb E_{11}^{1^c}\bigcup \cup_{j \neq 1, l \neq 1}E_{jl}^1\rb, \nonumber \\
%New line
%New line
& \leq P(E_{11}^{1^c}) + 2^{NR_{p1}'}2^{-N\lsqb I(\xbold_{p1};\ybold_{2}|\xbold_{d2},\ubold_2,\xbold_{p2}) - 3\epsilon\rsqb}  + 2^{NR_{d2}'}2^{-N\lsqb I(\xbold_{d2};\ybold_{2}|\xbold_{p1},\ubold_2,\xbold_{p2}) - 3\epsilon\rsqb} \nonumber \\
%New line
& \qquad + 2^{N\lb R_{p1}' + R_{d2}'\rb}2^{-N\lsqb I(\xbold_{p1},\xbold_{d2};\ybold_{2}|\ubold_2,\xbold_{p2}) - 3\epsilon\rsqb}.  \label{eq:eqvocweak8}
\end{align}
Hence, the probability of error $P_{e1}^{N}$ is arbitrarily small for large $N$, provided the following conditions are satisfied.
\begin{align}
& R_{P1}' \leq I(\xbold_{p1};\ybold_{2}|\xbold_{d2},\ubold_2,\xbold_{p2}), \nonumber \\
& R_{d2}' \leq I(\xbold_{d2};\ybold_{2}|\xbold_{p1},\ubold_2,\xbold_{p2}), \nonumber \\
%New eqn
& R_{p1}' + R_{d2}' \leq I(\xbold_{p1},\xbold_{d2};\ybold_{2}|\ubold_2,\xbold_{p2}).  \label{eq:eqvocweak9}
\end{align}
When the conditions in (\ref{eq:eqvocweak9}) are satisfied and for sufficiently large $N$, the following bound is obtained using Fano's inequality:
\begin{align}
& \frac{1}{N} H(\xbold_{p1}^N,\xbold_{d2}^N| \ybold_1^N, \ubold_2^N, \xbold_{p2}^N, W_{p1}=w_{p1},W_{d2}''=w_{d2}'')  \leq \frac{1}{N}[1 + P_{e1}^{N} \log 2^{N[R_{p1}' + R_{d2}']}] \leq \delta_1.\label{eq:eqvocweak10}
\end{align}
Using the above, the last term in (\ref{eq:eqvocweak5}) is bounded as follows:
\begin{align}
 & H(\xbold_{p1}^N,\xbold_{d2}^N|\ybold_{2}^N,\ubold_{2}^N,\xbold_{p2}^N, W_{p1},W_{d2}'')  \nonumber \\
 & = \displaystyle\sum_{w_{p1},w_{d2}^{''}}\!\! P(w_{p1},w_{d2}'')  H(\xbold_{p1}^N,\xbold_{d2}^N|\ybold_2^N, \ubold_2^N,\xbold_{p2}^N, W_{p1}=w_{p1},W_{d2}''=w_{d2}''), \nonumber \\
& \leq N\delta_1.  \label{eq:eqvocweak11}
\end{align}
Using (\ref{eq:eqvocweak6}) and (\ref{eq:eqvocweak11}), (\ref{eq:eqvocweak5}) becomes
\begin{align}
& H(W_{p1}|\ybold_2^N)  \geq N\lsqb R_{p1} + R_{p1}' + R_{d2}' - I(\xbold_{p1},\xbold_{d2};\ybold_2|\ubold_2,\xbold_{p2})-\epsilon_1 \rsqb, \label{eq:eqvocweak12}
\end{align}
$\text{where } \epsilon_1 = \epsilon' + \delta_1$. By choosing $R_{p1}' + R_{d2}' = I(\xbold_{p1},\xbold_{d2};\ybold_2|\ubold_2,\xbold_{p2})-\epsilon_{11}$, (\ref{eq:eqvocweak12}) becomes
\begin{align}
H(W_{p1}|\ybold_2^N) \geq N\lsqb R_{p1} - \epsilon_s \rsqb, \text{ where } \epsilon_s = \epsilon_1 + \epsilon_{11}. \label{eq:eqvocweak13}
\end{align}
Hence, by choosing $R_{p1}' = I(\xbold_{p1};\ybold_2|\xbold_{p2},\ubold_2)-\epsilon_{11}'$ and $R_{d2}' = I(\xbold_{d2};\ybold_2|\xbold_{p1},\xbold_{p2},\ubold_2)-\epsilon_{11}''$, secrecy is ensured for the non-cooperative private message of transmitter~$1$, and also, the achievability condition in (\ref{eq:eqvocweak9}) is satisfied.

For receiver~$1$, also, it is only required to show that the non-cooperative private message of transmitter~$2$ remains secure. To bound the equivocation at receiver~$1$, consider the following:
% \begin{align}
% H(W_{p2}|\ybold_1^N) \geq N\lsqb R_{p2} - \epsilon_s'\rsqb. \label{eq:eqvocweak14}
% \end{align}
\begin{align}
& H(W_{p2}|\ybold_1^N) \geq H(W_{p2}|\ybold_1^N,\xbold_{p1}^N,\ubold_1^N, W_{d2}'). \label{eq:eqvocweak15}
\end{align}
Then, by following similar steps as used in case of receiver~$2$, it can be shown that the choice of $R_{p2}' = I(\xbold_{p2};\ybold_1|\xbold_{p1},\ubold_1)-\epsilon_2'$ and $R_{d2}'' = I(\xbold_{d2};\ybold_1|\xbold_{p1},\xbold_{p2},\ubold_1)-\epsilon_2''$, ensures secrecy of the non-cooperative private message of transmitter~$2$. This completes the proof.
%Proof of theorem ends
%Proof of theorem ends
%%%%%%%%%%%%%%%%%%%%%%%%%%%%%%%%%%%%%%%%%%%%%%%%%%%%%%%%%%%%%%%%%%%%5

The following lemma is useful in bounding the mutual information in the proof of Theorem~\ref{th:theorem_ach_weakint}.
\subsection{A useful Lemma}\label{sec:usefullemma}
\begin{lemma}\label{lm1}
\begin{align}
I(\xbold_{p1}^N,\xbold_{d2}^N;\ybold_2^N|\xbold_{p2}^N,\ubold_2^N) \leq N \lsqb I(\xbold_{p1},\xbold_{d2};\ybold_2|\xbold_{p2},\ubold_2) + \epsilon_3\rsqb, \label{eq:lmref1}
\end{align}
where $\epsilon_3$ is small for sufficiently large $N$.
\end{lemma}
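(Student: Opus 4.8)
The plan is to prove the bound by single-letterizing the $N$-letter conditional mutual information, exploiting the memorylessness of the Gaussian channel in \eqref{sysmodel1}. First I would write the left-hand side as a difference of conditional differential entropies,
\begin{align}
I(\xbold_{p1}^N,\xbold_{d2}^N;\ybold_2^N|\xbold_{p2}^N,\ubold_2^N) = h(\ybold_2^N|\xbold_{p2}^N,\ubold_2^N) - h(\ybold_2^N|\xbold_{p1}^N,\xbold_{d2}^N,\xbold_{p2}^N,\ubold_2^N), \nonumber
\end{align}
and then expand each entropy with the chain rule over the $N$ channel uses.

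For the subtracted term, given the channel inputs at time $k$, the output $y_{2,k}=h_d x_{2,k}+h_c x_{1,k}+z_{2,k}$ is conditionally independent of the past outputs $y_2^{k-1}$ and of all other time indices, because the noise is i.i.d.; hence this term single-letterizes \emph{exactly} as $\mysum_{k=1}^N h(y_{2,k}|x_{p1,k},x_{d2,k},x_{p2,k},u_{2,k})$. For the first term, the chain rule gives $\mysum_{k=1}^N h(y_{2,k}|\xbold_{p2}^N,\ubold_2^N,y_2^{k-1})$, and since conditioning cannot increase differential entropy, each summand is at most $h(y_{2,k}|x_{p2,k},u_{2,k})$. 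Subtracting the two single-letterized expressions yields
\begin{align}
I(\xbold_{p1}^N,\xbold_{d2}^N;\ybold_2^N|\xbold_{p2}^N,\ubold_2^N) \leq \mysum_{k=1}^N I(x_{p1,k},x_{d2,k};y_{2,k}|x_{p2,k},u_{2,k}). \nonumber
\end{align}

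The remaining step is to relate the per-symbol sum to the single-letter quantity $I(\xbold_{p1},\xbold_{d2};\ybold_2|\xbold_{p2},\ubold_2)$ evaluated under the fixed input distribution of the scheme. Introducing a time-sharing variable $Q$ uniform on $\{1,\dots,N\}$ and independent of all codewords, the normalized sum equals $I(x_{p1,Q},x_{d2,Q};y_{2,Q}|x_{p2,Q},u_{2,Q},Q)$. Because each codeword is drawn i.i.d. according to a product law in the codebook construction, the per-symbol joint law of $(x_{p1,k},x_{d2,k},x_{p2,k},u_{2,k},y_{2,k})$ coincides across $k$ with the target single-letter law, so in the ideal case the $Q$-conditioning is vacuous and the bound holds with no slack. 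The correction $\epsilon_3$ absorbs the (vanishing) discrepancy between the codebook-induced empirical distribution on the jointly typical set used in the equivocation computation and the exact i.i.d. distribution, which can be made arbitrarily small for large $N$ by a standard AEP/typicality argument.

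The main obstacle is precisely this last point: the conditioning variables $\xbold_{p2}^N$ and $\ubold_2^N$ are coupled through the cooperative private codeword $\underline{\xbold}_{cp}$ and the Marton binning, so one must verify that, conditioned on $(\xbold_{p2}^N,\ubold_2^N)$, the channel still factorizes symbol-by-symbol as $\prod_{k} p(y_{2,k}|x_{p1,k},x_{d2,k},x_{p2,k},u_{2,k})$. I would handle this by conditioning on the auxiliary codeword realizations before applying the chain rule, so that the memoryless factorization of the Gaussian channel is preserved, and by bounding the residual gap due to non-i.i.d. inputs by $\epsilon_3$ via typicality, completing the proof.
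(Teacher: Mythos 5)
Your proof is correct, but it takes a genuinely different route from the paper's. You single-letterize: chain rule over the $N$ channel uses, exact factorization of $h(\ybold_2^N|\xbold_{p1}^N,\xbold_{d2}^N,\xbold_{p2}^N,\ubold_2^N)$ by memorylessness (which holds here because $y_2 = u_2 + h_c x_{p1} + h_d x_{p2} + h_d x_{d2} + z_2$, the $\ubold_1$-component being nulled at receiver~$2$ by the choice of beamforming vectors), the ``conditioning cannot increase entropy'' step for $h(\ybold_2^N|\xbold_{p2}^N,\ubold_2^N)$, and finally identification of the per-symbol laws with the designed single-letter law. The paper instead works entirely at the sequence level: it introduces an indicator random variable $\psi$ for joint typicality of $(\xbold_{p1}^N,\xbold_{p2}^N,\xbold_{d2}^N,\ubold_2^N,\ybold_2^N)$, expands $I(\xbold_{p1}^N,\xbold_{d2}^N,\psi;\ybold_2^N|\xbold_{p2}^N,\ubold_2^N)$, bounds $I(\psi;\ybold_2^N|\xbold_{p2}^N,\ubold_2^N)\leq H(\psi)\leq 1$, bounds the atypical branch by $P(\psi=1)H(\ybold_2^N)\leq N\epsilon_3\log|\mathcal{Y}_2|$, and on the typical branch uses the AEP bounds on the log-probabilities to recover $N[I(\xbold_{p1},\xbold_{d2};\ybold_2|\xbold_{p2},\ubold_2)+3\epsilon]$. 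The trade-off is this: your argument, when the ensemble-averaged inputs are exactly i.i.d. products---which they are in this scheme, since $\ubold_1$ and $\ubold_2$ are constructed to be independent and the Marton binning is therefore degenerate---yields the bound with essentially no slack; but it hinges on exactly the point you flag yourself, namely that the codeword-selection process does not distort the per-symbol law, and your plan to ``absorb the discrepancy into $\epsilon_3$ via typicality'' is the one place where you would, in effect, be re-doing the paper's argument. The paper's typicality decomposition never needs the per-symbol distributions to be exact; it only needs joint AEP to hold for the induced sequence distribution, so it survives selection effects (nondegenerate binning, encoder search) unchanged, at the price of the $\epsilon_3$ slack and of implicitly treating $\ybold_2$ as if it had a finite alphabet. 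Neither issue is a gap in your proposal; it is a sound and arguably cleaner alternative for this particular scheme.
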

\begin{proof}
Let $T_{\epsilon}^{(N)}(\probdislmone)$ denote the set of typical sequences $(\typseqlmone)$ with respect to $P(x_{p1}, x_{p2}, x_{d2}, u_2,y_2)$. Define the following indicator random variable.
\begin{align}
\mathbb{\psi}(\typseqlmone) = \left\{\begin{array}{l l}
1 \qquad \qquad \quad\mbox{$(\typseqlmone) \notin T_{\epsilon}^N(\probdislmone)$} \\ 	
0 \quad \qquad \qquad\: \mbox{otherwise}
\end{array}\right.\label{eq:lmoneproof1}
\end{align}
Now, $I(\xbold_{p1}^N,\xbold_{d2}^N;\ybold_2^N|\xbold_{p2}^N,\ubold_2^N)$ is bounded as follows.
\begin{align}
I(\xbold_{p1}^N,\xbold_{d2}^N;\ybold_2^N|\xbold_{p2}^N,\ubold_2^N) &  \leq I(\xbold_{p1}^N, \xbold_{d2}^N, \mathbb{\psi};\ybold_2^N|\xbold_{p2}^N,\ubold_2^N), \nonumber \\
%New eqn
& = I(\mathbb{\psi};\ybold_2^N|\xbold_{p2}^N,\ubold_2^N) + I(\xbold_{p1}^N,\xbold_{d2}^N;\ybold_2^N|\xbold_{p2}^N,\ubold_2^N,\mathbb{\psi}). \label{eq:lmoneproof2}
\end{align}
Consider the first term in (\ref{eq:lmoneproof2}).
\begin{align}
& I(\mathbb{\psi};\ybold_2^N|\xbold_{p2}^N,\ubold_2^N) \leq H(\mathbb{\psi}) \leq 1. \label{eq:lmoneproof3}
\end{align}
Consider the second term in (\ref{eq:lmoneproof2}).
\begin{align}
I(\xbold_{p1}^N,\xbold_{d2}^N;\ybold_2^N|\xbold_{p2}^N,\ubold_2^N,\mathbb{\psi}) = \displaystyle\sum_{j=0}^1 P(\mathbf{\psi}=j)I(\xbold_{p1}^N,\xbold_{d2}^N;\ybold_2^N|\xbold_{p2}^N,\ubold_2^N,\mathbb{\psi}=j). \label{eq:lmoneproof3a}
\end{align}
When $j=1$, then $(\typseqlmone) \notin T_{\epsilon}^{(N)}$ and the following bound is obtained.
\begin{align}
P(\mathbf{\psi}=1)I(\xbold_{p1}^N,\xbold_{d2}^N;\ybold_2^N|\xbold_{p2}^N,\ubold_2^N,\mathbb{\psi}=1) & \leq P\lcb (\typseqlmone) \notin T_{\epsilon}^{(N)}\rcb H(\ybold_2^N), \nonumber \\
%new eqn
& \leq N\epsilon_3\log|\mathcal{Y}_2|. \label{eq:lmoneproof4}
\end{align}
When $j=0$, then $(\typseqlmone) \in T_{\epsilon}^{(N)}$ and the following bound is obtained.
\begin{align}
& P(\mathbf{\psi}=0)I(\xbold_{p1}^N,\xbold_{d2}^N;\ybold_2^N|\xbold_{p2}^N,\ubold_2^N,\mathbb{\psi}=0) \nonumber \\
& \leq I(\xbold_{p1}^N,\xbold_{d2}^N;\ybold_2^N|\xbold_{p2}^N,\ubold_2^N,\mathbb{\psi}=0), \nonumber \\
%new eqn
& \leq \displaystyle\sum_{(\typseqlmone) \in T_{\epsilon}^{(N)}} P(\typseqlmone)\lsqb \log P(\xbold_{p1}^N,\xbold_{d2}^N,\ybold_2^N |\xbold_{p2}^N, \ubold_2^N)  \right. \nonumber \\
%Cont. of the same eqn
& \qquad \qquad \qquad \qquad \qquad \qquad \left.- \log P(\ybold_2^N|\xbold_{p2}^N,\ubold_2^N) - \log P(\xbold_{p1}^N,\xbold_{d2}^N|\xbold_{p2}^N, \ubold_2^N)\rsqb, \nonumber \\
%New eqn
& \leq N\lsqb H(\ybold_2|\xbold_{p2},\ubold_2) + H(\xbold_{p1},\xbold_{d2}|\xbold_{p2},\ubold_2) - H(\xbold_{p1},\xbold_{d2},\ybold_2|\xbold_{p2},\ubold_2) + 3\epsilon_3\rsqb, \nonumber \\
%New eqn
& = N\lsqb I(\xbold_{p1};\ybold_2|\xbold_{p2},\ubold_2) + 3\epsilon_2\rsqb. \label{eq:lmoneproof5}
\end{align}
From (\ref{eq:lmoneproof3})-(\ref{eq:lmoneproof5}), (\ref{eq:lmoneproof2}) is bounded as follows.
\begin{align}
I(\xbold_{p1}^N;\ybold_2^N|\xbold_{p2}^N,\ubold_2^N) & \leq N I(\xbold_{p1};\ybold_2|\xbold_{p2},\ubold_2) + N\epsilon_3', \label{eq:lmoneproof6}
\end{align}
where $\epsilon_3' = \epsilon_3'\log|\mathcal{Y}_2| + 3\epsilon_2 + \frac{1}{N}$. 
\end{proof}
%Proof of corollary begins here
\subsection{Proof of Corollary \ref{cor:cor_ach_weakint}}\label{sec:corappendweakintf}
In the first and second time slots, transmitters $1$ and $2$ send the following encoded messages:
\begin{align}
& \xbold_1(1) = \underline{\xbold}_{cp}[1](1) + \xbold_{p1}(1), 
\text{ and }  \xbold_2(1) = \underline{\xbold}_{cp}[2](1) + \xbold_{p2}(1) + \xbold_{d2}(1), \nonumber \\
%Eqn
& \xbold_1(2) = \underline{\xbold}_{cp}[1](2) + \xbold_{p1}(2)+ \xbold_{d1}(2), 
\text{ and } \xbold_2(2) = \underline{\xbold}_{cp}[2](2) + \xbold_{p2}(2), \label{eq:pfweakcor1}
\end{align}
where  $\underline{\xbold}_{cp}$ is as defined in (\ref{eq:pfhighint54}) and $\underline{\xbold}_{cp}[i](j)$ corresponds to the $i^{\text{th}}$ element of the vector at the $j^{\text{th}}$ time slot. In the following, the achievable secrecy rate and power allocation for different messages are discussed in the case of the first time slot. Hence, for simplicity, the time index is omitted. The mutual information terms given in Theorem \ref{th:theorem_ach_weakint} are evaluated as follows. From Theorem \ref{th:theorem_ach_weakint}, $R_{p1}'$ and $R_{p2}'$ are set as $ 0.5\log\lb 1 + \frac{h_c^2 P_{p1}}{1 + h_d^2 P_{d2}}\rb$ and $0.5\log(1 + \frac{h_c^2 P_{p2}}{1 + h_c^2 P_{d2}})$, respectively. The first equation in (\ref{eq:weakint5}) becomes
\begin{align}
R_1 & \leq 0.5\log\lb 1 + \frac{\sigma_u^2 + h_d^2 P_{p1}}{1 + h_{c}^2 P_{d2} + h_c^2 P_{p2}}\rb - R_{p1}'.  \label{eq:pfweakcor2}
\end{align}
The second equation in (\ref{eq:weakint5}) becomes
\begin{align}
R_1 & \leq  0.5\log\lb 1 + \frac{h_d^2P_{p1}}{1 + h_c^2P_{d2}+h_c^2P_{p2}}\rb   + \min\lcb C_G, 0.5\log\lb 1 + \frac{\sigma_u^2}{1 + h_c^2P_{d2} + h_c^2 P_{p2}}\rb \rcb-R_{p1}'.  \label{eq:pfweakcor3}
\end{align}
The achievable rate for user $2$ becomes
\begin{align}
R_2 & \leq 0.5\log\lb 1 + \frac{\sigma_u^2 + h_d^2 P_{p2}}{1 + h_{d}^2 P_{d2} + h_c^2 P_{p1}}\rb - R_{p2}', \label{eq:pfweakcor4a} \\
R_2 & \leq 0.5\log\lb 1 + \frac{h_d^2P_{p2}}{1 + h_d^2P_{d2}+h_c^2P_{p1}}\rb  + \min\lcb C_G, 0.5\log\lb 1 + \frac{\sigma_u^2}{1 + h_d^2P_{d2} + h_c^2 P_{p1}}\rb \rcb-R_{p2}'.\label{eq:pfweakcor4}
\end{align}
The encoded message at transmitters $1$ and $2$ are
\begin{align}
& \xbold_1 = h_d\mathbf{w}_{1z} - h_c\mathbf{w}_{2z} + \mathbf{x}_{p1}, \text{ and } \nonumber \\
&  \xbold_2 = h_d\mathbf{w}_{2z} - h_c\mathbf{w}_{1z} + \mathbf{x}_{p2} + \mathbf{x}_{d2}.  \label{eq:pfweakcor5}
\end{align}
To simplify the power allocation, the variance of $\mathbf{w}_{1z}$ and $\mathbf{w}_{2z}$ are chosen to be the same, i.e, $\sigma_{1z}^2=\sigma_{2z}^2 = \sigma_z^2$. In order to satisfy the power constraint at the transmitters, the following conditions need to be satisfied.
\begin{align}
& (h_d^2 + h_c^2)\sigma_z^2 + P_{p1} \!\leq \! P_1 \text{ and } (h_d^2 + h_c^2)\sigma_z^2 + P_{p2} + P_{d2} \!\leq\! P_2, \label{eq:pfweakcor6}
\end{align}
where $P_i = \beta_i P \: (i=1,2)$, $0 \leq \beta_i \leq 1$ and $P$ is the maximum power available at either transmitter. The power for the non-cooperative private message, cooperative private message and dummy message are chosen as follows:
\begin{align}
& \sigma_z^2 = \frac{\theta_1}{\theta_1 + \theta_2}\frac{P_1}{h_d^2 + h_c^2}, P_{p1} = \frac{\theta_2}{\theta_1 + \theta_2}P_1, \nonumber \\
& P_{p2} = \frac{\eta_1}{\eta_1 + \eta_2}P', P_{d2} = \frac{\eta_2}{\eta_1 + \eta_2}P', \text{ and } \nonumber \\
& P' = (P_2 - (h_d^2 + h_c^2)\sigma_z^2)^{+}. \label{eq:pfweakcor7}
\end{align}
where $(\theta_i,\eta_i) \in [0,1]$. The parameters $\theta_i$ and $\eta_i$ act as power splitting parameters for transmitters $1$ and $2$, respectively. The parameter $\beta_i$ acts as a power control parameter. Hence, $\theta_i$, $\eta_i$ and $\beta_i$ are chosen such that the rates in (\ref{eq:pfweakcor2})-(\ref{eq:pfweakcor4}) are maximized, and the minimum of (\ref{eq:pfweakcor2}) and (\ref{eq:pfweakcor3}) gives the achievable secrecy rate for transmitter~$1$ i.e., $R_1^*(1)$; and the minimum of (\ref{eq:pfweakcor4a}) and (\ref{eq:pfweakcor4}) give the achievable secrecy rate for the transmitter~$2$ i.e., $R_2^*(1)$. This completes the proof.
\subsection{Proof of Theorem \ref{th:theorem_ach_highint}}\label{sec:appendhighintf}
In contrast to the achievable scheme for the weak/moderate interference regime, the dummy message sent by one of the users $i$ is required to be decodable at the receiver~$j$ $(j \neq i)$. Intuitively, since the cross links are stronger than the direct links, stochastic encoding alone is not sufficient to ensure secrecy of the non-cooperative private message. Hence, the dummy message sent by transmitter~$i$ acts as a self-jamming signal, preventing receiver~$i$ from decoding the message from the other transmitter~$j \neq i$. At the same time, ensuring that the dummy message is decodable at receiver~$j$ enables receiver~$j$ to cancel the interference caused by the dummy message, allowing it to decode its own message. Thus, although the cross-links are strong, receiver~$i$ is unable to decode the message from transmitter~$j$ because of the jamming signal; and this helps user~$j$ achieve a better rate. In the next time slot, user~$i$ can achieve a better rate by exchanging the roles of users~$i$ and $j$. The
proof involves analyzing the error probability at the encoder and the decoder along with equivocation computation. The conditions for the encoding error to go to zero and the choice of $\ubold_1$ and $\ubold_2$ remain the same as in the proof of Theorem~\ref{th:theorem_ach_weakint}. The details of the probability of error analysis are as follows. 
The conditions for encoding error to go to zero and the choice of $\ubold_1$ and $\ubold_2$ remains same as in the proof of Theorem \ref{th:theorem_ach_weakint}. Hence, in the following, error at the decoder is analyzed.\\

\textbf{Decoding error:} Define the following event
\begin{align}
E_{ijkl} = \lcb \lb \ybold_1^N, \xbold_{p1}^N(i,j), \ubold_1^N(k),  \xbold_{d2}^N(l)\rb \in T_{\epsilon}^{N}\rcb. \label{eq:pfhighint3}
\end{align}
Without loss of generality, assume that transmitter $1$ and $2$ sends $\xbold_1^N(1,1,1,1)$ and $\xbold_2^N(1,1,1,1)$ with $(\tildewcpone,\tildewcptwo) =(1,1)$, respectively. An error occurs if the transmitted and received codewords are not jointly typical or a wrong codeword is jointly typical with the received codewords. Then by the union of events bounds
\begin{align}
\lambda_{e}^{(n)} & = P\lb E_{1111}^c \bigcup \cup_{i \neq 1, j \neq 1, k \neq 1, l \neq 1}E_{ijkl}\rb  \leq P(E_{1111}^c) + P( \cup_{i \neq 1, j \neq 1, k \neq 1, l \neq 1}E_{ijkl}).  \label{eq:pfhighint4}
\end{align}
From the joint AEP \cite{cover-infotheory-2012}, $P(E_{1111}^c) \rightarrow 0$ as $N \rightarrow \infty$. 

When $i\neq 1$, $j\neq 1$, and $(k,l) = (1,1)$, then 
\begin{align}
\lambda_{ij11} & = \displaystyle\sum_{i\neq 1, j\neq 1} P(E_{ij11}), \nonumber \\
%New equation
& \leq 2^{N\lsqb R_{p1}+R_{p1}'\rsqb}\sumAEP P(\xbold_{p1}^N)P(\ubold_1^N)P(\xbold_{d2}^N)P(\ybold_1^N|\ubold_1^N, \xbold_{d2}^N), \nonumber \\
%New equation
& \leq 2^{N\lsqb R_{p1}+R_{p1}' - I(\xbold_{p1};\ybold_1|\ubold_1,\xbold_{d2}) + 5\epsilon\rsqb}. \label{eq:pfhighint8}
\end{align}
Hence, $\lambda_{ij11} \rightarrow 0$ as $N \rightarrow \infty$, if
\begin{align}
R_{p1} + R_{p1}' \leq  I(\xbold_{p1};\ybold_1|\ubold_1,\xbold_{d2}). \label{eq:pfhighint9}
\end{align}
Hence, if the condition in (\ref{eq:pfhighint9}) is satisfied, then the probability of error $\lambda_{i111}$ and $\lambda_{1j11}$, also goes to zero. When $k\neq 1$ and $(i,j,l) = (1,1,1)$, then
\begin{align}
\lambda_{11k1} & = \displaystyle\sum_{k \neq 1 } P(E_{11k1}), \nonumber \\
%New eqn
& \leq 2^{N\tildercpone}\sumAEP P(\xbold_{p1}^N)P(\ubold_1^N)P(\xbold_{d2}^N)P(\ybold_1^N|\xbold_{p1}^N, \xbold_{d2}^N), \nonumber \\
%New eqn
& \leq 2^{N\lsqb \tildercpone - I(\ubold_1;\ybold_1|\xbold_{p1},\xbold_{d2}) + 5\epsilon\rsqb}. \label{eq:pfhighint10}
\end{align}
Hence, $\lambda_{11k1} \rightarrow 0$ as $N \rightarrow \infty$, if 
\begin{align}
\tildercpone \leq I(\ubold_1;\ybold_1|\xbold_{p1},\xbold_{d2}). \label{eq:pfhighint11}
\end{align}
When $l\neq 1$ and $(i,j,k)=(1,1,1)$, then
\begin{align}
\lambda_{111l} & = \displaystyle\sum_{l \neq 1 }P(E_{111l}), \nonumber \\
%New eqn
& \leq 2^{NR_{d2}}\sumAEP P(\xbold_{p1}^N)P(\ubold_1^N)P(\xbold_{d2}^N)P(\ybold_1^N|\xbold_{p1}^N, \ubold_{1}^N), \nonumber \\
%New eqn
& \leq 2^{N\lsqb R_{d2} - I(\xbold_{d2};\ybold_1|\xbold_{p1},\ubold_1) + 5\epsilon\rsqb}. \label{eq:pfhighint12}
\end{align}
Hence, $\lambda_{111l} \rightarrow 0$ as $N \rightarrow \infty$, if
\begin{align}
R_{d2} \leq I(\xbold_{d2};\ybold_1|\xbold_{p1},\ubold_{1}). \label{eq:pfhighint13}
\end{align}
When $i \neq 1,j \neq 1,k \neq 1$, and $l=1$, then
\begin{align}
\lambda_{ijk1} & = \displaystyle\sum_{i \neq 1,j \neq 1,k \neq 1}P(E_{ijk1}), \nonumber \\
%New eqn
& \leq 2^{N\lsqb \tildercpone + R_{p1} + R_{p1}'\rsqb} \sumAEP P(\xbold_{p1}^N)P(\ubold_1^N)P(\xbold_{d2}^N)P(\ybold_1^N|\xbold_{d2}^N), \nonumber \\
% %New eqn
& \leq 2^{N\lsqb \tildercpone + R_{p1} + R_{p1}' - I(\ubold_1,\xbold_{p1};\ybold_1|\xbold_{d2}) + 5\epsilon\rsqb}. \label{eq:pfhighint18}
\end{align}
Hence, $\lambda_{ijk1} \rightarrow 0$ as $N \rightarrow \infty$, if
\begin{align}
\tildercpone + R_{p1} + R_{p1}' \leq I(\ubold_1,\xbold_{p1};\ybold_1|\xbold_{d2}). \label{eq:pfhighint19}
\end{align}
Hence, if the condition in (\ref{eq:pfhighint19}) is satisfied, then the probability of error $\lambda_{i1k1}$ and $\lambda_{1jk1}$, also goes to zero. When $k\neq 1,l \neq 1$, and $(i,j) = (1,1)$, then
\begin{align}
\lambda_{11kl} & = \displaystyle\sum_{k\neq 1,l \neq 1}P(E_{11kl}), \nonumber \\
%New eqn
& \leq 2^{N\lsqb \tildercpone + R_{d2}\rsqb} \sumAEP P(\xbold_{p1}^N)P(\ubold_1^N)P(\xbold_{d2}^N)P(\ybold_1^N|\xbold_{p1}^N), \nonumber \\
% %New eqn
& \leq 2^{N\lsqb \tildercpone + R_{d2} - I(\ubold_1,\xbold_{d2};\ybold_1|\xbold_{p1}) + 5\epsilon\rsqb}. \label{eq:pfhighint20}
\end{align}
Hence, $\lambda_{11kl} \rightarrow 0$ as $N \rightarrow \infty$, if
\begin{align}
\tildercpone + R_{d2} \leq I(\ubold_1,\xbold_{d2};\ybold_1|\xbold_{p1}). \label{eq:pfhighint21}
\end{align}
When $i\neq 1,j \neq 1,l \neq 1$, and $k=1$, then
\begin{align}
\lambda_{ij1l} & = \displaystyle\sum_{i\neq 1,j \neq 1,l \neq 1}P(E_{ij1l}), \nonumber \\
%New eqn
& \leq 2^{N\lsqb  R_{p1} + R_{p1}' + R_{d2}\rsqb} \sumAEP P(\xbold_{p1}^N)P(\ubold_1^N)P(\xbold_{d2}^N)P(\ybold_1^N|\ubold_{1}^N), \nonumber \\
% %New eqn
& \leq 2^{N\lsqb  R_{p1} + R_{p1}' + R_{d2} - I(\xbold_{p1},\xbold_{d2};\ybold_1|\ubold_{1}) + 5\epsilon\rsqb}. \label{eq:pfhighint26}
\end{align}
Hence, $\lambda_{ij1l} \rightarrow 0$ as $N \rightarrow \infty$, if
\begin{align}
R_{p1} + R_{p1}' + R_{d2} \leq I(\xbold_{p1},\xbold_{d2};\ybold_1|\ubold_{1}). \label{eq:pfhighint27}
\end{align}
Hence, if the condition in (\ref{eq:pfhighint27}) is satisfied, then the probability of error $\lambda_{i11l}$ and $\lambda_{1j1l}$, also goes to zero. 

When $i \neq 1,j \neq 1,k \neq 1, \text{ and } l \neq 1$, then
\begin{align}
\lambda_{1jkl} & = \displaystyle\sum_{i \neq 1,j \neq 1,k \neq 1, l \neq 1}P(E_{ijkl}), \nonumber \\
%New eqn
& \leq 2^{N\lsqb  R_{p1} + R_{p1}' + \tildercpone + R_{d2}\rsqb} \sumAEP P(\xbold_{p1}^N)P(\ubold_1^N)P(\xbold_{d2}^N)P(\ybold_1^N), \nonumber \\
% %New eqn
& \leq 2^{N\lsqb  R_{p1} + R_{p1}' + \tildercpone + R_{d2} - I(\ubold_1,\xbold_{p1},\xbold_{d2};\ybold_1) + 5\epsilon\rsqb}. \label{eq:pfhighint32}
\end{align}
Hence, $\lambda_{ijkl} \rightarrow 0$ as $N \rightarrow \infty$, if
\begin{align}
R_{p1} + R_{p1}' + \tildercpone + R_{d2} \leq I(\ubold_1,\xbold_{p1},\xbold_{d2};\ybold_1). \label{eq:pfhighint33}
\end{align}
Hence, if the condition in (\ref{eq:pfhighint33}) is satisfied, then the probability of error in $\lambda_{i1kl}$ and $\lambda_{1jkl}$, also goes to zero. 

In a similar way, it can be shown that the probability of decoding error at receiver $2$ goes to zero if following condition is satisfied.
\begin{align}
\tildercptwo \leq I(\ubold_2;\ybold_2). \label{eq:pfhighint37}
\end{align}
Hence, the encoding and decoding error go to $0$ as $N \rightarrow \infty$, if (\ref{eq:pfhighint9}), (\ref{eq:pfhighint11}), (\ref{eq:pfhighint13}), (\ref{eq:pfhighint19}), (\ref{eq:pfhighint21}), (\ref{eq:pfhighint27}) and  (\ref{eq:pfhighint33}) and  (\ref{eq:pfhighint37}). Then, by applying Fourier-Motzkin procedure \cite{gamal-netinfotheory-2011} to these equations and the conditions for encoding error, the achievable rate in Theorem \ref{th:theorem_ach_highint} can be obtained.
\subsubsection{Equivocation computation}\label{sec:eqvochighint} The equivocation at receiver~$2$ is bounded as follows. As the non-intended cooperative private message is canceled completely at receiver~$2$, it suffices to show the following, as mentioned in Appendix~\ref{sec:eqvocweakint}.
\begin{align}
H(W_{p1}|\ybold_2^N) \geq N\lsqb R_{p1} - \epsilon_s\rsqb. \label{eq:eqvocrxtwo4}
\end{align}
Consider the following:
\begin{align}
 H(W_{p1}|\ybold_2^N) & \geq H(W_{p1}|\ybold_2^N,\ubold_2^N), \nonumber \\
%New eqn
& = H(W_{p1},\ybold_2^N|\ubold_2^N) - H(\ybold_2^N|\ubold_2^N), \nonumber \\
% \end{align}
% \begin{align}
%New eqn
& \stackrel{(a)}{=} H(W_{p1},\ybold_2^N,\xbold_{p1}^N,\xbold_{d2}^N|\ubold_2^N) - H(\xbold_{p1}^N,\xbold_{d2}^N|W_{p1},\ybold_{2}^N,\ubold_{2}^N)  - H(\ybold_2^N|\ubold_2^N), \nonumber \\
% \end{align}
% \begin{align}
%New eqn
& = H(\xbold_{p1}^N,\xbold_{d2}^N|\ubold_2^N) + H(W_{p1},\ybold_2^N|\xbold_{p1}^N,\xbold_{d2}^N,\ubold_2^N)- H(\ybold_2^N|\ubold_2^N)  \nonumber \\
&\qquad  - H(\xbold_{p1}^N,\xbold_{d2}^N|W_{p1},\ybold_{2}^N,\ubold_{2}^N), \nonumber 
\\
%New eqn
& \geq H(\xbold_{p1}^N,\xbold_{d2}^N|\ubold_2^N) + H(\ybold_2^N|\xbold_{p1}^N,\xbold_{d2}^N,\ubold_2^N) - H(\ybold_2^N|\ubold_2^N) \nonumber \\
& \qquad - H(\xbold_{p1}^N,\xbold_{d2}^N|W_{p1},\ybold_{2}^N,\ubold_{2}^N), \nonumber \\
%New eqn
& = R_{p1} + R_{p1}' + R_{d2} - I(\xbold_{p1}^N,\xbold_{d2}^N;\ybold_2^N|\ubold_2^N)  - H(\xbold_{p1}^N,\xbold_{d2}^N|W_{p1},\ybold_{2}^N,\ubold_{2}^N), \label{eq:eqvocrxtwo5}
\end{align}
where (a) is obtained using the relation: $H(W_{p1},\ybold_2^N,\xbold_{p1}^N,\xbold_{d2}^N|\ubold_2^N) = H(W_{p1},\ybold_2^N|\ubold_2^N) + $ \\$H(\xbold_{p1}^N,\xbold_{d2}^N|W_{p1},\ybold_{2}^N,\ubold_{2}^N)$.
The second term in (\ref{eq:eqvocrxtwo5}) is upper bounded as follows.
\begin{align}
I(\xbold_{p1}^N,\xbold_{d2}^N;\ybold_2^N|\ubold_2^N) \leq N I(\xbold_{p1},\xbold_{d2};\ybold_2|\ubold_2) + N\epsilon'. \label{eq:eqvocrxtwo7}
\end{align}
The above bound can be obtained by using similar steps as used in the proof of Lemma~\ref{lm1} in Appendix~\ref{sec:usefullemma}.

To bound the last term in (\ref{eq:eqvocrxtwo5}),  consider the joint decoding of $W_{p1}'$ and $W_{d2}$, assuming that the receiver $2$ is given $W_{p1}$ and $\ubold_2^N$ as side information. By following similar steps as in Appendix~\ref{sec:eqvocweakint}, it is possible to show that the probability of error is arbitrarily small for large $N$, provided the following conditions are satisfied:
\begin{align}
& R_{P1}' \leq I(\xbold_{p1};\ybold_{2}|\xbold_{d2},\ubold_2), \nonumber \\
& R_{d2} \leq I(\xbold_{d2};\ybold_{2}|\xbold_{p1},\ubold_2), \nonumber \\
%New eqn
& R_{p1}' + R_{d2} \leq I(\xbold_{p1},\xbold_{d2};\ybold_{2}|\ubold_2).  \label{eq:eqvocrxtwo11}
\end{align}
When the conditions in (\ref{eq:eqvocrxtwo11}) are satisfied and for sufficiently large $N$, the following bound is obtained using Fano's inequality:
\begin{align}
%New eqn
H(\xbold_{p1}^N,\xbold_{d2}^N|W_{p1} = w_{p1}, \ybold_1^N, \ubold_2^N) & \leq N\delta_2.  \label{eq:eqvocrxtwo12}
\end{align}
Finally, the last term in (\ref{eq:eqvocrxtwo5}) is bounded as follows.
\begin{align}
 H(\xbold_{p1}^N,\xbold_{d2}^N|W_{p1},\ybold_{2}^N,\ubold_{2}^N)  &= \displaystyle\sum_{w_{p1}}P(w_{p1})H(\xbold_{p1}^N,\xbold_{d2}^N|W_{p1}=w_{p1}, \ybold_1^N, \ubold_2^N), \nonumber \\
& \leq N\delta_2.  \label{eq:eqvocrxtwo13}
\end{align}
Using (\ref{eq:eqvocrxtwo7}) and (\ref{eq:eqvocrxtwo13}), (\ref{eq:eqvocrxtwo5}) becomes
\begin{align}
& H(W_{p1}|\ybold_2^N) \nonumber \\
&  \geq N\lsqb R_{p1} + R_{p1}' + R_{d2}-  I(\xbold_{p1},\xbold_{d2};\ybold_2|\ubold_2) - (\delta_2 +\epsilon')\rsqb, \label{eq:eqvocrxtwo14}
\end{align}
By choosing $R_{p1}' = I(\xbold_{p1};\ybold_{2}|\ubold_2)-\epsilon_2'$ and $R_{d2} =  I(\xbold_{d2};\ybold_2|\xbold_{p1},\ubold_2)-\epsilon_2''$ secrecy of the non-cooperative private part is ensured. Thus,
\begin{align}
 H(W_{p1}|\ybold_2^N) \geq N \lsqb R_{p1} - \epsilon_2\rsqb.  \label{eq:eqvocrxtwo15}
\end{align}
This completes the proof.
\subsection{Proof of Corollary \ref{cor:cor_ach_highint}}\label{sec:corappendhighintf}
In the first and second time slots, transmitters $1$ and $2$ send the following encoded messages.
\begin{align}
& \xbold_1(1) = \underline{\xbold}_{cp}[1](1) + \xbold_{p1}(1), \text{ and } \xbold_2(1) = \underline{\xbold}_{cp}[2](1) + \xbold_{d2}(1), \nonumber \\
%Eqn
& \xbold_1(2) = \underline{\xbold}_{cp}[1](2) + \xbold_{d1}(2), \text{ and } \xbold_2(2) = \underline{\xbold}_{cp}[2](2) + \xbold_{p2}(2). \label{eq:pfcor1}
\end{align}
In the following, the achievable secrecy rate and power allocation for different messages are discussed in case of first time slot. In the following, the time index is omitted. The mutual information given in Theorem \ref{th:theorem_ach_highint} are evaluated as follows. From Theorem \ref{th:theorem_ach_highint}, $R_{p1}'$ and $R_{d2}$ are set as: $ 0.5\log\lb 1 + \frac{h_c^2 P_{p1}}{1 + h_d^2 P_{d2}}\rb$ and $0.5\log(1 + h_d^2 P_{d2})$, respectively. Consider the first equation in (\ref{eq:highint5}).
\begin{align}
R_1 & =   \min \lsqb 0.5\log(1 + \sigma_{u1}^2  + h_d^2 P_{p1}), 0.5\log(1 + h_d^2 P_{p1}) + \min\lcb 0.5\log(1 + \sigma_{u1}^2), C\rcb\rsqb - R_{p1}',  \label{eq:pfcor2}
\end{align}
The second equation in (\ref{eq:highint5}) reduces to following.
\begin{align}
R_1 & \leq \lsqb 0.5\log(1 + \sigma_{u1}^2 + h_d^2 P_{p1} + h_c^2 P_{d2}), 0.5\log(1 + \sigma_{u1}^2 + h_c^2 P_{d2}) + \min\lcb 0.5\log(1 + \sigma_{u1}^2),C\rcb, \right. \nonumber \\
%same eqn
&\qquad \qquad \qquad  \left. 0.5\log(1 + h_d^2 P_{p1}) + 0.5\log(1 + \sigma_{u1}^2 + h_c^2 P_{d2})\rsqb - (R_{p1}' + R_{d2}). \label{eq:pfcor3}
\end{align}
The third equation in (\ref{eq:highint5}) reduces to following.
\begin{align}
R_1 & \leq 0.5\log(1 + h_d^2 P_{p1} + h_c^2 P_{d2}) + 0.5\log(1 + \sigma_{u1}^2 + h_c^2 P_{d2}) - (R_{p1}' + 2R_{d2}). \label{eq:pfcor4}
\end{align}
Consider the fourth equation in (\ref{eq:highint5}).
\begin{align}
R_2 & = \min\lcb 0.5\log\lb 1 + \frac{\sigma_{u2}^2}{1 + h_d^2 P_{d2} + h_c^2 P_{p1}}\rb,C\rcb. \label{eq:pfcor5}
\end{align}
As $R_{d2} = 0.5\log(1 + h_d^2 P_{d2}) < 0.5\log(1 + h_c^2 P_{d2})$, $R_{d2}$ satisfies the last inequality in (\ref{eq:highint5}). Now, consider the power allocation for the private cooperative message, non-cooperative private message and dummy message as shown below. The encoded messages at transmitters $1$ and $2$ are:
\begin{align}
& \xbold_1 = h_d\mathbf{w}_{1z} - h_c\mathbf{w}_{2z} + \mathbf{x}_{p1}, \text{ and } \xbold_2 = h_d\mathbf{w}_{2z} - h_c\mathbf{w}_{1z} + \mathbf{x}_{d2}.  \label{eq:pfcor6}
\end{align}
To simplify the power allocation, the variance of $\mathbf{w}_{1z}$ and $\mathbf{w}_{2z}$ are chosen to be same. In order to satisfy the power constraint, the following conditions need to be satisfied.
\begin{align}
& (h_d^2 + h_c^2)\sigma_z^2 + P_{p1} \leq P_1, \text{ and } (h_d^2 + h_c^2)\sigma_z^2 + P_{d2} \leq P_2, \label{eq:pfcor7}
\end{align}
where $P_i = \beta_i P \: (i=1,2)$ and $0 \leq \beta_i \leq 1$. The power for the non-cooperative private message, cooperative private message and dummy message are chosen as follows:
\begin{align}
& \sigma_z^2 = \frac{\theta_1}{\theta_1 + \theta_2}\frac{P_1}{h_d^2 + h_c^2}, P_{p1} = \frac{\theta_2}{\theta_1 + \theta_2}P_1, \text{ and } P_{d2} = (P_2 - (h_d^2 + h_c^2)\sigma_z^2)^+. \label{eq:pfcor8}
\end{align}
where $\theta_i \in [0,1]$. The parameter $\theta_i$ and $\beta_i$ are the power splitting and power control parameter, respectively. Hence, $\theta_i$ and $\beta_i$ are chosen such that the rate in (\ref{eq:pfcor2})-(\ref{eq:pfcor5}) are maximized and the minimum of (\ref{eq:pfcor2})-(\ref{eq:pfcor4}) gives the achievable secrecy rate for the transmitter $1$ i.e., $R_1^*(1)$ and (\ref{eq:pfcor5}) gives the achievable secrecy rate for the transmitter $2$. i.e., $R_2^*(1)$. In a similar way, the achievable secrecy rate $R_1^*(2)$ and $R_1^*(2)$ can be determined in the second time slot. This completes the proof.
%%%%%%%
%%%%%%%%%%%%%%%%%%%%Bibliography starts here
\bibliographystyle{IEEEtran}
\bibliography{IEEEabrv,secrecy_work_journal_ver}
\end{document}